\newtheoremstyle{pfof}
{\topsep}
{\topsep}
{\normalfont}
{0pt}
{\bfseries}
{}
{5pt plus 1pt minus 1pt}
{}
\theoremstyle{pfof}
\newtheorem*{proofof}{Proof of}
\theoremstyle{definition}
\newtheorem*{defn}{Definition}
\newtheorem{example}{Example}
\theoremstyle{plain}
\newtheorem{thm}{Theorem}
\newtheorem{cor}{Corollary}
\newtheorem{lem}{Lemma}
\newtheorem{prop}{Proposition}
\theoremstyle{remark}
\DeclareMathOperator{\Gr}{Gr}
\DeclareMathOperator{\supp}{supp}
\DeclareMathOperator{\conv}{conv}
\DeclareMathOperator{\ri}{ri}
\DeclareMathOperator{\ext}{ext}
\DeclareMathOperator{\rbd}{rbd}
\DeclareMathOperator{\aff}{aff}
\newcommand{\hitype}{T}
\newcommand{\vf}{\hat{v}}
\newcommand{\ccvf}{\hat{V}}
\newcommand{\tvf}{v}
\newcommand{\tccvf}{V}
\newcommand{\util}{u}
\newcommand{\exper}{\tau}
\newcommand{\exspace}{\Delta(\Delta(\Omega))}
\newcommand{\texper}{\nu}
\newcommand{\tspace}{\Theta}
\newcommand{\texspace}{\Delta(\Delta(\tspace))}
\newcommand{\belief}{\mu}
\newcommand{\prior}{\mu_0}
\newcommand{\tbelief}{\pi}
\newcommand{\tprior}{\tbelief_0}
\newcommand{\constraint}{K_\Omega}
\newcommand{\tconstraint}{K}
\newcommand{\lost}{0}
\newcommand{\hist}{N}
\newcommand{\sig}{\Phi_\belief}
\newcommand{\gmech}{\sigma_\epsilon^g}
\newcommand{\gpost}{\tau_\epsilon^g}
\newcommand{\bpost}{\tau_b}
\newcommand{\cdf}{R}
\newcommand{\pmf}{r}
\newcommand{\quant}{Q}
\newcommand{\proj}{P}
\newcolumntype{d}[1]{D{.}{.}{#1}}
	\newcommand{\mytitle}{Information Design for Differential Privacy}
	\newcommand{\myversion}{\today}
	\date{\myversion}
     \title{\mytitle
    }
    \author{Ian M. Schmutte and Nathan Yoder\thanks{Schmutte: University of Georgia, Terry College of Business, John Munro Godfrey Sr. Department of Economics; E-mail: \texttt{schmutte@uga.edu}. Yoder: University of Georgia, Terry College of Business, John Munro Godfrey Sr. Department of Economics; E-mail: \texttt{nathan.yoder@uga.edu}. 
    An extended abstract of this paper appeared in \textit{EC '22: Proceedings of the 23rd ACM Conference on Economics and Computation}.
    The authors wish to thank Gary Benedetto, Mark Fleischer, Kevin He, R. Vijay Krishna, Meg Meyer, John Quah, Marzena Rostek, Alex Smolin, Nikhil Vellodi, seminar participants at Bank of Canada, Concordia University, Emory, Florida State, Paris School of Economics, Penn State, and Yale, and conference attendees at NASMES 2022, EC'22, the 2022 ASU Theory Conference, SAET 2023, and NBER Data Privacy 2024 
    for their helpful comments, as well as Cole Wittbrodt for excellent research assistance. Figures created with Wolfram Mathematica. Both of the authors acknowledge summer support from the University of Georgia through a Terry-Sanford Research Award. Schmutte is grateful for financial support from the Bonbright Center for the Study of Regulation. 
    }}
\begin{document}

\begin{titlepage}
       \maketitle
       \begin{abstract}

        Firms and statistical agencies must protect the privacy of the individuals whose data they collect, analyze, and publish.
           Increasingly, these organizations do so by using publication mechanisms that satisfy \textit{differential privacy}.  We consider the problem of choosing such a mechanism so as to maximize the value of its output to end users. 
           We show that mechanisms which add noise to the statistic of interest --- like most of those used in practice --- are
           generally not optimal when the statistic is a sum or average of magnitude data (e.g., income). However, we also show that adding noise is \textit{always} optimal when the statistic is a count of data entries with a certain characteristic, and the underlying database is drawn from a symmetric distribution (e.g., if individuals' data are i.i.d.).
           When, in addition, data users have supermodular payoffs, we show that the simple \textit{geometric mechanism} is always optimal by using a novel comparative static that ranks information structures according to their usefulness in supermodular decision problems.
           
           \flushleft {\bf Keywords:} Bayesian persuasion, information acquisition, comparison of experiments 
\flushleft {\bf JEL Codes:} D83, D81, C81
           
       \end{abstract}
       
     \thispagestyle{empty}
       
\end{titlepage}

\clearpage

\section{Introduction}

To help people make better decisions, statistical agencies collect and then disseminate information about data.  
For instance, 
{information about}
the prevalence of disease can affect decisions that impact public health, while 
{information about}
the level of unemployment can affect the decisions of participants in the labor market.
But disseminating information about disease prevalence or unemployment also necessarily reveals information about whether any one person is sick or unemployed \citep{Dinur:2003:RIW:773153.773173}. 
In order to address public concerns about data privacy, and the emerging policy and legal responses to those concerns,
statistical agencies and other data providers must find ways to release information that is as informative as possible about important population characteristics, while also protecting the privacy of individuals.

This has prompted
technology firms like Google, LinkedIn, and Uber, and statistical agencies like the U.S.\ Census Bureau, 
to adopt mechanisms for publishing data that provide formal, mathematically provable guarantees of privacy.\footnote{
See \cite{Guevara2019} (Google), \cite{rogers2020linkedins} (LinkedIn), \cite{Near2018} (Uber).}
Frequently, these guarantees take the form of \textit{differential privacy} \citep{dwork2014algorithmic}.
This privacy criterion --- which we describe in detail --- requires a data provider to randomize its output so as to provide an explicit quantitative bound on 
the amount that a change in any individual entry in the data can change the probability of producing any given output.

Many different \textit{publication mechanisms} --- or (stochastic) maps from true data to published output --- have been proposed for publishing statistics in a way that satisfies differential privacy (e.g., \cite{GhoshRoughgardenSunararajan:Universally:SIAM:2012}; \cite{geng2015optimal}).
Because differential privacy limits the amount of information that can be revealed about individuals, these generally aim to disclose information about aggregate quantities --- i.e., population statistics --- such as the proportion of respondents in the data that are unemployed, or the average of those respondents' incomes.
Comparisons between publication mechanisms have focused on their \textit{accuracy} --- the expected value of (a function of) the distance between the published output and the true value of the statistic.

However, there is much less clarity about how to maximize the  \textit{value to data users} of published statistics while preserving privacy.
Our paper sheds light on this question by introducing an information design approach to differential privacy. 
In particular, we endow data users with payoffs that depend on their actions and the true value of a population statistic, and explicitly model the way that their decisions about the former depend on beliefs about the latter. We assume that data users are Bayesian, and update these beliefs after observing the output of a signal about the contents of the database from which the statistic is drawn --- i.e., a publication mechanism.
Then, we consider the problem of a data provider who chooses that signal under commitment --- i.e., before accessing the data --- so as to maximize the data users' welfare, subject to a differential privacy constraint.

This is essentially the problem
faced by many statistical agencies and technology firms in the real world. The U.S.\ Census Bureau has announced its intention to adopt formal privacy requirements, including differential privacy, as its primary approach to disclosure limitation.\footnote{\cite{abowd2020modernization} describes how formal privacy is to be applied in several flagship products, including the Decennial Census, American Community Survey and the Economic Census. For many products, the exact privacy requirements are currently under discussion, but Census has generally used differential privacy and closely-related privacy concepts as a starting point \citep{vilhuber2016proceedings}. 
Among those products for which privacy guarantees have been finalized, the Census Bureau's Post-Secondary Employment Outcomes data provides the same differential privacy guarantee we consider here, while the 2020 Decennial Redistricting File offers a guarantee of \textit{zero-concentrated differential privacy}.
}
Apple uses differential privacy to protect individuals' privacy in Apple's own analysis of aggregate user behavior \citep{Apple}. During the early COVID-19 pandemic, when Facebook published information on aggregate mobility patterns to aid public health researchers and guide policy, they protected individuals with a differential privacy guarantee \citep{FacebookResearch2020}.
Crucially, offering a valid differential privacy guarantee requires data providers like these to commit to a mechanism without reference to the underlying data: Unless it is formally accounted for as part of the mechanism, using the data to develop or calibrate the mechanism violates the assumptions required to prove it satisfies differential privacy in the first place.\footnote{%
For example, the U.S.\ Census Bureau designed the publication mechanism for the 2020 Redistricting Files using demonstration data based on prior census data; \cite{timeline} gives a timeline summarizing this process.}

The data provider in our model faces a special case of a more general problem: choosing a \textit{signal} under \textit{commitment} to influence the \textit{decision} of a Bayesian agent who observes it. This general problem is the focus of the extensive literature on information design (e.g., \cite{KamenicaBayesianPersuasion2011,taneva2019information,kamenica2019bayesian}). 
By introducing key tools from this literature to the study of differential privacy --- and developing new tools that apply to information design more broadly --- we provide new insights that have direct, practical applications. 
We describe these  --- our main results --- in the context of an applied example.

\begin{example}[School Planning]\label{Ex:School}
    A school district needs to choose how many slots $a$ to create in a universal pre-K program for the next three years. It would like to choose a number that is as close as possible to the actual number $\omega$ of children living in the district who are under 4, and hence will become eligible to enroll during that time: Empty slots will not be paid for by the state, and must instead come out of the district's budget, whereas if there are not enough slots, the district will be penalized by the state.
For simplicity --- and since it facilitates a more direct comparison with the existing literature on accuracy in differential privacy --- we assume that its loss function is simply squared error: $(\omega-a)^2$.
    
    Since the children that will enroll are not currently in school, the district must rely on statistics published by the U.S. Census Bureau to make its decision. But the Census Bureau has statutory obligations to preserve the privacy of the respondents (here, households) in its database.
    Suppose that it has chosen to publish information about $\omega$ in a way that is differentially private, with a ``privacy loss budget'' (the bound on the amount that a change in a single data record can change the log probability of producing an output) of $\epsilon=1$. The records $\theta_i$ in its database each give the number of children under 4 in a household $i$, which varies between 0 and 2. Thus, a change in a data record can change $\omega$ by at most $\Delta=2$.
    
    In keeping with actual practice, we assume that the Bureau chooses to satisfy differential privacy by \textit{adding noise} to the true value of $\omega$; i.e., publishing $s=\omega+\eta_\omega$, where $\eta_\omega$ is a random variable whose distribution may depend on $\omega$. 
    In particular, suppose that as with several of its other data products,\footnote{For instance, the U.S. Census Bureau uses the geometric mechanism to publish industry-specific counts of employed graduates of postsecondary education programs \citep{PSEOdocumentation}. It is also used in the private sector: in the early COVID-19 pandemic, Facebook used its continuous analogue, the Laplace mechanism, to publish counts of individuals that did not leave a small area over the course of the day \citep{FacebookResearch2020}.} it chooses to add noise that has a \textit{two-sided geometric} (or equivalently, discretized Laplace) distribution with parameter $\epsilon/\Delta=1/2$.\footnote{Formally, for each $\omega$, the distribution of the noise term is given by
    \[P(\eta=x)=\frac{1-e^{-\epsilon/\Delta}}{1+e^{-\epsilon/\Delta}}e^{-x\epsilon/\Delta}=\frac{1-e^{-1/2}}{1+e^{-1/2}}e^{-x/2}.\]}
    Its choice to do so is
    motivated by the results of \cite{GhoshRoughgardenSunararajan:Universally:SIAM:2012}, who show that if data users can postprocess the published output (i.e., remap some values of $s$ to others), and the population statistic $\omega$ is a count of records with some characteristic, 
    then among all ways to add noise, this one minimizes the expected squared error $(s-\omega)^2$ of the published output relative to the true statistic.\footnote{In fact, it minimizes the expected value of any decreasing function $\ell(\omega,|s-\omega|)$ of the error.}$^,$\footnote{In particular, if they are both differentially private for the same privacy loss budget $\epsilon$, the geometric mechanism outperforms the truncated \textit{Gaussian mechanism} --- in which $e$ follows a normal distribution, and the output $s$ is truncated to lie in the range of possible values of $\omega$ --- used to publish data from the 2020 Decennial Census.} \cite{GhoshRoughgardenSunararajan:Universally:SIAM:2012} call this the \textit{geometric} publication mechanism. 

We evaluate the performance of this mechanism --- in terms of the value of its output to the data user --- in a geography with $N=40$ households with i.i.d. data records. In particular, we 
form our prior about the number of children under 4 in a household using
the distribution observed in the 2019 5-Year American Community Survey within Athens-Clarke County, Georgia:
\begin{align*}
    P(\theta_i=0)&=0.89, & P(\theta_i=1)&=0.09, &  P(\theta_i=2)&=0.02.  
\end{align*}
    In this context, 
    when the school district chooses $a$ after seeing the output of the geometric mechanism, its expected squared error is 3.22.
    But despite the optimality result of \cite{GhoshRoughgardenSunararajan:Universally:SIAM:2012}, we can do better. In fact, Proposition \ref{P:DesignProblemT} shows how to design a mechanism that 
    lets the district achieve an expected loss
    of 2.48 --- about 23\% better. This represents an \textit{inward shift} of the privacy-accuracy frontier described in \cite{AbowdSchmutte:Privacy:AER}: In addition to improving the accuracy of published output while holding privacy loss constant, we can also use Proposition \ref{P:DesignProblemT} to decrease privacy loss $\epsilon$ while holding accuracy (in the form of expected squared error) constant.
    
    The reason this is possible is that, as we show in Theorem \ref{T:OblivMag}, \textit{in this setting, adding noise is generally not the optimal way to satisfy the differential privacy constraint.}
    Instead, we can do better by allowing the distribution of published output to depend on the entire database, rather than just the true population statistic $\omega$.
    This fact is somewhat counterintuitive, since the data users' payoff depends only on the database through $\omega$. Indeed, we are not aware of any discussion of this point in the literature on differential privacy.
As we show, taking an information design approach to the problem --- and thus thinking of differential privacy as a constraint on the \textit{distributions of posterior beliefs} about $\omega$ that can be induced by the data publisher --- allows us to be precise about 
the sense in which adding noise is not without loss.

But, as we show, there are other settings in which adding noise \textit{is} without loss. Moreover, we show that the geometric mechanism is the optimal way to do so in a very broad class of settings in which data users' payoffs need not bear any resemblance to expected squared error.\footnote{Or, more generally, the kind of loss functions considered in \cite{GhoshRoughgardenSunararajan:Universally:SIAM:2012} that depend on the (remapped) signal, or action, through its distance from the statistic.}

To illustrate this, suppose that instead of how many places to create in its program, the district must determine how many school buses to purchase for it. 
Instead of the number of children who are likely to enroll in the program, the optimal number of bus routes depends on the \emph{number of households} with such children.
That is, the population statistic $\omega$ is a \textit{count} of \textit{categorical} data entries, whereas in the number-of-places problem it was the \textit{sum} of \textit{magnitude} data.
As we show in Theorem \ref{T:OblivEquiv}, as long as it views households as anonymous --- i.e., no household is more likely to have children than any other, though their statuses may be correlated --- this makes it optimal to ensure differential privacy by adding noise.

But how should it add noise? Suppose that
the district 
seeks to minimize the sum of the average number of households per bus route and the district's expenditure on buses:
when there are $\omega$ households with children under 4 and the district purchases $a$ buses, its payoff is
$u(a,\omega)=-\frac{\omega}{a}-ca,\text{ for some }c>0.
$

In this decision problem, the district does not seek to minimize squared error, or any of the other loss functions considered in \cite{GhoshRoughgardenSunararajan:Universally:SIAM:2012}. But even though it does not 
want its action to \textit{match} the statistic $\omega$, a rightward shift in its belief about $\omega$ still causes its optimal action to increase; i.e., its payoff is \textit{supermodular}.
In Theorem \ref{T:Geometric}, we prove that this is the key feature that makes the geometric mechanism optimal.
\end{example}

The kind of decision problems considered in Example \ref{Ex:School} --- those in which a data user's decision depends on their belief about a population statistic --- are common in practice. For instance, if  an individual believes there is a greater incidence in her community of 
a pandemic pathogen,
she may be more likely to take precautions such as 
avoiding indoor dining or wearing an N95 mask. If a restauranteur believes that the average income in a neighborhood is higher, he may decide to implement a more upscale menu. And if a municipal government believes that a greater proportion of its constituents are unemployed, it may increase the tax incentives that it offers potential employers.
Moreover, as in Example \ref{Ex:School}, as well as each of these applications, data users frequrently face problems that are \textit{supermodular}; i.e., that feature complementarity between higher actions and higher beliefs about the population statistic.\footnote{By higher beliefs, we mean beliefs that are shifted to the right; i.e., higher in the sense of first-order stochastic dominance.}

Our results offer guidance to a data provider who views these problems as representative use cases for the information it publishes. If the population statistic of interest is a sum of magnitude data, adding noise is generally not optimal (Theorem \ref{T:OblivMag}); the data provider can do better with a publication mechanism that depends not only on the \textit{average} of the data, but on the sample \textit{distribution} more generally. On the other hand, if it is a count of categorical data from respondents that are ex ante anonymous, adding noise is without loss (Theorem \ref{T:OblivEquiv}) --- and in the supermodular case, the geometric mechanism is always optimal (Theorem \ref{T:Geometric}).

We also provide a general comparative static on the comparison of information structures that we use to establish Theorem \ref{T:Geometric}, but which
is of independent interest.
In particular, 
we define a  partial order on information structures about ordered states of the world, and show in Theorem \ref{T:ProbDominance} that it ranks signals higher when they are more useful to decision makers with supermodular problems. 
This ordering is novel: Because of the nature of the data provider's 
problem, Theorem \ref{T:Geometric} cannot appeal to existing tools for ranking experiments according to their usefulness to a class of decision makers.\footnote{In particular, the information structures satisfying Corollary \ref{C:Solution}'s characterization of the solution to the data provider's problem --- i.e., those which induce posteriors that are extreme points of the constraint set --- are not Blackwell-comparable (making Blackwell's theorem unavailable) and do not necessarily induce an MLRP-ordered set of posterior beliefs (making comparative statics on \cite{lehmann1988comparing} accuracy unavailable).}
Instead, we introduce the \emph{Uniform-Peaked Relative Risk Order (UPRR)}, which ranks 
information structures higher when the posteriors they induce each concentrate relatively more mass around a \textit{peak} (e.g., their mode).
Intuitively, we might expect UPRR-dominant information structures to be more desirable in settings 
where the marginal benefit of taking a higher action is increasing in the state: There, 
the opportunity cost of choosing an action when it is suboptimal
increases as the state gets further away from the region where that action is optimal, 
so if posteriors are more concentrated, the costs associated with such mistakes should be lower ex ante.
Theorem \ref{T:ProbDominance} confirms this intuition, showing that information structures higher in the UPRR order are more useful for maximizing supermodular payoffs.
Since the geometric mechanism is UPRR-dominant over \textit{all} differentially private data publication mechanisms that depend only on the true value of the population statistic (Lemma \ref{L:Geometric_Dominates}), Theorem \ref{T:Geometric} follows. %

\subsubsection*{Related Literature}

Our paper complements a growing body of research addressing the challenge posed by \citet{Abowd2019} for economists to develop analytical tools that prioritize information quality in privacy-preserving data publication. 
Several recent papers treat the publication mechanism as fixed, and ask how the data provider should set the privacy budget, both as a theoretical matter \citep{AbowdSchmutte:Privacy:AER, Hsu:EconomicEpsilon:IEEE:2014,echenique2021screening} and as a practical matter \citep{ChettyFriedmanPracticalNBERw25626}. By contrast, we treat the privacy requirement as fixed, and try to find an optimal publication mechanism.  Other authors focus on how property rights in data should be assigned \citep{Jones2020, ArrietaIbarra2018}, or the design of privacy-preserving mechanisms \citep{pai2013privacy, eilat2021bayesian}.

Our paper also relates to prior work in computer science that examines the performance of differentially private publication mechanisms (e.g., \cite{geng2015optimal,koufogiannis2015optimality}). This literature generally focuses on maximizing \textit{accuracy} (i.e., minimizing some nondecreasing function of the distance between the published output and the true population statistic) and differentially private mechanisms that \textit{add noise}.
Among these papers, two are closest to our work. First, as discussed in Example \ref{Ex:School}, \citet{GhoshRoughgardenSunararajan:Universally:SIAM:2012} show that when the statistic of interest is the count of categorical data, and the data user can use his prior to ``remap'' the mechanism's output, the geometric mechanism maximizes accuracy. When, in addition, data users have priors about the population statistic but view the database as ambiguous, they show the geometric mechanism also outperforms mechanisms that do not add noise. In contrast, our Theorem \ref{T:Geometric} shows that with categorical data, the geometric mechanism is the optimal differentially private way to add noise \textit{whenever data users have supermodular decision problems}, and that adding noise is optimal under a symmetric prior with categorical data (Theorem \ref{T:OblivEquiv}), but not optimal with magnitude data (Theorem \ref{T:OblivMag}).
Second, \cite{BrennerNissim:Impossibility:SIAM:2014} show that when the statistic of interest is a sum of magnitude data, there is no \textit{single} mechanism that maximizes accuracy among those that add noise for \textit{every} nondecreasing loss function and \textit{every} prior, the way that \cite{GhoshRoughgardenSunararajan:Universally:SIAM:2012} show the geometric mechanism does in the categorical case. In contrast, our Theorem \ref{T:OblivMag} 
compares the  \textit{entire class} of noise-adding mechanisms to those that depend on the database more generally, and shows that the latter can outperform the former with magnitude data.

This paper follows other recent work that uses the tools of information design to study privacy. \cite{ichihashi2020online} considers the way that consumers will choose to flexibly disclose information about their valuations to sellers in an online marketplace. There, privacy is an endogenous choice by the individuals described in the data, rather than an exogenous constraint to protect those agents' information from being disclosed by a third party (as in our setting).
In a related setting with a single seller, \cite{hidir2021privacy} consider the consumer-optimal design of privacy regulation --- which takes the form of an information structure about the consumer's value --- subject to incentive compatibility for the consumer. This approach is conceptually related to ours, since differential privacy can be viewed as an approximate incentive compatibility constraint \citep{mcsherry2007mechanism}; however, the data provider in our model works to benefit agents who use the data, rather than those whose data is used.
In addition, our characterization of the 
data provider’s 
problem is related to other work on information design 
with constraints on the set of 
posteriors, e.g., \cite{doval2018constrained, le2019persuasion, matyskova2019bayesian}.
 Importantly, unlike many information design problems considered in the literature, the one we consider is \textit{literal}: The signal chosen by a data provider like the U.S. Census Bureau is not a metaphor for the way information is transmitted, but rather a literal part of the code that it uses to publish information. 

Finally, our Theorem \ref{T:ProbDominance} contributes to the literature on the comparison of experiments following \cite{blackwell1953equivalent}, whose ordering ranks signals according to their usefulness for \textit{any} decision maker. 
\citeauthor{lehmann1988comparing}'s (\citeyear{lehmann1988comparing}) \textit{accuracy} ordering ranks information structures so that those which are more accurate are more useful to decision makers 
whose payoffs are single-crossing in actions and states \citep{persico1996information}. \cite{quah2009comparative} extend this result to the more general case where payoffs form an \textit{interval dominance order} family.
\cite{athey2018value}, on the other hand, consider more general orders on information structures defined by their usefulness in decision problems satisfying various monotonicity conditions, and obtain \cite{lehmann1988comparing}'s order as a special case.
While the supermodular decision problems to which Theorem \ref{T:ProbDominance} applies necessarily have the single-crossing property, our result takes a different approach from the literature 
following
\cite{lehmann1988comparing}: Instead of being limited to information structures whose posteriors are ordered by the monotone likelihood ratio property, Theorem \ref{T:ProbDominance}'s scope extends to \textit{any} pair of signals which are ranked in the UPRR order.\footnote{Recall that
one belief is higher than another in the MLRP order --- or in the language of \cite{quah2009comparative}, an \textit{MLR-shift} of the latter belief --- if the ratio of the probability placed on higher state to the probability placed on a lower state (i.e., the likelihood ratio of those states) is greater under the first belief than under the second.}

The paper proceeds as follows. 
Section \ref{S:Model} formally describes the setting we consider and the meaning of differential privacy within it. 
Section \ref{S:Publication_as_Info_Design} shows that the data provider faces a constrained information design problem (Proposition \ref{P:DesignProblemT}) and characterizes its solution in the general case (Proposition \ref{P:SolutionT}).
Section \ref{S:Oblivious} asks when this problem's dimensionality can be reduced by restricting attention to mechanisms that add noise, and shows that we cannot do so with magnitude data (Theorem \ref{T:OblivMag}) but we can with categorical data, so long as respondents are anonymous (Theorem \ref{T:OblivEquiv}).
Finally, Section \ref{S:SPM} gives our optimality result for supermodular problems with categorical data (Theorem \ref{T:Geometric}) and the comparative static on information structures that it relies on (Theorem \ref{T:ProbDominance}).
All proofs are given in Appendix \ref{S:Proofs}.

\section{Setting}\label{S:Model}

There is a state of the world, or \textit{database};
a data provider, or \textit{designer}, 
who chooses a mechanism for releasing information about the database; and a \textit{decision maker},
who uses that information to make better decisions.\footnote{We assume a single decision maker for simplicity: All of our conclusions are unchanged if the designer seeks to maximize the sum of multiple decision makers' payoffs, so long as each agent's payoff only depends on their own action and the population statistic.}

\subsubsection*{Data}
A database is a list $\theta=(\theta_1,\ldots,\theta_N)$ of $N$ \textit{observations} $\theta_i\in\{0,1,\ldots,\hitype\}$. Each observation represents the \textit{type} of an individual or household (a \textit{respondent}). We say that the database contains 
\textit{categorical data} when $\hitype=1$ and
\textit{magnitude data} when $\hitype>1$.

The set of possible databases is thus $\Theta=\{0,1,\ldots,\hitype\}^N$. The decision maker and the designer have a common prior $\tprior\in\Delta(\tspace)$ over this set.\footnote{For a set $S$, we denote its convex hull by $\conv(S)$, 
its cardinality by $|S|$,
 its set of extreme points by $\ext(S)$, and the set of Borel probability measures on $S$ by $\Delta(S)$.
 }$^{,}$\footnote{One motivation for the differential privacy criterion is that it does not depend on assumptions about the background knowledge of agents (``attackers'') that might wish to use the mechanism output to learn about a respondent's type for malicious purposes.
 We emphasize that the common prior assumption need not apply to such malefactors, who are not explicitly considered in our model.%
 }

\subsubsection*{Data Publication and Differential Privacy}

Before observing the database, the designer chooses a \textit{data publication mechanism} $(S,m)$,
where $S$ is some countable set of outputs and $m:\tspace\to \Delta(S)$ 
maps each database to a distribution over outputs in $S$.
The decision maker then observes the realization of $m(\cdot|\theta)$.
Without loss, $S$ only includes outputs that the mechanism can actually generate: for each $s\in S$, $m(s|\theta)>0$ for some $\theta\in\tspace$.

The designer's chosen mechanism must 
guarantee the privacy of each respondent's type by limiting the amount of information that can be revealed about it.
In particular, 
for some \textit{privacy loss} $\epsilon>0$,
the mechanism must be \textit{$\epsilon$-differentially private}.

\begin{defn}[$\epsilon$-Differential Privacy \citep{dwork2006calibrating}]
A data publication mechanism $(S,m)$ is \textit{$\epsilon$-differentially private} if for all 
outputs $s\in S$ and all
 databases $\theta,\theta'\in\tspace$ that are adjacent%
, in the sense that they differ in at most one entry
(i.e., such that for some $i\in\{1,\ldots,N\}$, $\theta_{-i}=\theta'_{-i}$),
we have\footnote{
In the differential privacy literature, vectors like these that differ in only one entry are commonly referred to as \textit{neighboring databases}.}
\begin{align}\left|\log\left(\frac{m(s|\theta')}{m(s|\theta)}\right)\right|\leq\epsilon.
\label{E:DP}
\end{align}
\end{defn}

Differential privacy limits the amount, in log terms, that changing a single respondent's type can change the distribution of the data publication mechanism's realizations.\footnote{Because there is no uncertainty about the number of respondents in our model, our definition of differential privacy is referred to as \emph{bounded} differential privacy \citep{nofreelunch2011}, as distinct from \emph{unbounded} differential privacy, which bounds the change to the signal distribution from adding or removing a respondent from the database.}
When an agent views types as independent \textit{a priori}, this is equivalent to a limit on the amount that observing the mechanism's output can cause an agent to shift her beliefs about an individual respondent's type $\theta_n$, regardless of how much information that agent already has about the other respondents' types. (See Proposition S.1 in the Online Appendix.)

\subsubsection*{The Decision Maker}

After observing the realization of $m(\cdot|\theta)$, the decision maker takes an action $a$ from some compact set $A$. 
His Bernoulli payoffs from doing so depend on $\theta$ only through the \textit{average} type of the respondents. Hence, since $N$ is fixed, his Bernoulli payoffs can be written as a function $\util:A\times\Omega\to\mathbb{R}$ of his action $a$ and the \textit{population statistic} $\omega_\theta\equiv \sum_{n=1}^{N}\theta_n$, where $\Omega\equiv\{0,\ldots,N\hitype\}$; we assume that $\util$ is continuous.
Thus, if the decision maker has posterior belief 
$\tbelief\in\Delta(\Theta)$
after observing the output of the data publication mechanism, his interim payoffs are given by $\tvf(\tbelief)\equiv\max_{a\in A}E_\tbelief\util(a,\omega_\theta)$.

\subsubsection*{The Designer's Objective}

When choosing the data publication mechanism, the designer acts to maximize the 
expected value of the decision maker's payoffs.
Formally, letting $\tbelief_s^{m}\in\Delta(\Theta)$ denote the posterior belief of a decision maker who observes $s$ from the data publication mechanism $(S,m)$, the designer solves
\begin{align}
\max_{(S,m)}&\left\{\sum_{\theta\in\Theta}\sum_{s\in S}\tvf(\tbelief_s^{m})m(s|\theta)\tprior(\theta)\text{ s.t. }
\left|\log\left(\frac{m(s|\theta')}{m(s|\theta)}\right)\right|\leq\epsilon 
\begin{array}{c}
\forall s\in S,\forall \theta,\theta'
\text{ s.t.}\\
\exists n,\theta_{-n}=\theta'_{-n}
\end{array}
\right\}.\label{E:OriginalProblem}
\end{align}

\subsection*{Discussion}

Our model captures a widely confronted data publication problem in very general terms. 
The data provider (designer) chooses how the data should be published, and commits to those choices after the structure of the data is determined, but before the actual data are collected.
The data provider is also under an obligation to guarantee differential privacy at a fixed level.
With categorical data, our model corresponds precisely to any case where the data provider will publish information about the number of people in some category --- e.g., the number of COVID-positive individuals within a college dormitory, or the number of unemployed workers in a specific occupation within a state. Such statistics are generally called population or frequency counts, or --- in the computer science literature --- counting queries.\footnote{Note that, when it is known, dividing by the number of respondents $N$ transforms the count in the category of interest into the \textit{proportion} of the population in that category.}
With magnitude data, on the other hand, our model corresponds to cases where the data provider will publish information about the sum or average of some quantitative characteristic --- e.g., income, years of schooling, or length of unemployment. These statistics are generally referred to as population totals or averages, or in the computer science literature, sum queries.

Our model does not describe how data should optimally be collected. Instead, it takes as given the data collection strategy. This allows us to focus on the relationship between privacy protection and the usefulness of published data, but means we do not account for the possible interplay between data collection and privacy protection. In some settings, a data provider could decide ahead of time to counterbalance some of the noise induced by privacy protection by collecting a larger sample. Our model could be extended to understand such decisions, but they are excluded from consideration here.

Like most of the formal privacy literature, we assume the true value of the population statistic enters directly into the payoffs of data users (decision makers). 
This means we abstract away from other sources of uncertainty --- like measurement error and sampling variability --- about the extent to which the undistorted data reflect some true underlying state of the world that data users actually care about. This is without loss of generality, 
if we interpret the data users' payoffs as their expected utility conditional on the true value of the population statistic.
Furthermore, the model assumes data users make decisions by observing the published statistic and then updating their beliefs based on that observation. In reality, data users sometimes treat data as though it is published without error.

\section{Data Publication as Information Design \label{S:Publication_as_Info_Design}}

The differential privacy condition \eqref{E:DP} can be reformulated as a restriction on the 
set of posterior beliefs that the mechanism can induce.
Observe that if
 $\tbelief\in\Delta(\tspace)$ is the posterior belief of an agent who views realization $s$ from the data publication mechanism $(S,m)$, Bayes' rule tells us that for any two databases $\theta$ and $\theta'$ that differ in a single entry,
\begin{align}\frac{\tbelief(\theta')}{\tbelief(\theta)}=
\left.\frac{m(s|\theta')\tprior(\theta')}{\sum_{t\in\{0,1\}^N}m(s|t)\tprior(t)}\right/\frac{m(s|\theta)\tprior(\theta)}{\sum_{t\in\{0,1\}^N}m(s|t)\tprior(t)}=
\frac{m(s|\theta')}{m(s|\theta)}\frac{\tprior(\theta')}{\tprior(\theta)}.\label{E:Bayes}
\end{align}
Hence, 
$(S,m)$ is $\epsilon$-differentially private if and only if each posterior belief $\tbelief$ that it induces satisfies
\begin{align}
    \left|\log\left(\frac{\tbelief(\theta)}{\tbelief(\theta')}\right)-\log\left(\frac{\tprior(\theta)}{\tprior(\theta')}\right)\right|\leq\epsilon
    \text{ for each }\theta,\theta'\in\tspace\text{ with }\theta_{-n}=\theta'_{-n}\text{ for some }n.
    \label{E:DP_posteriorT}
\end{align}

We call the set of posteriors that satisfy \eqref{E:DP_posteriorT} the \emph{$\epsilon$-differentially private posteriors} $\tconstraint(\epsilon,\tprior)$.
In words, the $\epsilon$-differentially private posteriors are those such that for any 
two databases $\theta$ and $\theta'$ that differ in a single entry,
the difference in 
the log ratio of their probabilities under the prior and posterior is at most $\epsilon$.
As we show formally in the appendix (Lemma \ref{L:tconstraint}), 
this set
is a closed convex polyhedron which does not intersect the edges of the probability simplex, and which contains the prior in its (relative) interior.\footnote{It is worth noting that the robustness of $\epsilon$-differential privacy to \textit{post-processing}, i.e., the composition of the mechanism with a (possibly stochastic) map, arises from the convexity of the set of $\epsilon$-differentially private posteriors: 
By 
Bayes' rule,
any posterior induced by a signal that has been garbled by composition must be a convex combination of posteriors induced by the original signal.
}
Figure \ref{F:DPpostT} illustrates.
\begin{figure}[h!]
{\centering
\includegraphics[scale=.6,valign=t]{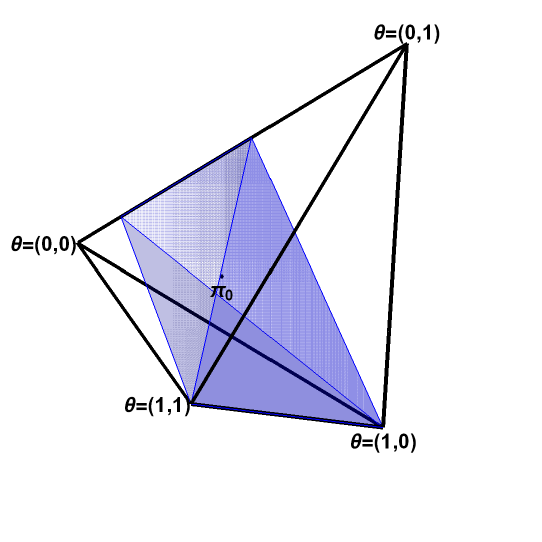}
\includegraphics[scale=.6,valign=t]{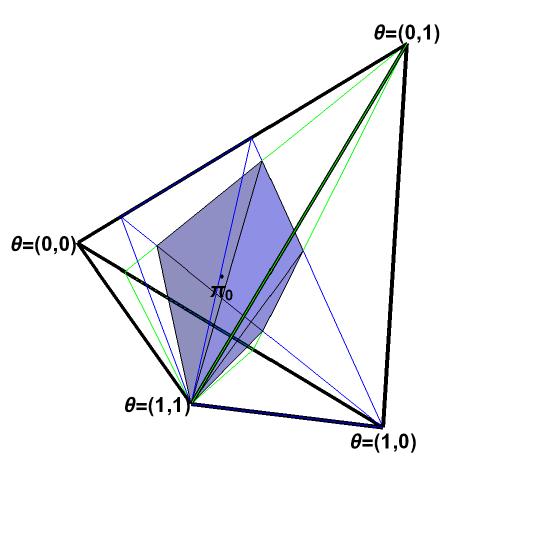}
\includegraphics[scale=.6,valign=t]{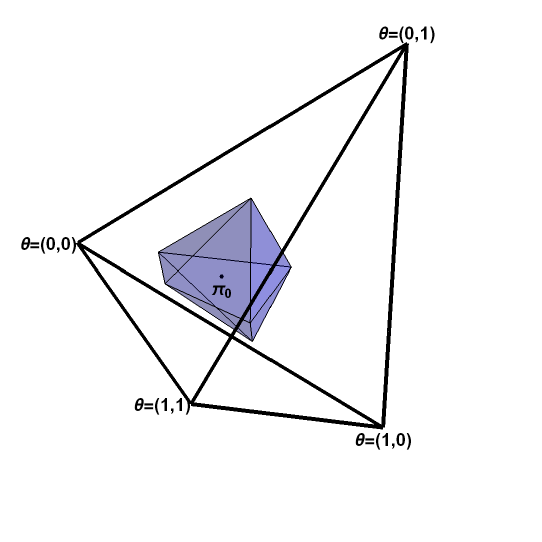}
\caption{\label{F:DPpostT}\textbf{$\epsilon$-differentially private posteriors. 
} 
Consider a setting with categorical data ($\hitype=1$),
$N=2$, $\epsilon=1$, and $(\tprior(0,0),\tprior(0,1),\tprior(1,0),\tprior(1,1))=(
\frac{1}{3},\frac{1}{6},
\frac{1}{6},\frac{1}{3})$.
Left panel:
The blue shaded region of the probability simplex is described by the constraint $-\epsilon\leq
\log\left(\tbelief((0,1))/\tbelief((0,0))\right)-\log\left(\tprior((0,1))/\tprior((0,0))\right)
\leq\epsilon$ bounding the amount of information that an $\epsilon$-differentially private mechanism can reveal about $\theta_2$ when $\theta_1=0$.
Middle panel: The region enclosed by the green lines is described by the constraint $-\epsilon\leq
\log\left(\tbelief((1,0))/\tbelief((0,0))\right)-\log\left(\tprior((1,0))/\tprior((1,0))\right)
\leq\epsilon$ bounding the amount of information that an $\epsilon$-differentially private mechanism can reveal about $\theta_1$ when $\theta_2=0$; its intersection with the region from the left panel is shaded in blue.
Right panel: The blue shaded region is the intersection of the regions described by the constraints in (\ref{E:DP_posteriorT}), i.e., the set $\tconstraint(\epsilon,\tprior)$ of $\epsilon$-differentially private posteriors.
Note that $\tconstraint(\epsilon,\tprior)$ is asymmetric because there is no constraint on the ratio $\tbelief((0,0))/\tbelief((1,1))$, since the type profiles $(0,0)$ and $(1,1)$ are not adjacent.
}}
\end{figure}

By representing $\epsilon$-differential privacy as a restriction to mechanisms that induce $\epsilon$-differentially private posteriors, we can succinctly
write the designer's problem as a constrained information design problem. 
That is, the designer's problem simplifies to one
of choosing a Bayes-plausible distribution $\texper\in\texspace$\footnote{Recall that since $\Delta(\tspace)$ is the set of posterior beliefs over databases, $\texspace$ is the set of distributions of posteriors over databases.} of posteriors (i.e., a distribution whose expected value is the prior) whose support is constrained to lie in the set $\tconstraint(\epsilon,\tprior)$.

\begin{prop}[Differentially Private Data Publication as Information Design]
\label{P:DesignProblemT}The mechanism $(S,m)$ solves the designer's problem (\ref{E:OriginalProblem}) if and only if it induces a distribution of posteriors 
which solves %
\begin{align}
\max_{\texper\in\Delta(\tconstraint(\epsilon,\tprior))}\{E_\texper \tvf(\tbelief)\text{ s.t. }E_\texper\tbelief=\tprior\}.\label{E:DesignProblemT}
\end{align}

\end{prop}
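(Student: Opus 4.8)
The plan is to establish the equivalence by passing through the standard ``distribution of posteriors'' reformulation of information design, with the only novelty being the bookkeeping that ties the differential privacy constraint to the support restriction $\supp\texper\subseteq\tconstraint(\epsilon,\tprior)$. First I would fix a mechanism $(S,m)$ and let $\tbelief_s^m$ denote the posterior it induces after output $s$ (well-defined since, without loss, every $s\in S$ occurs with positive probability under some $\theta$, hence with positive probability under $\tprior$ because $\supp\tprior=\Theta$). The mechanism induces the distribution of posteriors $\texper^m\in\texspace$ given by $\texper^m(B)=\sum_{s:\,\tbelief_s^m\in B}\PrP_{\tprior}(s)$, where $\PrP_{\tprior}(s)=\sum_\theta m(s|\theta)\tprior(\theta)$. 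By the law of total probability / iterated expectations, $E_{\texper^m}\tbelief=\sum_s \PrP_{\tprior}(s)\,\tbelief_s^m=\tprior$, so $\texper^m$ is always Bayes-plausible; and the designer's objective in \eqref{E:OriginalProblem} equals $\sum_s \PrP_{\tprior}(s)\,\tvf(\tbelief_s^m)=E_{\texper^m}\tvf(\tbelief)$, which is exactly the objective in \eqref{E:DesignProblemT}. This reduces the proposition to showing that (i) $(S,m)$ is $\epsilon$-differentially private if and only if $\supp\texper^m\subseteq\tconstraint(\epsilon,\tprior)$, and (ii) every Bayes-plausible $\texper$ supported on $\tconstraint(\epsilon,\tprior)$ is induced by some $\epsilon$-differentially private mechanism.

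For (i), I would invoke the Bayes'-rule computation \eqref{E:Bayes} already displayed in the text: for adjacent $\theta,\theta'$, the posterior $\tbelief_s^m$ satisfies $\tbelief_s^m(\theta')/\tbelief_s^m(\theta)=\bigl(m(s|\theta')/m(s|\theta)\bigr)\bigl(\tprior(\theta')/\tprior(\theta)\bigr)$, so the constraint \eqref{E:DP} on $m$ at output $s$ holds exactly when $\tbelief_s^m$ satisfies \eqref{E:DP_posteriorT}, i.e.\ when $\tbelief_s^m\in\tconstraint(\epsilon,\tprior)$. Ranging over all $s\in S$ gives the equivalence, with the caveat that I should note the ratios are well-defined: since $\tprior$ has full support and $\tconstraint(\epsilon,\tprior)$ does not meet the edges of the simplex (Lemma \ref{L:tconstraint}), the relevant probabilities are strictly positive, so dividing is legitimate. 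For (ii), given such a $\texper$, I would construct the mechanism whose output set is $S=\supp\texper\subseteq\Delta(\Theta)$ (countable because $\texper$ is a discrete distribution — or, if one wants to allow atomless $\texper$, first note that an optimal/target $\texper$ can be taken finitely supported by Carath\'eodory, though here it suffices to restrict to countably-supported $\texper$ since $S$ must be countable), with $m(\tbelief|\theta)=\texper(\tbelief)\,\tbelief(\theta)/\tprior(\theta)$. Bayes-plausibility $E_\texper\tbelief=\tprior$ makes $\sum_{\tbelief}m(\tbelief|\theta)=1$ for each $\theta$, so $m$ is a valid stochastic map, and a direct application of Bayes' rule shows the posterior induced by output $\tbelief$ is $\tbelief$ itself, so $\texper^m=\texper$; then the equivalence in (i) shows $m$ is $\epsilon$-differentially private because $\supp\texper\subseteq\tconstraint(\epsilon,\tprior)$.

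Combining, $(S,m)$ solves \eqref{E:OriginalProblem} iff it is $\epsilon$-differentially private and maximizes $E_{\texper^m}\tvf(\tbelief)$ among such mechanisms; by (i) this is iff $\texper^m$ is feasible for \eqref{E:DesignProblemT} and achieves the maximum value there, where I use (ii) to guarantee that the supremum in \eqref{E:DesignProblemT} is attained by a genuine mechanism so the two optimization problems have the same value. I expect the only real friction to be the measure-theoretic/countability hygiene — making sure the constructed output set $S$ is countable and the induced-posterior map is measurable, and handling the (here harmless) gap between arbitrary $\texper\in\Delta(\tconstraint(\epsilon,\tprior))$ and those realizable by a countable-output mechanism — rather than anything conceptually deep; the economic content is entirely carried by \eqref{E:Bayes} and the observation that $\tconstraint(\epsilon,\tprior)$ was defined precisely as the set of posteriors satisfying \eqref{E:DP_posteriorT}.
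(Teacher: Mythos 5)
Your proof is correct and follows essentially the same route as the paper's. The paper proves this proposition by appealing to a short lemma (Lemma \ref{L:DP}) that says a distribution of posteriors is induced by some $\epsilon$-differentially private mechanism iff it is Bayes-plausible with support in $\tconstraint(\epsilon,\tprior)$; that lemma in turn cites \cite{KamenicaBayesianPersuasion2011} for the ``Bayes-plausible $\Leftrightarrow$ induced by some mechanism'' direction and uses the Bayes'-rule identity \eqref{E:Bayes} for the differential-privacy equivalence, exactly as you do. The only difference is that you unpack the Kamenica--Gentzkow citation by explicitly constructing the mechanism $m(\tbelief\mid\theta)=\texper(\tbelief)\tbelief(\theta)/\tprior(\theta)$ and by flagging the countable-support issue (that $\Delta(\tconstraint(\epsilon,\tprior))$ contains distributions no countable-output mechanism can induce, but this gap is harmless because the supremum is attained on finite supports, as Theorem \ref{T:SolutionT} later makes explicit). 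That is a small gain in self-containedness and rigor, not a different argument.
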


Since the designer's incentives are aligned with the decision maker's, his value function in (\ref{E:DesignProblemT}) is convex. If the designer were not constrained by differential privacy, 
\cite{KamenicaBayesianPersuasion2011} show that the solution to her problem would be straightforward: she should simply induce the posteriors at the vertices of $\Delta(\tspace)$ by publishing the true value of $\theta$.
With an $\epsilon$-differentially private mechanism, this is impossible: she is restricted to inducing posteriors in $\tconstraint(\epsilon,\tprior)$. But convexity --- or equivalently, \citeauthor{blackwell1953equivalent}'s (\citeyear{blackwell1953equivalent}) theorem --- still allows her to restrict attention to certain posteriors; namely, the extreme points (i.e., vertices) of the polyhedron $\tconstraint(\epsilon,\tprior)$.

This sharpens the conclusions that can be drawn from results in the information design literature when they are applied to the designer's problem. For any $K\subseteq\Delta(\tspace)$, denote the restriction of $\tvf$ to $K$ as  $\tvf_K:K\to\mathbb{R}\cup\{-\infty\}$. Following \cite{KamenicaBayesianPersuasion2011}, define the \textit{$K$-restricted concavification} $\tccvf_K:K\to\mathbb{R}$ of $\tvf_K$ 
as the smallest concave function that lies above the designer's value function $\tvf$ on the set of posteriors $K\subseteq \Delta(\tspace)$.\footnote{Formally, let $\tccvf_K(\tbelief)\equiv\sup\{z|(\tbelief,z)\in\conv(\Gr(\tvf_K))\}$, where  $\Gr(\tvf_K)\equiv\{(\tbelief,\tvf_K(\tbelief))|\tbelief\in K\}$
denotes the graph of $\tvf_K$.}

\begin{prop}[Characterization of Optimal Data Publication Mechanisms]\label{P:SolutionT}
\hspace{0pt}
\begin{enumerate}[i.]
\item The maximized value of the designer's problem (\ref{E:OriginalProblem}) is $\tccvf_{\tconstraint(\epsilon,\tprior)}(\tprior)$.
\label{I:SolutionT_Max}
\item 
There is a mechanism which solves the designer's problem (\ref{E:OriginalProblem}) and induces a distribution of posteriors $\texper^*\in\Delta(\Delta(\tspace))$ such that
\label{I:SolutionT_ArgMax}
\begin{enumerate}
\item Each $\tbelief\in\supp\texper^*$ is an extreme point of $\tconstraint(\epsilon,\tprior)$;\label{I:SolutionT_Extreme}
\item 
Each $\tbelief\in\supp\texper^*$ achieves the privacy bound \eqref{E:DP_posteriorT} between at least $(\hitype+1)^N-1$ pairs of databases:
 $\left|\log\left(\frac{\tbelief(\theta)}{\tbelief(\theta')}\right)-\log\left(\frac{\tprior(\theta)}{\tprior(\theta')}\right)\right|=\epsilon$ for at least $(\hitype+1)^N-1$ distinct combinations $(\theta,\theta')$ that have $\theta_{-i}=\theta'_{-i}$ for some $i$;\label{I:SolutionT_BoundAttained}
\item 
The support of $\texper^*$ is a linearly independent set of vectors in $\mathbb{R}^{(\hitype+1)^N}$; and\footnote{For concreteness, let each belief $\tbelief\in\Delta(\tspace)$ be represented by the vector in $\mathbb{R}^{(\hitype+1)^N}$ whose $n$th entry corresponds to the probability $\tbelief(\theta)$ it places on the database $\theta$ that is the binary number for $n$.}
\label{I:SolutionT_Independence}
\item $\texper^*$ is the unique Bayes-plausible distribution of posteriors with support $\supp\texper^*$.
\label{I:SolutionT_UniqueSupport}
\end{enumerate}
\end{enumerate}
\end{prop}

Proposition \ref{P:SolutionT} applies standard results from information design to give a characterization of optimal differentially private data publication that we use to establish our main results.
Part \eqref{I:SolutionT_Max} shows that the value of the problem has the familiar concavification characterization from \cite{KamenicaBayesianPersuasion2011} --- but because of differential privacy, this concavification takes place on the set of $\epsilon$-differentially private posteriors, rather than the entire simplex.
This places additional structure on the problem's solution: 
Part \eqref{I:SolutionT_ArgMax} places an upper bound on the number of posteriors induced by the mechanism --- and hence the number of outputs it produces \eqref{I:SolutionT_Independence} --- and a lower bound on the number of privacy bounds that each of those posteriors must attain \eqref{I:SolutionT_BoundAttained}. It also allows us to focus on the \textit{set} of posteriors that the mechanism induces --- which, without loss, is linearly independent \eqref{I:SolutionT_Independence} --- and ignore the probabilities with which it induces them \eqref{I:SolutionT_UniqueSupport}.

\section{``Adding Noise'' and Oblivious Mechanisms}\label{S:Oblivious}
Section \ref{S:Publication_as_Info_Design} shows that the designer's problem amounts to choosing a Bayes-plausible distribution of posterior beliefs about the database. 
This problem is challenging in part because of its dimensionality: 
The space of posteriors about the database $\Delta(\tspace)$ has dimension  $(\hitype+1)^N-1$, the number of possible databases minus one.

But because the decision maker's payoff only depends on the population statistic $\omega$, his interim payoff only depends on his belief $\tbelief\in\Delta(\tspace)$ about the database through its \textit{projection} onto the lower-dimensional space $\Delta(\Omega)$ of beliefs about the population statistic.
That is, letting $\proj:\Delta(\tspace)\to\Delta(\Omega)$ be the projection operator defined by $\proj\tbelief(\omega)=\sum_{\theta:\omega_\theta=\omega}\tbelief(\theta)$, we have
\begin{align}\tvf(\tbelief)\equiv \max_{a\in A}E_{\tbelief}u(a,\omega_\theta)&=\max_{a\in A}E_{\proj\tbelief}u(a,\omega)\equiv\vf(\proj\tbelief).\label{E:vfproj}
\end{align}
Figure \ref{F:proj} illustrates.

\begin{figure}[h!]
    \centering
    \includegraphics[scale=.7]{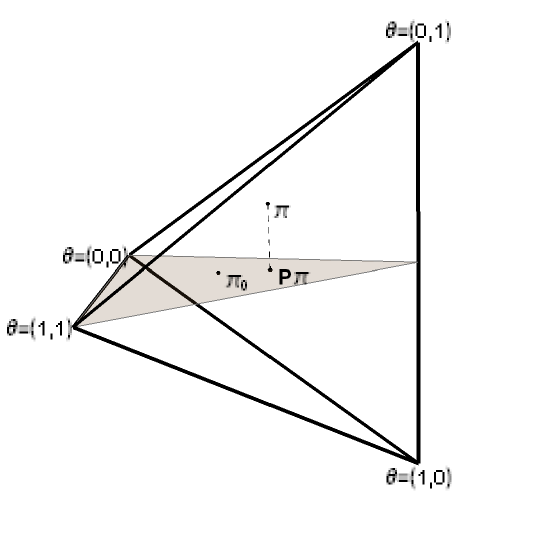}
    \caption{
    \textbf{Projection onto $\Delta(\Omega)$ in 
    Figure \ref{F:DPpostT}.
    }
     A posterior belief $\tbelief\in\Delta(\tspace)$ about the database $\theta$ is projected onto the space $\Delta(\Omega)$ --- shown here embedded in $\Delta(\tspace)$ --- of beliefs about the population statistic $\omega$.
    }
    \label{F:proj}
\end{figure}

Since the designer works to maximize the expected value of this payoff, we might then expect that she can restrict her attention to data publication mechanisms that 
depend only on
the true value of the population statistic; i.e., which ``add noise''.
Following the formal privacy literature, we refer to such mechanisms as \textit{oblivious}.
This class includes most mechanisms  used in practice in the kinds of settings we consider.
If oblivious mechanisms are without loss,
she can solve her problem \eqref{E:DesignProblemT} by choosing
a distribution on the set of posteriors about the population statistic $\Delta(\Omega)$ directly, rather than choosing a distribution of posteriors about the database and projecting them onto $\Delta(\Omega)$.

It turns out that the optimality of ``adding noise'' using an oblivious mechanism depends crucially on the type of data that the designer has. In Theorem \ref{T:OblivMag}, we show that in settings with magnitude data, oblivious mechanisms are never without loss: There are always $\epsilon$-differentially private mechanisms that induce distributions of posteriors about the population statistic $\omega$ that cannot be replicated using an oblivious mechanism.
But with categorical data, the optimality of oblivious mechanisms hinges on the structure of the prior $\tprior$. In particular, when 
respondents are anonymous, in the sense that the prior over databases is symmetric, we show in Theorem \ref{T:OblivEquiv} that a designer can safely restrict attention to oblivious mechanisms. This assumption holds naturally in many settings, such as those where the data are i.i.d.

This has
practical implications for the design of differentially private publication mechanisms: When firms and statistical agencies face the design problem considered in this paper, they mostly choose to adopt mechanisms that 
add noise to
the true population statistic. 
As we show, selecting a mechanism from this class is optimal whenever 
the data is categorical and
the designer and data users view respondents as interchangeable, but not necessarily otherwise. With categorical data, if the identity of a respondent carries information about his or her type,
the designer may be able to exploit that information to design a mechanism that produces more useful output while maintaining differential privacy.
And with magnitude data, the designer can exploit the fact that (unlike with categorical data) differential privacy 
restricts beliefs differently at different databases with the same population statistic.

\subsection{Characterizing Differential Privacy for Oblivious Mechanisms}

Formally, we say a data publication mechanism $(S,m)$ is \textit{oblivious} if 
$m(\cdot|\theta)=m(\cdot|\theta')$ whenever 
$\omega_\theta =\omega_{\theta'}$.%
\footnote{
In the terminology of the computer science 
literature on formal privacy, the population statistic $\omega_\theta$ is the outcome of a \emph{sum query} (with magnitude data) or \textit{counting query} (with categorical data) applied to $\theta$, and our definition specifies that a mechanism is oblivious {with respect to that query}.
}
Hence,
there is a function $\sigma:\Omega\to\Delta(S)$ such that $m(\cdot|\theta)=\sigma(\cdot|\omega_\theta)$ for each $\theta\in\tspace$; we abuse notation and write $(S,\sigma)$ to denote such a mechanism. 
The class of oblivious mechanisms 
includes most differentially private mechanisms used in practice to publish information about the count or proportion of individuals with a certain characteristic.
In particular, the widely used \textit{Gaussian} and \textit{geometric} mechanisms are each oblivious, since they publish the sum of the true population statistic $\omega_\theta$ and a random variable.

Because oblivious mechanisms only provide information about the population statistic, they are completely characterized by the distribution $\exper\in\exspace$ of posterior beliefs that they induce about the population statistic, rather than the distribution of beliefs they induce about the database more generally.
Such a distribution can be induced by an oblivious mechanism precisely when its expectation $E_\exper\belief$ is equal to the prior \textit{about the population statistic} $\prior\equiv\proj\tprior$.

When a mechanism is oblivious, Proposition \ref{P:DPObliv} shows that the differential privacy criterion \eqref{E:DP} simplifies to a limit on the amount that \textit{moving to an adjacent population statistic} (i.e., from $\omega-t$ to $\omega$, or vice versa, for $1\leq t\leq\hitype$) can change the distribution of the mechanism's realizations.
Intuitively, if two values $\omega,\omega'$ of the population statistic differ by no more than $\hitype$, there are a pair of databases with those statistics which differ in only one entry and by exactly the difference $\omega-\omega'$.
Consequently, differential privacy bounds the ratio of the posterior probabilities of those states \eqref{E:DP_posterior}.

\begin{prop}[Differential Privacy for Oblivious Mechanisms]\label{P:DPObliv}
Suppose $(S,\sigma)$ is an oblivious data publication mechanism. Then the following are equivalent:
\begin{enumerate}[i.]
    \item $(S,\sigma)$ is $\epsilon$-differentially private.\label{I:DPObliv:DP}
    \item $
    \left|\log\left(\frac{\sigma(s|\omega)}{\sigma(s|\omega-t)}\right)\right|\leq\epsilon$
     for each $s\in S$, $\omega\in\{1,\ldots,N\hitype\}$, and $1\leq t\leq \min\{\hitype,\omega\}$.
    \label{I:DPObliv:DPObliv}
    \item For each posterior belief about the population statistic $\belief\in\Delta(\Omega)$ induced by $(S,\sigma)$,\label{I:DPObliv:posterior}
    \end{enumerate}
    \vspace*{-\baselineskip}
    \begin{align}\begin{array}{@{}r@{}}
    \left|\log\left(\frac{\belief(\omega)}{\belief(\omega-t)}\right)-\log\left(\frac{\prior(\omega)}{\prior(\omega-t)}\right)\right|\leq\epsilon
    \end{array}
    \text{ for each }\omega\in\{1,\ldots,N\hitype\}\text{ and }1\leq t\leq \min\{\hitype,\omega\}.\label{E:DP_posterior}
\end{align}
\end{prop}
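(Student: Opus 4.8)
The plan is to prove the two equivalences (\ref{I:DPObliv:DP})$\,\Leftrightarrow\,$(\ref{I:DPObliv:DPObliv}) and (\ref{I:DPObliv:DPObliv})$\,\Leftrightarrow\,$(\ref{I:DPObliv:posterior}) in turn. For the first, the point is that obliviousness makes $m(s|\theta)=\sigma(s|\omega_\theta)$ depend on $\theta$ only through $\omega_\theta$, and that adjacency of databases translates cleanly into a ``differ by at most one'' relation on states: if $\theta,\theta'$ differ in at most one entry then either $\omega_\theta=\omega_{\theta'}$, or $\{\omega_\theta,\omega_{\theta'}\}=\{\omega-1,\omega\}$ for some $\omega\in\{1,\dots,N\}$; and conversely every such consecutive pair of states is realized by some adjacent pair of databases (take $\theta$ with $\omega-1$ ones and flip one of its $N-\omega+1\ge1$ zero entries). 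Hence the set of ratios $m(s|\theta')/m(s|\theta)$ over adjacent pairs $(\theta,\theta')$ equals $\{1\}\cup\{\sigma(s|\omega)/\sigma(s|\omega-1):\omega\in\{1,\dots,N\}\}\cup\{\sigma(s|\omega-1)/\sigma(s|\omega):\omega\in\{1,\dots,N\}\}$; since $|\log 1|=0\le\epsilon$ and $|\log x|=|\log(1/x)|$, imposing the bound \eqref{E:DP} for all $s$ and all adjacent $\theta,\theta'$ is equivalent to imposing $|\log(\sigma(s|\omega)/\sigma(s|\omega-1))|\le\epsilon$ for all $s\in S$ and $\omega\in\{1,\dots,N\}$.

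For (\ref{I:DPObliv:DPObliv})$\,\Leftrightarrow\,$(\ref{I:DPObliv:posterior}) I would run the Bayes'-rule computation of \eqref{E:Bayes} at the level of states. Write $\prior=\proj\tprior\in\Delta(\Omega)$ for the prior over states, which has full support because $\supp\tprior=\tspace$. An output $s\in S$ induces the state-posterior $\belief_s(\omega)=\sigma(s|\omega)\prior(\omega)\big/\sum_{\omega'}\sigma(s|\omega')\prior(\omega')$, which is well defined: since $s$ is realizable, $\sigma(s|\omega)>0$ for some $\omega$, and $\prior$ has full support, so the denominator is positive. Forming the ratio $\belief_s(\omega)/\belief_s(\omega-1)$ cancels the normalizing denominator, and rearranging gives, for each $\omega\in\{1,\dots,N\}$,
\[
\log\!\left(\frac{\belief_s(\omega)}{\belief_s(\omega-1)}\right)-\log\!\left(\frac{\prior(\omega)}{\prior(\omega-1)}\right)=\log\!\left(\frac{\sigma(s|\omega)}{\sigma(s|\omega-1)}\right).
\]
Thus the inequality in (\ref{I:DPObliv:posterior}) for the posterior $\belief_s$ is literally the inequality in (\ref{I:DPObliv:DPObliv}) for that $s$; and since the posteriors induced by $(S,\sigma)$ are exactly the $\belief_s$ for $s\in S$, the two conditions coincide.

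Nothing here is a genuine obstacle; the only care needed is the handling of zeros inside the logarithms, for which I would use the standard reading that ``$|\log(a/b)|\le\epsilon$'' for $a,b\ge0$ means $a=0\iff b=0$ and $e^{-\epsilon}\le a/b\le e^{\epsilon}$ when $b>0$. Under this reading each of (\ref{I:DPObliv:DPObliv}) and (\ref{I:DPObliv:posterior}) forces $\sigma(s|\cdot)$ to be everywhere positive on $\Omega$ for every $s\in S$ --- a single zero would propagate along consecutive states and contradict realizability of $s$ --- so every ratio above is finite and the equivalences go through verbatim. The step I would take care to state explicitly is the surjectivity claim in the first paragraph, that every consecutive pair of states is witnessed by some adjacent pair of databases: it is precisely what delivers the implication (\ref{I:DPObliv:DP})$\,\Rightarrow\,$(\ref{I:DPObliv:DPObliv}), whereas the reverse implication and the reduction to consecutive states use only that adjacent databases have states differing by at most one.
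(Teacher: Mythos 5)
Your proposal is correct and follows essentially the same route as the paper's proof: the equivalence (\ref{I:DPObliv:DP})$\,\Leftrightarrow\,$(\ref{I:DPObliv:DPObliv}) via the observation that adjacent databases have states differing by at most one and, conversely, every consecutive pair of states is witnessed by some adjacent pair of databases, and the equivalence (\ref{I:DPObliv:DPObliv})$\,\Leftrightarrow\,$(\ref{I:DPObliv:posterior}) via the state-level Bayes computation in which the normalizing constant cancels. The only difference is presentational --- you are more explicit about the surjectivity step and about why the logarithms are finite (the paper leaves both implicit) --- which is a harmless elaboration, not a different argument.
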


We call the set of posterior beliefs about the population statistic that satisfy \eqref{E:DP_posterior} the \textit{oblivious $\epsilon$-differentially private posteriors} $\constraint(\epsilon,\prior)\subset\Delta(\Omega)$. Figure \ref{F:DPpost} illustrates in the categorical data setting described in Figure \ref{F:DPpostT}.

\begin{figure}[h!]
{\centering
\includegraphics[scale=0.7,valign=t]{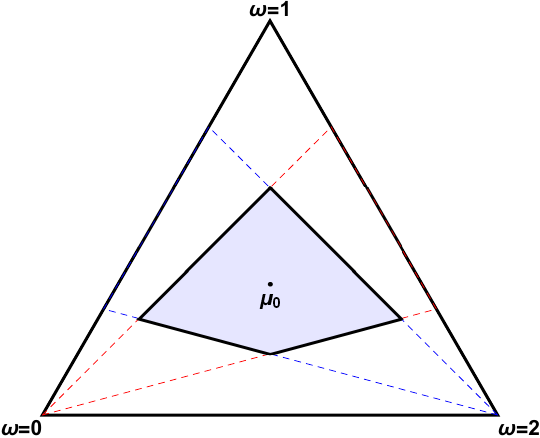}
\caption{\label{F:DPpost}\textbf{Oblivious $\epsilon$-differentially private posteriors.
} 
Recall that in Figure \ref{F:DPpostT}, we had categorical data ($T=1$), $N=2$, $\epsilon=1$, and prior $(\tprior(0,0),\tprior(0,1),\tprior(1,0),\tprior(1,1))=(
\frac{1}{3},\frac{1}{6},
\frac{1}{6},\frac{1}{3})$. Hence, $\prior=\proj\tprior=\left[\frac{1}{3}\ \frac{1}{3}\ \frac{1}{3}\right]'$.
The region of the probability simplex enclosed by the blue dotted lines is described by the constraint $-\epsilon\leq
\log\left(\belief(1)/\belief(0)\right)-\log\left(\prior(1)/\prior(0)\right)
\leq\epsilon$; the region enclosed by the red dotted lines is described by the constraint $-\epsilon\leq
\log\left(\belief(2)/\belief(1)\right)-\log\left(\prior(2)/\prior(1)\right)
\leq\epsilon$; their intersection is $\constraint(\epsilon,\prior)$.
Note that $\constraint(\epsilon,\prior)$ is asymmetric because there is no constraint on the ratio $\belief(0)/\belief(2)$, since states 0 and 2 are not adjacent.
}}
\end{figure}

\subsection{When Are Oblivious Mechanisms Without Loss?}

Proposition \ref{P:DPObliv} reveals the key difference between differential privacy's restrictions on the information disclosed by an oblivious mechanism and its restrictions on the information disclosed \textit{about the population statistic} by a non-oblivious mechanism. 
In general, the designer can induce a distribution $\exper$ of posterior beliefs about the population statistic whenever it is the projection onto $\Delta(\Omega)$ of a Bayes-plausible distribution on the set of $\epsilon$-differentially private posteriors; i.e., whenever there exists a Bayes-plausible $\texper\in\Delta(\tconstraint(\epsilon,\tprior))$ such that %
$\exper(\belief)=\texper(\proj^{-1}(\belief))$ for each $\belief$.\footnote{That is, whenever it is the pushforward measure $\texper\circ \proj^{-1}$ of some Bayes-plausible $\texper\in\Delta(\tconstraint(\epsilon,\tprior))$ under the projection map $\proj$.}
The set of distributions that satisfy this criterion coincides with 
the set of distributions that can be induced with an $\epsilon$-differentially private  \textit{oblivious} mechanism
precisely when
\[\constraint(\epsilon,\prior)=\proj\tconstraint(\epsilon,\tprior),\]
i.e., when the projection of any $\epsilon$-differentially private posterior onto $\Delta(\Omega)$ is an oblivious $\epsilon$-differentially private posterior 
(Lemma \ref{L:OblivProj} in the appendix).

\begin{figure}[h]
    \centering
    \begin{tabular}{ccc}
\includegraphics[scale=0.7,valign=c]{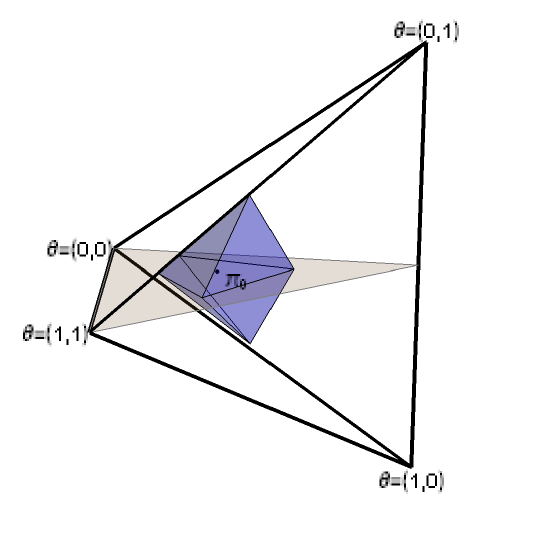}&$\stackrel{\proj}{\rightarrow}$&\includegraphics[scale=0.7,valign=c]{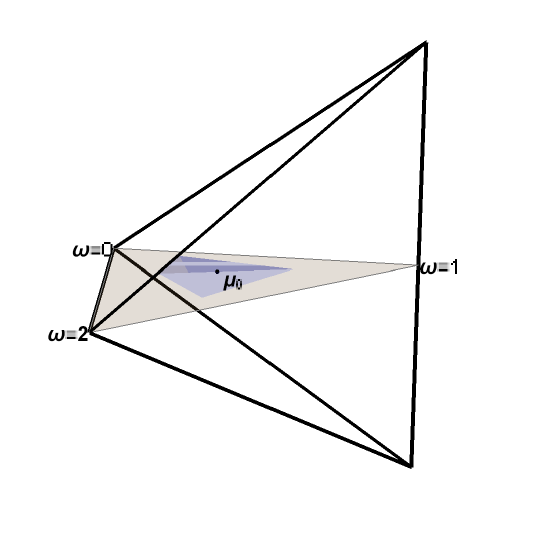}
\end{tabular}
    \caption{\textbf{Projection of $\tconstraint(\epsilon,\tprior)$ onto $\Delta(\Omega)$ in Figures \ref{F:DPpostT} and \ref{F:DPpost}.}
    Left panel: The set of $\epsilon$-differentially private posteriors (blue) and the space $\Delta(\Omega)$ of posteriors about the population statistic (brown). Right panel: Since $\tprior$ is symmetric and data is categorical (i.e., $T=1$) in Figures \ref{F:DPpostT} and \ref{F:DPpost}, the projection operator $\proj$ carries $\tconstraint(\epsilon,\tprior)$ to the set of oblivious $\epsilon$-differentially private posteriors (blue).
    }
    \label{F:projDP}
\end{figure}

The main results of this section show that whether this is the case depends crucially on the kind of data (and hence, the kind of population statistic) that the designer wishes to publish information about. With magnitude data ($T>1$), the answer is always no, regardless of the prior $\tbelief$ or the privacy loss budget $\epsilon$. Consequently, oblivious mechanisms are \textit{never} without loss.

\begin{thm}[Oblivious Mechanisms and Magnitude Data]\label{T:OblivMag}
With magnitude data, $\constraint(\epsilon,\prior)\neq \proj\tconstraint(\epsilon,\tprior)$: There exist $\epsilon$-differentially private data publication mechanisms that induce distributions of beliefs about the population statistic that cannot be replicated with an  oblivious mechanism that is also $\epsilon$-differentially private.
\end{thm}

The intuition is as follows.
Because an oblivious mechanism only depends on the database through the population statistic, 
viewing its output must change the log ratio of the probabilities of a pair of databases by the same amount whenever they have the same population statistics:
\begin{align}
    \left|\log\left(\frac{\tbelief(\theta)}{\tbelief(\theta')}\right)-\log\left(\frac{\tprior(\theta)}{\tprior(\theta')}\right)\right|&=\left|\log\left(\frac{\tbelief(\hat{\theta})}{\tbelief(\hat{\theta}')}\right)-\log\left(\frac{\tprior(\hat{\theta})}{\tprior(\hat{\theta}')}\right)\right|\label{E:obliv_post_equal}\\
    \text{ for each }&\theta,\theta',\hat{\theta},\hat{\theta}'\in\tspace\text{ with }\omega_\theta=\omega_{\hat{\theta}}\text{ and }\omega_{\theta'}=\omega_{\hat{\theta}'}.\nonumber    
\end{align}
Since this change is the quantity bounded by differential privacy, requiring a mechanism to be oblivious effectively adds privacy bounds for pairs of databases $\theta,\theta'$ that do not differ in one entry, but merely happen to have the same population statistics as some other pair $\hat{\theta},\hat{\theta}'$ that does. The key to establishing Theorem \ref{T:OblivMag} is showing that with magnitude data, these additional constraints never become redundant when they are projected onto the space $\Delta(\Omega)$ of beliefs about the population statistic.

But with categorical data ($T=1$), it turns out that projection onto $\Delta(\Omega)$ \textit{does} render these additional constraints redundant, so long as the prior distribution $\tprior$ is symmetric: $\tprior(\theta)=\tprior(\theta')$ whenever $\theta$ is a permutation of $\theta'$.\footnote{In other words, when respondent types $\{\theta_n\}_{n=1}^N$ are exchangeable random variables.}
When the prior is symmetric in this way, we say that \textit{respondents are anonymous}. Theorem \ref{T:OblivEquiv} shows that when data is categorical, anonymity ensures that $\constraint(\epsilon,\prior)=\proj\tconstraint(\epsilon,\tprior)$ --- and hence that oblivious mechanisms are without loss.

\begin{thm}[Oblivious Mechanisms and Categorical Data]\label{T:OblivEquiv}
With categorical data and anonymous respondents, 
$\constraint(\epsilon,\prior)=\proj\tconstraint(\epsilon,\tprior)$:
For each $\epsilon$-differentially private data publication mechanism,
there exists an $\epsilon$-differentially private oblivious data publication mechanism 
that induces the same distribution of posterior beliefs 
about the population statistic.
\end{thm}

As a consequence of Theorem \ref{T:OblivEquiv}, whenever respondents are anonymous and data is categorical, the designer's information design problem \eqref{E:DesignProblemT} can be simplified to one of choosing a distribution on the lower-dimensional space $\Delta(\Omega)$ of posterior beliefs about $\omega$.
\begin{cor}[Differentially Private Data Publication as Information Design: Categorical Data]\label{C:DesignProblem}
If respondents are anonymous and data is categorical,  the oblivious mechanism $(S,\sigma)$ solves the designer's problem (\ref{E:OriginalProblem}) if and only if it induces a distribution of posteriors about the population statistic
that solves %
\begin{align}
\max_{\exper\in\Delta(\constraint(\epsilon,\prior))}\{E_\exper \vf(\belief)\text{ s.t. }E_\exper\belief=\prior\}.\label{E:DesignProblem}
\end{align}
\end{cor}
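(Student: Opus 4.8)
The plan is to prove that the optimization problems \eqref{E:DesignProblemT} and \eqref{E:DesignProblem} have the same optimal value, and that an oblivious mechanism attains that value in \eqref{E:OriginalProblem} exactly when the distribution of posteriors about the state it induces attains it in \eqref{E:DesignProblem}; Proposition \ref{P:DesignProblemT} then finishes the job. Two elementary ``bridges'' are used throughout. First, \eqref{E:vfproj} gives $\tvf(\tbelief)=\vf(\proj\tbelief)$, so if $\exper=\texper\circ\proj^{-1}$ is the pushforward of $\texper\in\texspace$ onto $\Delta(\Omega)$, then $E_\texper\tvf(\tbelief)=E_\exper\vf(\belief)$. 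Second, $\proj$ is linear with $\proj\tprior=\prior$, so $E_\texper\tbelief=\tprior$ implies $E_\exper\belief=\prior$, and conversely a Bayes-plausible measure on $\Delta(\Omega)$ lifts to a Bayes-plausible measure on $\Delta(\tspace)$ (shown below). Hence projecting sends a feasible point of \eqref{E:DesignProblemT} to a point of \eqref{E:DesignProblem} with the same objective value, provided that point is feasible — i.e. provided $\proj\tconstraint(\epsilon,\tprior)\subseteq\constraint(\epsilon,\prior)$ — and lifting runs the other way.

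The inclusion $\proj\tconstraint(\epsilon,\tprior)\subseteq\constraint(\epsilon,\prior)$ is exactly where anonymity enters, and is the only substantive step. I would obtain it directly from Theorem \ref{T:OblivEquiv}: given $\texper$ feasible for \eqref{E:DesignProblemT}, it is the distribution of posteriors about the database of some $\epsilon$-differentially private mechanism, so by Theorem \ref{T:OblivEquiv} (which uses anonymity) there is an $\epsilon$-differentially private \emph{oblivious} mechanism $(S,\sigma)$ inducing the same distribution $\exper$ of posteriors about the state, whereupon Proposition \ref{P:DPObliv} forces $\supp\exper\subseteq\constraint(\epsilon,\prior)$. (Equivalently, this is the set equality $\constraint(\epsilon,\prior)=\proj\tconstraint(\epsilon,\tprior)$ that Theorem \ref{T:OblivEquiv} delivers under anonymity, as discussed preceding that theorem.) Since $\exper$ is Bayes-plausible and, by the first bridge, has the same objective value as $\texper$, this gives $\mathrm{val}\,\eqref{E:DesignProblemT}\le\mathrm{val}\,\eqref{E:DesignProblem}$. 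Without anonymity this can fail — which is precisely why oblivious mechanisms need not be optimal in general — so I expect this to be the only real obstacle.

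The reverse inequality needs no anonymity. Given $\exper$ feasible for \eqref{E:DesignProblem}, lift each $\belief\in\supp\exper$ to $\tbelief_\belief\in\Delta(\tspace)$ defined by $\tbelief_\belief(\theta)=\belief(\omega_\theta)\tprior(\theta)/\prior(\omega_\theta)$, and let $\texper$ place mass $\exper(\{\belief\})$ on $\tbelief_\belief$; then $\proj\tbelief_\belief=\belief$, so distinct beliefs lift to distinct beliefs, $\texper\circ\proj^{-1}=\exper$, the objective values agree, and $E_\texper\tbelief=\tprior$ follows from $E_\exper\belief=\prior$. For adjacent $\theta,\theta'$ one computes $\log\!\big(\tbelief_\belief(\theta')/\tbelief_\belief(\theta)\big)-\log\!\big(\tprior(\theta')/\tprior(\theta)\big)=\log\!\big(\belief(\omega_{\theta'})/\belief(\omega_\theta)\big)-\log\!\big(\prior(\omega_{\theta'})/\prior(\omega_\theta)\big)$, whose absolute value is at most $\epsilon$ since $\belief$ satisfies \eqref{E:DP_posterior}; hence $\supp\texper\subseteq\tconstraint(\epsilon,\tprior)$ and $\texper$ is feasible for \eqref{E:DesignProblemT}. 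Thus $\mathrm{val}\,\eqref{E:DesignProblem}\le\mathrm{val}\,\eqref{E:DesignProblemT}$, and together with Proposition \ref{P:DesignProblemT} all three values coincide, at some attained optimum $v^{*}$. The stated equivalence then falls out: for oblivious $(S,\sigma)$ with induced state-posterior distribution $\exper$, its objective in \eqref{E:OriginalProblem} equals $E_\exper\vf(\belief)$ (first bridge), $\exper$ is Bayes-plausible, and by Proposition \ref{P:DPObliv} $(S,\sigma)$ is feasible for \eqref{E:OriginalProblem} iff $\supp\exper\subseteq\constraint(\epsilon,\prior)$ iff $\exper$ is feasible for \eqref{E:DesignProblem}; since both problems share the optimal value $v^{*}$, $(S,\sigma)$ solves \eqref{E:OriginalProblem} iff $\exper$ solves \eqref{E:DesignProblem}. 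Everything other than the anonymity-dependent inclusion is routine bookkeeping translating between posteriors about the database and posteriors about the state.
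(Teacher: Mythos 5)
Your proof is correct and follows the same route as the paper's (which simply cites \eqref{E:vfproj}, Proposition~\ref{P:DPObliv}, and Theorem~\ref{T:OblivEquiv}): anonymity enters only through Theorem~\ref{T:OblivEquiv}, \eqref{E:vfproj} carries objective values across the projection, and Proposition~\ref{P:DPObliv} matches feasibility. Your explicit lift $\tbelief_\belief(\theta)=\belief(\omega_\theta)\tprior(\theta)/\prior(\omega_\theta)$ for the reverse inequality is a minor elaboration of what the paper leaves implicit (one could equivalently invoke Proposition~\ref{P:DPObliv} to realize any Bayes-plausible $\exper$ on $\constraint(\epsilon,\prior)$ with an oblivious $\epsilon$-DP mechanism), but it is correct and does not change the approach.
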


The role played by the respondents' anonymity in Theorem \ref{T:OblivEquiv} is subtle, as is the role of the kind of data that the designer collects about them.
Because 
the respondents' types take binary values with categorical data, databases with the same population statistic must be permutations of each other.\footnote{That is, if $\omega_\theta=\omega_{\theta'}$, then $\{t|\omega_t=\omega_\theta-1\text{ and }t_{-i}=\theta_{-i}\text{ for some }i\}$ and $\{t|\omega_t=\omega_{\theta'}-1\text{ and }t_{-i}=\theta'_{-i}\text{ for some }i\}$ have the same number of elements.} 
Hence, each differs by a single entry from the same number of databases whose population statistic is lower by 1.
Moreover, since these databases also differ from one another only by permutation, a symmetric prior places equal probability on each of them.
As a consequence of these facts, the bounds that differential privacy places on the posterior likelihood ratios of databases that differ \textit{in one entry} sum to the bounds it places on the posterior likelihood ratios of states that differ \textit{by one} induced by an oblivious mechanism. That is, the constraints that characterize $\constraint(\epsilon,\prior)$ are the projections of the constraints that characterize $\tconstraint(\epsilon,\tprior)$, and so $\constraint(\epsilon,\prior)=\proj\tconstraint(\epsilon,\tprior)$.

We emphasize that anonymity leads to Theorem \ref{T:OblivEquiv} \textit{indirectly}.
In particular, the prior's symmetry is needed not because it eliminates differences between the \textit{probabilities} of permutations of $\theta$, but because it renders the differences between the sets of databases \textit{adjacent} to those permutations irrelevant for differential privacy.
Consequently, it is unnecessary when there are only two respondents: With $N=2$, 
permuting $\theta$ does not change the set of databases that differ from $\theta$ in one entry.

\begin{prop}[Oblivious Mechanisms with Two Respondents]
With categorical data and two respondents, $\constraint(\epsilon,\prior)=\proj\tconstraint(\epsilon,\tprior)$: For each $\epsilon$-differentially private data publication mechanism $(S,m)$, there exists an $\epsilon$-differentially private oblivious data publication mechanism $(S,\sigma)$ that induces the same distribution of posterior
beliefs about the population statistic.\label{P:OblivEquiv2}
\end{prop}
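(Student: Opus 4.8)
The plan is to construct the oblivious mechanism explicitly. Label the four databases $(0,0),(0,1),(1,0),(1,1)$, with states $0,1,1,2$, and note that $(0,1)$ and $(1,0)$ are each adjacent to both $(0,0)$ and $(1,1)$, while $(0,1)$ and $(1,0)$ themselves are \emph{not} adjacent. Given an $\epsilon$-differentially private $(S,m)$, let $\lambda=\tprior((0,1))/(\tprior((0,1))+\tprior((1,0)))$, which lies in $(0,1)$ since $\supp\tprior=\tspace$ forces $\tprior((0,1)),\tprior((1,0))>0$, and define $(S,\sigma)$ by
\[
\sigma(\cdot\mid 0)=m(\cdot\mid(0,0)),\quad \sigma(\cdot\mid 2)=m(\cdot\mid(1,1)),\quad \sigma(\cdot\mid 1)=\lambda\, m(\cdot\mid(0,1))+(1-\lambda)\,m(\cdot\mid(1,0)).
\]
Each $\sigma(\cdot\mid\omega)$ is a probability measure on $S$ (rows $0$ and $2$ are rows of $m$; row $1$ is a convex combination of rows of $m$), so $(S,\sigma)$ is a legitimate oblivious mechanism. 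The weights $\lambda,1-\lambda$ are the prior probabilities of the two state-$1$ databases conditional on the state being $1$, which is exactly what is needed for posteriors to match.

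Next I would verify that $(S,\sigma)$ induces the same distribution of posterior beliefs about the state as $(S,m)$. Writing $\prior=\proj\tprior$ and grouping databases by state, the probability $(S,m)$ assigns to output $s$ equals
\[
m(s\mid(0,0))\tprior((0,0))\;+\;\bigl(m(s\mid(0,1))\tprior((0,1))+m(s\mid(1,0))\tprior((1,0))\bigr)\;+\;m(s\mid(1,1))\tprior((1,1)).
\]
By the definition of $\lambda$ and the identities $\prior(0)=\tprior((0,0))$, $\prior(1)=\tprior((0,1))+\tprior((1,0))$, $\prior(2)=\tprior((1,1))$, the state-$\omega$ group equals $\sigma(s\mid\omega)\prior(\omega)$; summing, this is the probability $(S,\sigma)$ assigns to $s$. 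Since the state-$\omega$ group is identical under the two mechanisms, so is the posterior probability of state $\omega$ after observing $s$ — that group divided by the total. Hence $m$ and $\sigma$ induce the same joint law of (output, induced posterior over states), and in particular the same distribution of state-posteriors. (By $\epsilon$-privacy of $m$ and the connectedness of the adjacency graph on $\{0,1\}^2$, $m(s\mid\cdot)$ is strictly positive for every $s\in S$, hence so is $\sigma(s\mid\cdot)$, and no outputs need be removed from $S$.)

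The substantive step — and the only genuinely nontrivial one — is to check that $(S,\sigma)$ is still $\epsilon$-differentially private; by Proposition \ref{P:DPObliv} it suffices to bound $|\log(\sigma(s\mid\omega)/\sigma(s\mid\omega-1))|$ for $\omega\in\{1,2\}$ and all $s$. Because $(0,1)$ and $(1,0)$ are each adjacent to $(0,0)$, $\epsilon$-privacy of $m$ gives $e^{-\epsilon}m(s\mid(0,0))\le m(s\mid(0,1)),m(s\mid(1,0))\le e^{\epsilon}m(s\mid(0,0))$, so the $\lambda$-weighted average $\sigma(s\mid1)$ lies in $[e^{-\epsilon}\sigma(s\mid0),e^{\epsilon}\sigma(s\mid0)]$. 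Because $(1,1)$ is adjacent to each of $(0,1)$ and $(1,0)$, we get $e^{-\epsilon}m(s\mid(0,1)),e^{-\epsilon}m(s\mid(1,0))\le m(s\mid(1,1))\le e^{\epsilon}m(s\mid(0,1)),e^{\epsilon}m(s\mid(1,0))$, and the same averaging puts $\sigma(s\mid2)=m(s\mid(1,1))$ in $[e^{-\epsilon}\sigma(s\mid1),e^{\epsilon}\sigma(s\mid1)]$.

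The only place $N=2$ is used is this last step: it relies on \emph{both} state-$1$ databases being adjacent to the (unique) state-$0$ database and to the (unique) state-$2$ database — equivalently, on the set of databases adjacent to $\theta$ being invariant under permuting $\theta$'s entries. For $N\ge3$ this fails, so averaging the state-$\omega$ rows of $m$ can break a privacy bound against some neighboring-state row unless $\tprior$ is symmetric; that is exactly the role anonymity plays in Theorem \ref{T:OblivEquiv}, and why Example \ref{Ex:OblivCounter} can exist. Equivalently, the second-paragraph computation shows $\constraint(\epsilon,\prior)=\proj\tconstraint(\epsilon,\tprior)$ even for asymmetric $\tprior$, so the result also follows from Lemma \ref{L:OblivProj}; I prefer the explicit construction because it leaves the output set unchanged, as the statement requires.
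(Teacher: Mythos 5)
Your proof is correct, and it takes a genuinely different route from the paper's. The paper argues at the level of posterior beliefs: it shows directly that for any $\tbelief\in\tconstraint(\epsilon,\tprior)$, the projection $\proj\tbelief$ lies in $\constraint(\epsilon,\prior)$, concludes $\proj\tconstraint(\epsilon,\tprior)=\constraint(\epsilon,\prior)$, and then invokes Lemma \ref{L:OblivProj} (which in turn routes through Proposition \ref{P:DPObliv} and Lemma \ref{L:DP}) to get back to a statement about mechanisms. You instead construct the oblivious mechanism explicitly, averaging the two state-$1$ rows of $m$ with the conditional-prior weights $\lambda,1-\lambda$, and check differential privacy of the constructed $\sigma$ directly. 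The core arithmetic is identical — in both cases the crux is that the two adjacency inequalities (for $(0,1)$ and $(1,0)$ against their common neighbors $(0,0)$ and $(1,1)$) can be combined with the same weights to yield a single inequality for the adjacent states — but the framing differs. Your construction has two modest advantages: it is self-contained (no appeal to \ref{L:OblivProj}'s machinery), and it keeps the output set $S$ literally unchanged, which the paper's indirect argument does not explicitly deliver even though the statement $(S,\sigma)$ suggests it. You also correctly identify why the argument breaks for $N\ge3$ (the sets of databases adjacent to permutations of $\theta$ differ), which matches the paper's own discussion. One minor imprecision in your closing aside: it is your third (privacy-checking) paragraph, not the second (law-of-total-probability) one, that yields $\constraint(\epsilon,\prior)=\proj\tconstraint(\epsilon,\tprior)$ as a corollary; the second paragraph shows only that your specific $\sigma$ reproduces $m$'s state-posteriors, which is true for any prior but says nothing about privacy.
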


\subsection{Discussion}

In general, the key reason for the stark contrast between the conclusions of Theorems \ref{T:OblivMag} and \ref{T:OblivEquiv} is that with categorical data, databases with the same population statistic are identical up to permutation, whereas with magnitude data, they are not. 
This suggests that in settings with magnitude data, even though we cannot restrict attention to mechanisms that depend only on the population statistic, we may be able to restrict attention to mechanisms that are invariant under permutation.
Formally, we say that a mechanism is \textit{permutation-invariant} if $m(s|\theta)=m(s|\theta')$ whenever $\theta$ is a permutation of $\theta'$. 
In the online appendix, we characterize the differential privacy criterion for permutation-invariant mechanisms, and prove a result (Proposition \ref{P:OblivEquivPerm}) showing that they are without loss whenever respondents are anonymous.

\begin{prop}[Permutation-Invariant Mechanisms]
If respondents are anonymous, then for each $\epsilon$-differentially private data publication mechanism, there exists a permutation-invariant data publication mechanism that is also $\epsilon$-differentially private and induces the same distribution of posterior
beliefs about the population statistic.\label{P:OblivEquivPerm}
\end{prop}

The conclusions of 
this section
are markedly different from those of \cite{GhoshRoughgardenSunararajan:Universally:SIAM:2012}, who consider a model with categorical data. When the decision maker seeks to minimize a function of the distance between their action and the population statistic,\footnote{That is, when $u(a,\omega)=\tilde{u}(|a-\omega|,\omega)$ for $\tilde{u}$ nondecreasing in $|a-\omega|$.} they show that oblivious mechanisms are without loss if the decision maker has a prior belief $\prior$ about the population statistic $\omega$, but is ambiguity-averse, in the maxmin sense of \cite{gilboa1989maxmin}, and finds all beliefs $\tprior$ about the database $\theta$ that are consistent with $\prior$ (i.e., with $\proj\tprior=\prior$) to be plausible. 
In contrast, 
when decision makers are expected utility maximizers, we show that regardless of their preferences,
the desirability of restricting attention to oblivious mechanisms
depends both on the kind of data that the designer is publishing information about, and on whether certain respondents are \textit{a priori} more likely than others to have the characteristic in question.
We emphasize that such symmetry is crucial for our categorical data result (Theorem \ref{T:OblivEquiv}):
In Appendix \ref{S:OblivCounter}, we give an example showing that
when respondents are not anonymous, there may be a non-oblivious mechanism that outperforms every oblivious $\epsilon$-differentially private mechanism.

\cite{BrennerNissim:Impossibility:SIAM:2014} also provide negative results 
showing that optimality results for environments with categorical data do not extend to the case of magnitude data. In particular, they show that, in contrast to the \cite{GhoshRoughgardenSunararajan:Universally:SIAM:2012} result for categorical data, there is no mechanism, oblivious or otherwise, that is always optimal whenever data users wish to minimize some function of the distance between the true population statistic and (postprocessed) output. Theorem \ref{T:OblivMag} makes a different contribution: 
There no \textit{single} oblivious mechanism that is always optimal for the designer, regardless of the prior or the decision maker's preferences. In fact, is not always optimal for the designer to restrict attention to \textit{any} oblivious mechanisms .

Finally, we note that with categorical data and anonymous respondents,
 Theorem \ref{T:OblivEquiv} allows us to reduce Proposition \ref{P:SolutionT} to a set of statements about objects in the lower-dimensional space $\Delta(\Omega)$ of posteriors about the population statistic, and sharpen its characterization of the designer's problem.
 For any $K\subseteq\Delta(\Omega)$, denote the restriction of $\vf$ to $K$ as  $\vf_K:K\to\mathbb{R}\cup\{-\infty\}$, and the concavification of $\vf_K$ as $\ccvf_K:K\to\mathbb{R}$.
Then we have the following corollary to Proposition \ref{P:SolutionT} and Theorem \ref{T:OblivEquiv}.

\begin{cor}[Characterization of Optimal Oblivious Mechanisms]\label{C:Solution}
Suppose that respondents are anonymous and data is categorical.
\begin{enumerate}[i.]
\item The maximized value of the designer's problem (\ref{E:OriginalProblem}) is $\ccvf_{\constraint(\epsilon,\prior)}(\prior)$.
\label{I:Solution_Max}
\item 
There is an oblivious mechanism which solves the designer's problem (\ref{E:OriginalProblem}) and induces a distribution of posteriors about the population statistic $\exper^*\in\Delta(\Delta(\Omega))$ such that
\label{I:Solution_ArgMax}
\begin{enumerate}
\item Each $\belief\in\supp\exper^*$ is an extreme point of $\constraint(\epsilon,\prior)$;
\item 
Each $\belief\in\supp\exper^*$ achieves the privacy bound \eqref{E:DP_posterior} at each value of the population statistic:
 $\left|\log\left(\frac{\belief(\omega)}{\belief(\omega-1)}\right)-\log\left(\frac{\prior(\omega)}{\prior(\omega-1)}\right)\right|=\epsilon$ 
for all $\belief\in\supp\exper^*$ and $\omega\in\Omega\setminus\{0\}$;\label{I:Solution_BoundAttained}
\item 
The support of $\exper^*$ is a linearly independent set of vectors in $\mathbb{R}^{N+1}$; and\footnote{Recall that with categorical data ($T=1$), we have $|\Omega|=|\{0,1,\ldots,N\}|=N+1$.}
\label{I:Solution_Independence}
\item $\exper^*$ is the unique Bayes-plausible distribution of posteriors with support $\supp\exper^*$.\label{I:Solution_UniqueSupport}
\end{enumerate}
\end{enumerate}
\end{cor}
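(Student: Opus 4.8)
The plan is to read off Corollary \ref{C:Solution} from Theorem \ref{T:SolutionT}, using Theorem \ref{T:OblivEquiv} to transfer everything from the simplex $\Delta(\tspace)$ of posteriors over databases to the lower-dimensional simplex $\Delta(\Omega)$ of posteriors over the state. Concretely, Corollary \ref{C:DesignProblem} (itself immediate from Theorem \ref{T:OblivEquiv}) says that, under anonymity, an oblivious mechanism solves \eqref{E:OriginalProblem} if and only if it induces a distribution of state-posteriors solving \eqref{E:DesignProblem}; and by Proposition \ref{P:DPObliv}\eqref{I:DPObliv:posterior}, the distributions of state-posteriors induced by oblivious $\epsilon$-differentially private mechanisms are exactly the Bayes-plausible distributions supported on $\constraint(\epsilon,\prior)$. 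So it suffices to exhibit a solution $\exper^*$ of \eqref{E:DesignProblem} with the four stated properties --- i.e.\ to run the proof of Theorem \ref{T:SolutionT} with $\Delta(\tspace)$, $\tvf$, $\tconstraint(\epsilon,\tprior)$ replaced by $\Delta(\Omega)$, $\vf$, $\constraint(\epsilon,\prior)$. Every step has an exact analog; I describe them and flag the single place where the conclusion is genuinely sharper.

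The one input that is not purely formal is the structural description of $\constraint(\epsilon,\prior)$: it is a compact convex polyhedron that does not meet the edges of $\Delta(\Omega)$ and contains $\prior$ in its relative interior --- the $\Delta(\Omega)$-analog of Lemma \ref{L:tconstraint}. Polyhedrality and convexity hold because each constraint in \eqref{E:DP_posterior} becomes, once the logarithms are cleared using $\belief(\omega-1)>0$, a pair of linear inequalities in $\belief$; boundedness is inherited from the simplex; a $\belief$ with $\belief(\omega-1)=0<\belief(\omega)$ violates \eqref{E:DP_posterior}, so $\constraint(\epsilon,\prior)$ is disjoint from the simplex edges; and $\prior$ satisfies every constraint strictly (each reads $|0|\le\epsilon$ there), so it lies in the relative interior. (If one prefers, these facts also follow from Lemma \ref{L:tconstraint} via the identity $\constraint(\epsilon,\prior)=\proj\tconstraint(\epsilon,\tprior)$ established in the proof of Theorem \ref{T:OblivEquiv}, since a linear image of a compact polyhedron is a compact polyhedron --- though one then has to argue the image is not a lower-dimensional slice.) Together with continuity of $\vf$ on $\Delta(\Omega)$ --- which follows from compactness of $A$ and continuity of $\util$ by the maximum theorem --- this guarantees that \eqref{E:DesignProblem} attains its maximum and that the \cite{KamenicaBayesianPersuasion2011} concavification machinery applies.

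Given this, the four claims follow as in Theorem \ref{T:SolutionT}. Part \eqref{I:Solution_Max} is the \cite{KamenicaBayesianPersuasion2011} value identity on the convex set $\constraint(\epsilon,\prior)$: the value of \eqref{E:DesignProblem} equals the value at $\prior$ of the smallest concave function on $\constraint(\epsilon,\prior)$ dominating $\vf$, namely $\ccvf_{\constraint(\epsilon,\prior)}(\prior)$. For the extreme-point property, if an optimal $\exper$ put positive mass on some $\belief\notin\ext(\constraint(\epsilon,\prior))$, replacing that mass by the convex-combination-of-extreme-points representation of $\belief$ preserves Bayes-plausibility and, since $\vf$ is convex (equivalently, by \citeauthor{blackwell1953equivalent}'s theorem, since the replacement is a Blackwell improvement), does not lower $E_\exper\vf$; so some optimal $\exper^*$ is supported on $\ext(\constraint(\epsilon,\prior))$. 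For the binding-constraint property \eqref{I:Solution_BoundAttained}: a vertex of the polyhedron $\constraint(\epsilon,\prior)\subseteq\mathbb{R}^{N+1}$ is a point where $N+1$ linearly independent defining constraints are active; the adding-up equality is always one of them, and no nonnegativity constraint is active because $\constraint(\epsilon,\prior)$ avoids the simplex edges, so the remaining $N$ must be privacy inequalities from \eqref{E:DP_posterior}. But \eqref{E:DP_posterior} consists of exactly $N$ two-sided constraints, one for each $\omega\in\{1,\dots,N\}$, and at most one side of each can be active at a point, so all $N$ are active --- i.e.\ the bound is attained at every $\omega\in\Omega\setminus\{0\}$. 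This is exactly where the oblivious case is sharper than Theorem \ref{T:SolutionT}\eqref{I:SolutionT_BoundAttained}: the number of adjacent pairs of states is $N=\dim\Delta(\Omega)$, whereas the number of adjacent pairs of databases exceeds $2^N-1=\dim\Delta(\tspace)$, so in the general case only $2^N-1$ of the bounds need bind. Finally, Carath\'eodory's theorem (in the form used by \citet{lipnowski2017simplifying}) lets us take $\exper^*$ supported on an affinely independent set; any affinely independent subset of the simplex $\Delta(\Omega)\subseteq\mathbb{R}^{N+1}$ is linearly independent (the analog of Lemma \ref{L:independenceT}), giving \eqref{I:Solution_Independence}; and a linearly independent set of posteriors admits a unique convex representation of $\prior$, hence supports a unique Bayes-plausible distribution, giving \eqref{I:Solution_UniqueSupport}.

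The main obstacle is not conceptual --- the corollary is essentially bookkeeping layered on Theorems \ref{T:SolutionT} and \ref{T:OblivEquiv} --- but it does require care in two places: establishing (or importing) the structural properties of $\constraint(\epsilon,\prior)$ cleanly, and getting the counting in \eqref{I:Solution_BoundAttained} exactly right, so that the $N$ active privacy inequalities are forced to come one from each of the $N$ state-adjacency pairs. I would route the first through Proposition \ref{P:DPObliv} rather than through the projection identity, precisely to avoid having to verify that $\proj$ does not collapse $\constraint(\epsilon,\prior)$ onto a lower-dimensional face; with that choice, the remainder is a transcription of the proof of Theorem \ref{T:SolutionT}.
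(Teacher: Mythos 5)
Your proposal is correct and follows essentially the same route as the paper: the paper states Corollaries \ref{C:constraint}, \ref{C:extremal}, and \ref{C:independence} (each proved by noting it "follows identically" to the corresponding lemma for $\tconstraint$) precisely so that the proof of Theorem \ref{T:SolutionT} can be transcribed verbatim to $\Delta(\Omega)$, which, combined with Corollary \ref{C:DesignProblem}, is exactly the bookkeeping you describe. You also correctly pinpoint the one genuine sharpening — that there are exactly $N$ adjacent-state pairs, matching $\dim\Delta(\Omega)$, so every privacy bound must bind at a vertex, whereas Theorem \ref{T:SolutionT}\eqref{I:SolutionT_BoundAttained} only forces $2^N-1$ of the many more database-adjacency bounds to bind — and your preference for deriving the structure of $\constraint(\epsilon,\prior)$ directly from Proposition \ref{P:DPObliv} rather than via the projection identity is also the paper's choice.
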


\section{Supermodular Payoffs and the Geometric Mechanism}
\label{S:SPM}

In a large class of applications where data is categorical and respondents are anonymous,
we can move beyond Corollary \ref{C:Solution}'s characterization.
In particular, we focus on categorical data settings where the decision maker's set of actions is totally ordered --- and hence can be represented as a subset $A\subseteq\mathbb{R}$ of the real numbers  --- and he views higher actions and higher population statistics as complementary, in the sense that his payoff function is supermodular:
\[u(a',\omega')-u(a,\omega')\geq u(a',\omega)-u(a,\omega)\text{ for each }a'>a\text{ and }\omega'>\omega.
\]

In this section, we show that these features --- categorical data, anonymous respondents, ordered actions, and supermodular payoffs --- ensure that the well-known \textit{$\epsilon$-geometric mechanism}   \citep{GhoshRoughgardenSunararajan:Universally:SIAM:2012} described in the introduction always solves the designer's problem (Theorem \ref{T:Geometric}).%
\footnote{As described in the introduction, the $\epsilon$-geometric mechanism adds two-sided geometrically distributed noise to the true population statistic and publishes the result. Hence, it can be formally represented as $(\mathbb{Z},\gmech)$, where
\[\gmech(s|\omega)=\left(\frac{1-e^{-\epsilon}}{1+e^{-\epsilon}}\right)e^{-\epsilon|s-\omega|}.\]
While the $\epsilon$-geometric mechanism produces more outputs than there are states --- which, by Corollary \ref{C:Solution} (\ref{I:Solution_Independence}), is unnecessary --- it induces the same distribution of posterior beliefs about the population statistic as the \textit{truncated} $\epsilon$-geometric mechanism  $(\Omega,\hat{\sigma}^g_\epsilon)$ \citep{GhoshRoughgardenSunararajan:Universally:SIAM:2012} with signal distribution
\[\hat{\sigma}^g_\epsilon(s|\omega)=\left\{\begin{array}{rl}\left(\frac{1-e^{-\epsilon}}{1+e^{-\epsilon}}\right)e^{-\epsilon|s-\omega|}, & 0<s<N,\\
\left(\frac{1}{1+e^{-\epsilon}}\right)e^{-\epsilon|s-\omega|}, & s\in\{0,N\}.
\end{array}\right.\]
Hence, if a data provider wants to use as few outputs as possible, or just avoid publishing negative outputs, it can truncate the geometrically-distributed noise it adds without changing the mechanism's value to decision makers.
}
Because of the nature of the information structures induced by $\epsilon$-differentially private mechanisms, we cannot do so by appealing to dominance results from the literature (e.g., \cite{quah2009comparative}; \cite{athey2018value}). 
Instead, we introduce a new order on information structures --- the Uniform-Peaked Relative Risk Order --- and show that the $\epsilon$-geometric mechanism is UPRR-dominant within the class of $\epsilon$-differentially private data publication mechanisms. We then complete the argument by showing that a UPRR-higher information structure is superior in any supermodular decision problem with actions on the real line (Theorem \ref{T:ProbDominance}).

\subsection{Optimality of the Geometric Mechanism}

To explain why the geometric mechanism is optimal in these settings, we start by returning to the categorical data setting of Figures \ref{F:DPpostT}-\ref{F:projDP}, with $N=2$, symmetric prior 
$\tprior=
(
\frac{1}{3},\frac{1}{6},
\frac{1}{6},\frac{1}{3})$, and hence $\prior=\proj\tprior=\left[\frac{1}{3}\ \frac{1}{3}\ \frac{1}{3}\right]'$.
There, our results from Section \ref{S:Oblivious} tell us that when designing a publication mechanism,
\begin{itemize}
\item we need only consider oblivious mechanisms (Theorem \ref{T:OblivEquiv});
    \item we can focus on the distribution of posteriors \textit{about the population statistic} that the mechanism induces (Corollary \ref{C:DesignProblem}); and
    \item it is without loss to consider such distributions supported by a linearly independent set of extreme points of $\constraint(\epsilon,\prior)$, and when we do, the support we choose pins down the distribution (Corollary \ref{C:Solution}).
\end{itemize}
Consequently, the designer 
only ever needs to consider two 
mechanisms in this setting: one oblivious mechanism that induces the set of posteriors in the left panel of Figure \ref{F:twodists}, and another that induces the set of posteriors in the right panel of Figure \ref{F:twodists}.

\begin{figure}[h!]
    \centering
    \begin{tabular}{cc}
\includegraphics[scale=.7,valign=t]{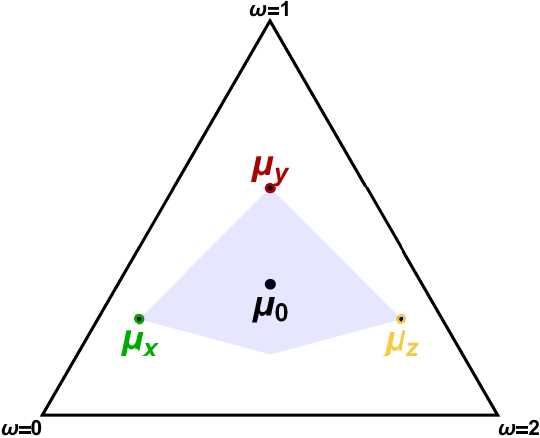} & 
\includegraphics[scale=.7,valign=t]{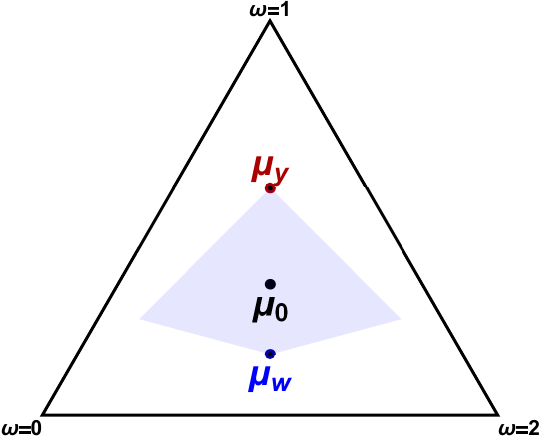}
\end{tabular}
    \caption{\textbf{Possible solutions to the designer's problem \eqref{E:DesignProblem} in Figures \ref{F:DPpostT}-\ref{F:projDP}.} In each panel, the extreme points of $\constraint(\epsilon,\prior)$ are labeled $\belief_w,\belief_x,\belief_y$, and $\belief_z$. Left panel: One possible solution is the unique Bayes-plausible distribution of posteriors about the population statistic with support $\{\belief_x,\belief_y,\belief_z\}$. Right panel: The alternative is the unique Bayes-plausible distribution with support $\{\belief_w,\belief_y\}$.}
    \label{F:twodists}
\end{figure}

These distributions focus on providing information about two different aspects of the population statistic $\omega$: the one in the left panel provides information about whether $\omega$ is high or low, while the one in the right panel provides information about whether $\omega$ is central or extreme. 
Which distribution is better for the decision maker depends on which kind of information is more relevant to his choice of action.\footnote{Or, more precisely, more relevant to his marginal payoff from choosing one action over another.}

For the class of problems that we focus on in this section --- those where actions are totally ordered, and payoffs are supermodular --- the answer is straightforward. When the decision maker's actions are totally ordered, they can be ranked from lowest to highest. When, in addition, his payoff function is supermodular, rightward shifts in his belief about the population statistic make higher actions more attractive relative to lower ones. Consequently, it is more valuable for him to know whether the population statistic is high or low than whether it is central or extreme --- and so the distribution in the left panel is optimal.

This intuition generalizes to categorical data settings with more than two respondents.
Observe that 
the beliefs in the left panel of Figure \ref{F:twodists} are precisely those that satisfy
the upper privacy bound $\log\left(\frac{\belief(\omega)}{\belief(\omega-1)}\right)-\log\left(\frac{\prior(\omega)}{\prior(\omega-1)}\right)=\epsilon$ for each value of $\omega$ below some (posterior-specific) cutoff $x$, and the lower privacy bound $\log\left(\frac{\belief(\omega)}{\belief(\omega-1)}\right)-\log\left(\frac{\prior(\omega)}{\prior(\omega-1)}\right)=-\epsilon$ at each value of $\omega$ above the cutoff. 
That is, they are precisely those posteriors about the population statistic that achieve the upper privacy bound on \textit{lower sets} of the form $(\Omega\setminus\{0\})\cap (-\infty,x]$.
As we show in Lemma \ref{L:Unimodal}, this property characterizes the posterior beliefs about $\omega$ that are induced by
the $\epsilon$-geometric mechanism.

\begin{lem}[Posteriors Produced by the Geometric Mechanism]\label{L:Unimodal}

The $\epsilon$-geometric mechanism induces precisely those  $\epsilon$-differentially private posteriors 
that
satisfy the upper privacy bound for $0<\omega\leq x$ and the lower privacy bound for $\omega>x$ for some $x>0$. That is, letting $\gpost$ denote the distribution of posteriors about the population statistic that is induced by the $\epsilon$-geometric mechanism,
\[\supp\gpost=\left\{\mu\in\ext(\constraint(\epsilon,\prior))\left|\
\frac{\belief(\omega)/\belief(\omega-1)}{\prior(\omega)/\prior(\omega-1)}={\Bigl\lbrace }\begin{array}{rl}
   e^\epsilon,  &\omega\leq x,  \\[-1mm]
    e^{-\epsilon}, &\omega>x, 
\end{array}\right.\text{ for some }x\in\mathbb{Z}\right\}.
\]
\end{lem}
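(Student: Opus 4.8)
The plan is to compute, via Bayes' rule, the posteriors about the state that the $\epsilon$-geometric mechanism induces, to read off which privacy bounds each of them attains, and then to match the resulting list against a description of the extreme points of $\constraint(\epsilon,\prior)$.

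\textbf{Step 1.} First I would observe that if output $s\in\mathbb{Z}$ is realized, Bayes' rule gives for the induced posterior $\belief_s(\omega)\propto\gmech(s|\omega)\,\prior(\omega)\propto e^{-\epsilon|s-\omega|}\prior(\omega)$ for $\omega\in\Omega$. Since $\prior(\omega)>0$ for every $\omega$ (being the projection of the full-support prior $\tprior$), the normalizing constant $\sum_{\omega\in\Omega}e^{-\epsilon|s-\omega|}\prior(\omega)$ is strictly positive, so every output occurs with positive probability and $\supp\gpost=\{\belief_s:s\in\mathbb{Z}\}$. Moreover, for $s\le 0$ one has $|s-\omega|=\omega-s$ for all $\omega\in\Omega$, so $\belief_s(\omega)\propto e^{-\epsilon\omega}\prior(\omega)$ does not depend on $s$; symmetrically, all $s\ge N$ yield one common posterior. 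Hence $\supp\gpost=\{\belief_0,\belief_1,\dots,\belief_N\}$ has at most $N+1$ elements.

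\textbf{Step 2.} Next I would compute, for each $\omega\in\{1,\dots,N\}$,
\[
\frac{\belief_s(\omega)/\belief_s(\omega-1)}{\prior(\omega)/\prior(\omega-1)}=\frac{e^{-\epsilon|s-\omega|}}{e^{-\epsilon|s-(\omega-1)|}}=e^{-\epsilon\left(|s-\omega|-|s-(\omega-1)|\right)},
\]
and since $|s-\omega|-|s-(\omega-1)|$ equals $-1$ when $\omega\le s$ and $+1$ when $\omega>s$, this ratio is $e^{\epsilon}$ for $\omega\le s$ and $e^{-\epsilon}$ for $\omega>s$. Thus each $\belief_s$ satisfies \eqref{E:DP_posterior} --- so $\belief_s\in\constraint(\epsilon,\prior)$, by Proposition \ref{P:DPObliv} --- attains the privacy bound at \emph{every} state, and attains the \emph{upper} bound precisely on the lower set $\{\omega\in\Omega\setminus\{0\}:\omega\le s\}$ (with cutoff $x=s$ when $0\le s\le N$, $x=0$ when $s\le 0$, and $x=N$ when $s\ge N$).

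\textbf{Step 3.} It then remains to match $\{\belief_0,\dots,\belief_N\}$ with the right-hand side of the claimed identity. I would establish that (i) any $\belief\in\Delta(\Omega)$ attaining the privacy bound at every state is an extreme point of $\constraint(\epsilon,\prior)$, and (ii) conversely every extreme point of $\constraint(\epsilon,\prior)$ attains the bound at every state --- so that $\ext(\constraint(\epsilon,\prior))$ is in bijection with the subsets of $\Omega\setminus\{0\}$, a vertex being recorded by the set of states at which it attains the upper bound. For (i): by Lemma \ref{L:tconstraint} (whose argument applies equally to $\constraint(\epsilon,\prior)$), this set is a polytope contained in the relative interior of $\Delta(\Omega)$, so every $\belief$ in it has positive coordinates and the constraint in \eqref{E:DP_posterior} at state $\omega$ is equivalent to the pair of \emph{linear} inequalities $e^{-\epsilon}\tfrac{\prior(\omega)}{\prior(\omega-1)}\belief(\omega-1)\le\belief(\omega)\le e^{\epsilon}\tfrac{\prior(\omega)}{\prior(\omega-1)}\belief(\omega-1)$, both valid on $\constraint(\epsilon,\prior)$. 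If $\belief$ attains the bound at every state it satisfies, with equality, one of these $N$ linear constraints from each pair; writing $\belief=\tfrac12(\belief'+\belief'')$ with $\belief',\belief''\in\constraint(\epsilon,\prior)$, each such constraint --- valid on $\constraint(\epsilon,\prior)$ and tight at $\belief$ --- is then tight at $\belief'$ and $\belief''$ too, so $\belief'$ and $\belief''$ attain the same bounds as $\belief$ at every state; since the set of states attaining the upper bound fixes all ratios $\belief(\omega)/\belief(\omega-1)$, hence $\belief$ after normalization, $\belief'=\belief''=\belief$, i.e.\ $\belief$ is extreme. For (ii): if $\log\frac{\belief(\omega)}{\belief(\omega-1)}-\log\frac{\prior(\omega)}{\prior(\omega-1)}$ lies strictly inside $(-\epsilon,\epsilon)$ for some $\omega$, then perturbing only that one likelihood ratio moves $\belief$ along a path inside $\constraint(\epsilon,\prior)$ which, after a suitable reparametrization, is a nondegenerate line segment through $\belief$, so $\belief$ is not extreme. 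With the bijection established, the right-hand side of the identity is the image of the lower sets $\{1,\dots,x\}$, $x\in\{0,\dots,N\}$, which by Step 2 is precisely $\{\belief_0,\dots,\belief_N\}=\supp\gpost$.

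\textbf{Main obstacle.} Steps 1--2 are routine once the posteriors are written out; the hard part will be the extreme-point characterization in Step 3. Although $\constraint(\epsilon,\prior)$ becomes a product of intervals under the change of variables $\belief\mapsto\bigl(\log\tfrac{\belief(\omega)}{\belief(\omega-1)}-\log\tfrac{\prior(\omega)}{\prior(\omega-1)}\bigr)_{\omega=1}^{N}$, that map is not affine, so extreme points cannot simply be transported across it; one must instead argue through the linearized supporting inequalities (for (i)) and the explicit one-parameter segments (for (ii)). A minor nuisance is the bookkeeping by which all outputs $s\le 0$ (resp.\ $s\ge N$) collapse onto the single corner posterior $\belief_0$ (resp.\ $\belief_N$).
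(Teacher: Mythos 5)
Your Steps 1–2 are identical to the paper's proof: compute the posterior likelihood ratio from Bayes' rule, observe that $\log\bigl(\gmech(s|\omega)/\gmech(s|\omega-1)\bigr) = -\epsilon|s-\omega| + \epsilon|s-(\omega-1)|$ equals $+\epsilon$ for $\omega \le s$ and $-\epsilon$ for $\omega > s$, so each induced posterior attains the upper bound on a lower set of states and the lower bound elsewhere. The paper then simply says ``the claim follows,'' relying on its earlier Corollary \ref{C:extremal}(\ref{I:Vertices}), which already establishes that $\ext(\constraint(\epsilon,\prior))$ is precisely the set of posteriors attaining the privacy bound at every state $\omega\in\Omega\setminus\{0\}$; your Step 3 re-derives that characterization from scratch rather than citing it, which is redundant but not a different route.

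One small imprecision in your Step 3(ii): ``perturbing only that one likelihood ratio moves $\belief$ along a path \ldots which, after a suitable reparametrization, is a nondegenerate line segment'' is not literally correct, since reparametrizing a curve cannot turn it into a line segment, and the change-of-variables $\belief\mapsto\bigl(\log\tfrac{\belief(\omega)}{\belief(\omega-1)}\bigr)_{\omega}$ is not affine. The fix is to exhibit an explicit linear direction: with the constraint at $\omega$ slack, take $d\in\mathbb{R}^{N+1}$ with $d(\omega')=-\belief(\omega')$ for $\omega'<\omega$ and $d(\omega')=c\,\belief(\omega')$ for $\omega'\ge\omega$, where $c=\sum_{\omega'<\omega}\belief(\omega')/\sum_{\omega'\ge\omega}\belief(\omega')$, so that $\textbf{1}\cdot d=0$ and $\belief+td$ keeps every ratio except the one at $\omega$ fixed; for small $|t|$ this is a genuine segment inside $\constraint(\epsilon,\prior)$, so $\belief$ is not extreme. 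Alternatively, the paper's route (and the cleaner one) is the standard basic-feasible-solution criterion: if the pair of constraints at state $\omega$ is slack, fewer than $N+1$ linearly independent defining inequalities bind, so $\belief$ cannot be a vertex.
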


To understand Lemma \ref{L:Unimodal}, observe that if a decision maker's belief about the population statistic after viewing the output $x$ is $\belief$, then for any $\omega>0$, we have 
\begin{align*}
    \log\left(\frac{\belief(\omega)}{\belief(\omega-1)}\right)-\log\left(\frac{\prior(\omega)}{\prior(\omega-1)}\right)=\log\left(\frac{\sigma_\epsilon^g(x|\omega)}{\sigma_\epsilon^g(x|\omega-1)}\right)=\epsilon(|x-(\omega-1)|-|x-\omega|).
\end{align*}
If $\omega\leq x$, then this simplifies to $\epsilon$, and the upper privacy bound is satisfied; if $\omega>x$, then it simplifies to $-\epsilon$, and the lower privacy bound is satisfied.\footnote{Note that this means the posteriors induced by outputs $x\leq 0$ achieve all $N-1$ lower privacy bounds, and so (since $\Delta(\Omega)$ has dimension $N-1$) must be identical; likewise, each output $x\geq N$ induces the same posterior, which achieves all $N-1$ upper privacy bounds.
}
Since they achieve the upper privacy bound for states $\omega\leq x$ and the lower privacy bound for states $\omega>x$, the posteriors induced by each output $x$ of the geometric mechanism are collapsed around population statistic $x$ (or the population statistic closest to $x$, if $x\notin[0,N]$) 
as much as the differential privacy constraint will allow. 

This is intuitively optimal when the decision maker wants to minimize a nonincreasing function of the distance between their action and the true population statistic, as in the ``postprocessing'' exercise of \cite{GhoshRoughgardenSunararajan:Universally:SIAM:2012}, since it allows the decision maker to be more certain that his chosen action is near the true value of $\omega$. 
But equally intuitively, it is suboptimal when the decision maker is more interested in whether the population statistic is central or extreme --- in which case mechanisms which induce the distribution in the right panel of Figure \ref{F:twodists} (or an $N>2$ analogue) are better.

Theorem \ref{T:Geometric} shows that the crucial feature that leads to the geometric mechanism's optimality is not that the decision maker wants to \textit{match} the population statistic (as in \cite{GhoshRoughgardenSunararajan:Universally:SIAM:2012}), but rather that he wants to take \textit{higher actions when his belief is more skewed toward higher population statistics}. In particular, whenever the decision maker's actions can be ordered from lowest to highest (e.g., in the introduction, from fewer to more buses), and the decision maker's payoffs are supermodular --- that is, his marginal benefit of taking a higher action is increasing in the population statistic
---
the $\epsilon$-geometric mechanism outperforms any other $\epsilon$-differentially private oblivious mechanism.

\begin{thm}[Optimality of the Geometric Mechanism for Supermodular Problems]\label{T:Geometric}
Suppose that respondents are anonymous.
If the decision maker's actions $A$ are a compact set of real numbers, and the decision maker's payoff function $u$ is supermodular, 
then the $\epsilon$-geometric mechanism solves the designer's problem (\ref{E:OriginalProblem}).
\end{thm}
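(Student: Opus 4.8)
The plan is to deduce Theorem~\ref{T:Geometric} by chaining three facts, two of which require the bulk of the work. First, because respondents are anonymous, Theorem~\ref{T:OblivEquiv} (equivalently Corollary~\ref{C:Solution}) says oblivious mechanisms are without loss: the value of the designer's problem~\eqref{E:OriginalProblem} equals that of~\eqref{E:DesignProblem} and is attained by an oblivious mechanism. The $\epsilon$-geometric mechanism $(\mathbb{Z},\gmech)$ is itself oblivious and $\epsilon$-differentially private — by Proposition~\ref{P:DPObliv}(\ref{I:DPObliv:DPObliv}), since $|\log(\gmech(s\mid\omega)/\gmech(s\mid\omega-1))|=\epsilon\bigl||s-\omega+1|-|s-\omega|\bigr|\le\epsilon$ — so $\gpost$ is a Bayes-plausible distribution supported on $\constraint(\epsilon,\prior)$. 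Hence it suffices to show $E_{\gpost}[\vf(\belief)]\ge E_{\exper}[\vf(\belief)]$ for every Bayes-plausible $\exper\in\Delta(\constraint(\epsilon,\prior))$, whenever $A\subseteq\mathbb{R}$ is compact and $u$ is supermodular.

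Second, I would introduce the Uniform-Peaked Relative Risk order $\succeq_{\mathrm{UPRR}}$ on distributions of posteriors about the ordered state, formalizing the idea that one information structure dominates another when its posteriors place, on average, relatively more mass near a peak — where ``relative'' refers to the likelihood-ratio profile $\omega\mapsto\belief(\omega)/\prior(\omega)$ — and then prove Lemma~\ref{L:Geometric_Dominates}: $\gpost\succeq_{\mathrm{UPRR}}\exper$ for every Bayes-plausible $\exper$ supported on $\constraint(\epsilon,\prior)$. The key input is Lemma~\ref{L:Unimodal}: each posterior in $\supp\gpost$ attains the upper privacy bound on a lower set $\{1,\dots,x\}$ and the lower bound above it, so its relative-risk profile is $e^{-\epsilon|\cdot-x|}$ up to normalization — i.e., as peaked around $x$ as the constraint~\eqref{E:DP_posterior} permits. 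Any feasible posterior has a relative-risk profile whose one-step log-increments lie in $[-\epsilon,\epsilon]$ and is therefore ``flatter''; the work is to show that the particular Bayes-plausible way $\gpost$ mixes these maximally peaked extreme points of $\constraint(\epsilon,\prior)$ dominates every other feasible mixture. I would do this with a coupling/majorization argument: starting from $\gpost$, every competing $\exper$ can be reached by UPRR-decreasing ``de-peaking'' moves, using that $\gpost$ is the unique Bayes-plausible distribution on the unimodal extreme points (Lemma~\ref{L:Unimodal} with Corollary~\ref{C:Solution}(\ref{I:Solution_UniqueSupport})).

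Third, I would prove Theorem~\ref{T:ProbDominance}: $\exper'\succeq_{\mathrm{UPRR}}\exper$ implies $E_{\exper'}[\vf]\ge E_{\exper}[\vf]$ for every $\vf$ arising from a supermodular payoff $u$ with $A\subseteq\mathbb{R}$ compact. Writing $\vf(\belief)=\max_{a}\sum_\omega\belief(\omega)u(a,\omega)$, supermodularity makes $u(a',\cdot)-u(a,\cdot)$ nondecreasing for $a'>a$, so (Topkis) the optimal action is monotone and the cells of $\Delta(\Omega)$ on which each action is optimal are ordered along the direction of increasing weight on high states. The intuition to formalize is the one in the text: under supermodularity the opportunity cost of a given action grows as the state moves away from that action's optimal region, so concentrating posterior mass near a peak lowers these ex ante costs. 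I would make this precise via a summation-by-parts argument on cumulative differences of the normalized relative-risk profiles, signing the resulting terms with the single-crossing property of $u(a',\cdot)-u(a,\cdot)$. Standard comparison-of-experiments tools (Blackwell, Lehmann accuracy, Quah--Strulovici, Athey--Levin) are unavailable here because the relevant posteriors are neither Blackwell-comparable nor MLR-ordered, so this step must be built from scratch.

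Combining the three steps: Lemma~\ref{L:Geometric_Dominates} gives $\gpost\succeq_{\mathrm{UPRR}}\exper$ for every feasible $\exper$, Theorem~\ref{T:ProbDominance} upgrades this to $E_{\gpost}[\vf]\ge E_{\exper}[\vf]$, and the reduction in Step~1 yields that $(\mathbb{Z},\gmech)$ solves~\eqref{E:OriginalProblem}. I expect the main obstacle to be Theorem~\ref{T:ProbDominance} — and within it, choosing the definition of $\succeq_{\mathrm{UPRR}}$ so that it is simultaneously weak enough for the geometric mechanism to dominate all $\epsilon$-differentially private mechanisms (Lemma~\ref{L:Geometric_Dominates}) and strong enough to force weakly higher value in every supermodular problem. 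An alternative I would keep in reserve is a direct dual approach — exhibiting, for each supermodular $u$, a linear functional on $\Delta(\Omega)$ that majorizes $\vf$ on $\constraint(\epsilon,\prior)$, agrees with it on $\supp\gpost$, and has the right value at $\prior$ — but I expect constructing such supporting prices for arbitrary supermodular $u$ to be harder than the UPRR route.
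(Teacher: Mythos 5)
Your high-level plan matches the paper's proof of Theorem~\ref{T:Geometric} exactly: reduce to oblivious mechanisms via Corollary~\ref{C:Solution}, establish $\gpost\succeq_{\mathrm{UPRR}}\exper^*$ via Lemma~\ref{L:Geometric_Dominates}, then invoke Theorem~\ref{T:ProbDominance}. Step~1 is correct, including the verification that $(\mathbb{Z},\gmech)$ is $\epsilon$-differentially private. But there are two substantive discrepancies in how you describe the supporting lemmas, plus one small technical omission.

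First, your reading of the UPRR order is off in a way that would send Lemma~\ref{L:Geometric_Dominates} down a dead end. You describe UPRR as comparing likelihood-ratio profiles $\omega\mapsto\belief(\omega)/\prior(\omega)$ against the \emph{prior}, and then say ``the work is to show that the particular Bayes-plausible way $\gpost$ mixes these maximally peaked extreme points \dots dominates every other feasible mixture,'' proposing a coupling/majorization argument on the \emph{mixing weights}. In the paper, $\exper\succeq_{UPRR}\exper'$ is a purely pointwise condition on the supports: for each $\belief\in\supp\exper$ there is a peak $\omega^*(\belief)$ such that for \emph{every} $\belief'\in\supp\exper'$, the ratio $\belief(\omega)/\belief'(\omega)$ (not $\belief(\omega)/\prior(\omega)$) is nondecreasing up to $\omega^*(\belief)$ and nonincreasing thereafter. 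The mixing weights $\exper(\belief)$ and $\exper'(\belief')$ play no role in the definition. Consequently, Lemma~\ref{L:Geometric_Dominates} reduces to a short pointwise calculation from Lemma~\ref{L:Unimodal} and $\belief'\in\constraint(\epsilon,\prior)$; there is no coupling or de-peaking argument to run. The intuition you state is right (each $\gpost$-posterior is as peaked as the constraint allows), but you have located the difficulty in the wrong place.

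Second, for Theorem~\ref{T:ProbDominance} your proposal is a genuinely different route: a direct summation-by-parts argument on cumulative differences of relative-risk profiles, signing terms via the single-crossing of $u(a',\cdot)-u(a,\cdot)$. The paper instead builds a \emph{Frech\'{e}t representation} $(W,X)\in\mathcal{M}(\prior,U([0,1]))$ of each distribution of posteriors (Lemma~\ref{L:FrechetEquiv}), taken with respect to an \emph{endogenous} function (a selection from the optimal-action correspondence), and shows UPRR-dominance implies dominance in the supermodular stochastic order between these representations (Proposition~\ref{P:Dominance}, which itself is a cdf-comparison / survival-function argument). The reason the endogenous index matters is that the posteriors cannot be given a single exogenous ordering --- exactly why Lehmann/Quah--Strulovici/Athey--Levin are unavailable, as you correctly note --- and the Frech\'{e}t construction lets one side choose the better ordering while the other absorbs whatever mapping the competitor induces. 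Whether a bare summation-by-parts argument would close without some analogue of this ``choose the index optimally'' step is unclear to me; I would expect you to run into the problem that the optimal action under $\exper'$ need not be monotone in any ordering of $\supp\exper'$ that also respects the peaks of $\supp\exper$, which is precisely the issue the Frech\'{e}t machinery is designed to sidestep.

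Finally, a small but necessary technical point: Theorem~\ref{T:ProbDominance} as stated requires both distributions to have \emph{finite support}. It is not enough to say ``it suffices to show $E_{\gpost}[\vf]\ge E_\exper[\vf]$ for every Bayes-plausible $\exper\in\Delta(\constraint(\epsilon,\prior))$.'' The paper uses Corollary~\ref{C:Solution}(\ref{I:Solution_Independence}) to restrict to an optimal $\exper^*$ with linearly independent (hence finite) support before applying Theorem~\ref{T:ProbDominance}. You should do the same.
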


\subsection{The UPRR Order on Information Structures}

Theorem \ref{T:Geometric} cannot rely on existing results from the literature on comparing information structures. 
Oblivious mechanisms which induce 
extreme points of $\constraint(\epsilon,\prior)$ 
are not comparable in the \cite{blackwell1953equivalent} order: By definition, any extreme point of $\constraint(\epsilon,\prior)$ cannot be expressed as a convex combination of the others, and so no distribution supported by those points can be a garbling of another. Moreover, while the posterior beliefs induced by the $\epsilon$-geometric mechanism are ordered by the monotone likelihood ratio property, the posteriors induced by many other oblivious mechanisms 
--- such as  one that induces the posteriors in the right panel of Figure \ref{F:twodists} --- are not even ordered by first-order stochastic dominance.\footnote{Recall that 
a family of distributions $\{\belief_s\}_{s\in S}$ on a finite set $\Omega\subset\mathbb{R}$ is 
MLRP-ordered if
$\belief_{s'}(\omega')/\belief_{s'}(\omega)\geq \belief_s(\omega')/\belief_s(\omega)$ for each $\omega'>\omega$ and $s'>s$. If $\{\belief_s\}_{s\in S}$ are the posterior beliefs induced by a signal $(S,\sigma:\Omega\to\Delta(S))$, they are MLRP-ordered whenever $\sigma$ has the MLRP, i.e., whenever $\sigma(s'|\omega')/\sigma(s'|\omega)\geq \sigma(s|\omega')/\sigma(s|\omega)$ for each $\omega'>\omega$ and $s'>s$ \citep{quah2009comparative, milgrom1981good}.
}
This makes results like those of \cite{lehmann1988comparing}, \cite{quah2009comparative}, and \cite{athey2018value} unavailable for our purposes, since they do not rank the information structures induced by the publication mechanisms we need to compare.
 Instead, we introduce a new order on 
 information structures in which the $\epsilon$-geometric data publication mechanism dominates all other $\epsilon$-differentially private oblivious data publication mechanisms. To emphasize the generality of this order (and our results about it), we will often refer to $\omega$ as the \textit{state} in this section.

\begin{defn}[Uniform-Peaked Relative Risk Order] For $\exper,\exper'\in\Delta(\Delta(\Omega))$,
 we say that $\exper$ dominates $\exper'$ in the \textit{uniform-peaked relative risk (UPRR) order} and write $\exper\succeq_{UPRR}\exper'$ if 
each $\belief\in\supp \exper$ has a \textit{peak} $\omega^*(\belief)\in\Omega$ such that for each $\belief'\in\supp\exper'$, $\belief(\omega)/\belief'(\omega)$ is nondecreasing on $(-\infty,\omega^*(\belief)]$ and nonincreasing on $[\omega^*(\belief),\infty)$.
\end{defn}
In words, one distribution of posteriors on $\Omega\subset\mathbb{N}$ UPRR-dominates another if each posterior in the dominant distribution's support concentrates relatively more probability around one state --- its peak --- than any posterior in the dominated distribution's support.\footnote{That is, if the ratio of the probability that the former posterior places on a state closer to the its peak to the probability it places on a state further away from its peak is at least the ratio of probabilities placed on those states by the latter posterior.}
Equivalently, the \textit{relative risk} $\mu(\omega)/\mu'(\omega)$ of a state $\omega$ under a posterior $\mu$ in the dominant distribution's support against some posterior $\mu'$ in the latter distribution is always increasing in the state below the first posterior's peak $\omega^*(\mu)$ and decreasing above it.%
\footnote{Note that for Bayes-plausible distributions, the UPRR order can be equivalently defined in terms of the signals that generate those distributions: If $\exper$ and $\exper'$ are induced by the data publication mechanisms $(S,\sigma)$ and $(S',\sigma')$, then $\exper\succeq_{UPRR}\exper'$ (and so we can write $\sigma\succeq_{UPRR}\sigma'$) if and only if each $s\in S$ has a peak $\hat{\omega}^*(s)$ such that for each $s'\in S'$, $\sigma(s|\omega)/\sigma'(s'|\omega)$ is nondecreasing on $(-\infty,\hat{\omega}^*(s)]$ and nonincreasing on $[\hat{\omega}^*(s),\infty)$.}

\begin{example}\label{Ex:UPRR}

The UPRR order allows us to compare the 
two distributions of posteriors described in Figure \ref{F:twodists}.
The distribution $\bpost$ from the right panel of Figure \ref{F:twodists} is supported by the two extreme points of $\constraint(\epsilon,\prior)$ that achieve exactly one upper privacy bound (at either $\omega=1$ or $\omega=2$).
The distribution  $\gpost$ from the left panel of Figure \ref{F:twodists}, on the other hand, induces 
the extreme points of $\constraint(\epsilon,\prior)$ that achieve the upper privacy bound for both states, neither state, and for state 1 only, respectively.

Below, in Figure \ref{F:UPRR}, we graph all four 
posterior beliefs $\belief\in\ext(\constraint(\epsilon,\prior))$ that achieve the privacy bound at each value of $\omega$,
as well as the relative risk $\belief(\omega)/\belief'(\omega)$ for each $\belief\in\supp\gpost$ and each $\belief'\in\supp\bpost$. One can see that for each $\belief\in\supp\gpost$, the relative risk against either posterior $\belief'\in\supp\bpost$ has the same peak $\omega^*(\belief)$: 
when $\belief$ attains both lower privacy bounds, $\omega^*(\belief)=0$; when $\belief$ attains the upper privacy bound for $\omega=1$ and the lower privacy bound for $\omega=2$, $\omega^*(\belief)=1$; and when $\belief$ attains both upper privacy bounds, $\omega^*(\belief)=2$.
Hence, $\gpost\succeq_{UPRR}\bpost$.
\begin{figure}[h!]
{\centering
{\setlength{\extrarowheight}{2pt}
\begin{tabular}{cl}
\includegraphics[scale=0.75,valign=t]{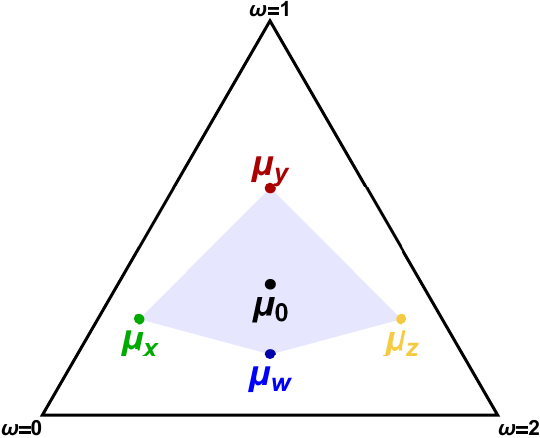}
&
\begin{tabular}[t]{@{}llll}
\includegraphics[scale=0.9,valign=t]{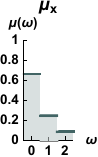}& \includegraphics[scale=0.9,valign=t]{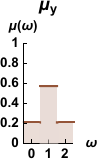} &
\includegraphics[scale=0.9,valign=t]{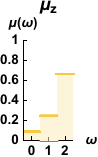} & \includegraphics[scale=0.9,valign=t]{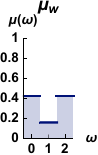}\\ 
\includegraphics[scale=1,valign=t]{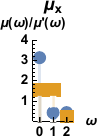}& \includegraphics[scale=1,valign=t]{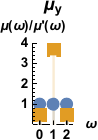}&
\includegraphics[scale=1,valign=t]{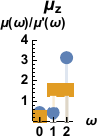}&
\begin{tabular}{c}
\vspace{0.4cm}\\
\includegraphics[scale=1,valign=t]{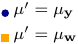}
\end{tabular}
\end{tabular}
\end{tabular}
}
\caption{\label{F:UPRR}\textbf{Ordering the distributions from Figure \ref{F:twodists} in the UPRR order.} 
Left panel: extreme points of $\constraint(\epsilon,\prior)$ in the simplex $\Delta(\Omega)$.
Top right panel: probability mass functions of extreme points of $\constraint(\epsilon,\prior)$.
Bottom right panel: Relative risk $\frac{\belief(\omega)}{\belief'(\omega)}$ under beliefs $\belief\in\supp\gpost$ versus beliefs $\belief'\in\supp\bpost$.
}}
\end{figure}
\end{example}

Lemma \ref{L:Geometric_Dominates} shows that the UPRR-dominance of the $\epsilon$-geometric mechanism demonstrated in Example \ref{Ex:UPRR} extends to \textit{all} $\epsilon$-differentially private data publication mechanisms with any number of respondents.
\begin{lem}[Geometric Mechanisms are UPRR-Dominant]\label{L:Geometric_Dominates}
If $\exper\in\exspace$ is induced by an $\epsilon$-differentially private oblivious mechanism, then $\gpost\succeq_{UPRR}\exper$,
 where for each $\belief\in\supp\gpost$, $\omega^*(\belief)$ is given by the largest $\omega\in\Omega$ for which $\frac{\belief(\omega)/\belief(\omega-1)}{\prior(\omega)/\prior(\omega-1)}=e^\epsilon$, or 0 if no such $\omega$ exist.
\end{lem}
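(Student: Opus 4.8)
The plan is to derive the lemma directly from the characterization of $\supp\gpost$ in Lemma \ref{L:Unimodal} together with Proposition \ref{P:DPObliv}, by a sign comparison of consecutive log-likelihood-ratio increments. Fix $\belief\in\supp\gpost$ and set $\omega^*=\omega^*(\belief)$ as in the statement. First I would argue that the threshold supplied (non-uniquely) by Lemma \ref{L:Unimodal} --- the $x$ such that $\belief$ attains the upper privacy bound for $0<\omega\le x$ and the lower privacy bound for $\omega>x$ --- can be taken equal to $\omega^*$: since $\epsilon>0$, the upper and lower bounds at a fixed state are mutually exclusive, so the largest state at which $\belief$ attains the upper bound is exactly the switch point, with the convention $\omega^*=0$ when $\belief$ attains only lower bounds and $\omega^*=N$ when it attains only upper bounds. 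This gives, for every $\omega\in\{1,\dots,N\}$,
\[
\log\!\left(\frac{\belief(\omega)}{\belief(\omega-1)}\right)-\log\!\left(\frac{\prior(\omega)}{\prior(\omega-1)}\right)=\epsilon\ \text{ if }\ \omega\le\omega^*,\qquad=-\epsilon\ \text{ if }\ \omega>\omega^*.
\]

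Next, fix an arbitrary $\belief'\in\supp\exper$. Since $\exper$ is induced by an $\epsilon$-differentially private oblivious mechanism, Proposition \ref{P:DPObliv} yields $\belief'\in\constraint(\epsilon,\prior)$, so $\big|\log(\belief'(\omega)/\belief'(\omega-1))-\log(\prior(\omega)/\prior(\omega-1))\big|\le\epsilon$ for each $\omega$. Because membership in $\constraint(\epsilon,\prior)$ precludes placing zero mass on any state (the prior has full support on $\Omega$, and the bounds keep each consecutive likelihood ratio finite and positive), both $\belief$ and $\belief'$ are strictly positive on $\Omega$, so $g(\omega)\equiv\belief(\omega)/\belief'(\omega)$ is well-defined. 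Subtracting the two estimates --- the prior terms cancel --- gives
\[
\log g(\omega)-\log g(\omega-1)=\left[\log\tfrac{\belief(\omega)}{\belief(\omega-1)}-\log\tfrac{\prior(\omega)}{\prior(\omega-1)}\right]-\left[\log\tfrac{\belief'(\omega)}{\belief'(\omega-1)}-\log\tfrac{\prior(\omega)}{\prior(\omega-1)}\right],
\]
which equals $\epsilon$ minus a quantity in $[-\epsilon,\epsilon]$, hence $\ge 0$, when $\omega\le\omega^*$, and equals $-\epsilon$ minus a quantity in $[-\epsilon,\epsilon]$, hence $\le 0$, when $\omega>\omega^*$. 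Thus $g$ is nondecreasing on $\{0,\dots,\omega^*\}$ and nonincreasing on $\{\omega^*,\dots,N\}$, i.e.\ $\belief(\omega)/\belief'(\omega)$ is nondecreasing on $(-\infty,\omega^*(\belief)]$ and nonincreasing on $[\omega^*(\belief),\infty)$. Since $\belief'\in\supp\exper$ and $\belief\in\supp\gpost$ were arbitrary, $\gpost\succeq_{UPRR}\exper$ with the stated peaks.

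The reason the proof is short is that the real content has already been extracted into Lemma \ref{L:Unimodal}: the geometric mechanism's posteriors \emph{saturate} the differential-privacy constraints in a lower-set pattern, whereas any competing $\epsilon$-differentially private oblivious mechanism's posteriors merely \emph{satisfy} those constraints, so the difference of a ``$\pm\epsilon$'' from something in $[-\epsilon,\epsilon]$ is automatically single-signed on each side of the switch point. The only place requiring care is the first step --- matching the non-unique threshold from Lemma \ref{L:Unimodal} to the specific $\omega^*(\belief)$ named in the statement and checking the two boundary conventions --- and this needs nothing beyond $\epsilon>0$. I do not foresee a genuine obstacle; any residual work is bookkeeping at the endpoints $\omega=0$ and $\omega=N$, handled by noting that the UPRR monotonicity conditions on $(-\infty,\omega^*]$ and $[\omega^*,\infty)$ only constrain $g$ on the finite chain $\{0,\dots,N\}$.
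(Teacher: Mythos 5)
Your proof is correct and takes essentially the same route as the paper: both use Lemma \ref{L:Unimodal} to pin down $\belief$'s consecutive log-likelihood-ratio increments (exactly $+\epsilon$ below the peak, exactly $-\epsilon$ above), invoke Proposition \ref{P:DPObliv} to bound $\belief'$'s corresponding increments in $[-\epsilon,\epsilon]$, and conclude by sign comparison that the relative risk $\belief(\omega)/\belief'(\omega)$ is nondecreasing then nonincreasing across $\omega^*(\belief)$. The paper phrases this multiplicatively, comparing ratios $\belief(\omega'')/\belief(\omega')$ and $\belief'(\omega'')/\belief'(\omega')$ over arbitrary intervals, while you take logs and examine one consecutive increment at a time; these are trivially equivalent, and your telescoping-increment presentation is arguably the cleaner of the two. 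Your explicit step reconciling the non-unique threshold $x$ from Lemma \ref{L:Unimodal} with the specific $\omega^*(\belief)$ named in the lemma (including the boundary conventions) is a detail the paper passes over silently, and is a genuine improvement in exposition.
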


For intuition, recall that each posterior $\belief\in\Delta(\Omega)$  induced by the geometric mechanism satisfies the upper privacy bound at every value of $\omega$ at or below some $x$ and the lower privacy bound at each higher value of $\omega$. Hence, these posteriors each concentrate probability around their peak $\omega^*(\belief)=x$ as much as the differential privacy constraint \eqref{E:DP_posterior} will allow. This is precisely what is necessary for the distribution they support to UPRR-dominate the distributions of posteriors about the population statistic induced by every other differentially private oblivious mechanism.

We 
conclude with
our general comparative statics result on the UPRR order. Theorem \ref{T:ProbDominance} shows that among information structures
which induce finitely many posteriors, those
 that are higher in this order are more useful for supermodular decision problems with actions on the real line. 
 Since the distribution of posterior beliefs about the state induced by the geometric mechanism UPRR-dominates every other distribution of oblivious $\epsilon$-differentially private posteriors (Lemma \ref{L:Geometric_Dominates}), and the designer's 
 problem can be solved by inducing a distribution of posteriors with finite support
 (Corollary \ref{C:Solution} (\ref{I:Solution_Independence})), Theorem \ref{T:Geometric} follows.

\begin{thm}[UPRR-Dominance Implies Dominance in Supermodular Problems]\label{T:ProbDominance}
If $\exper,\exper'\in\exspace$ are Bayes-plausible and have finite support,  and $\exper\succeq_{UPRR}\exper'$, then for any compact $A\subseteq\mathbb{R}$ and 
continuous
supermodular function $h:\Omega\times A\to\mathbb{R}$, 
 $E_\exper[\max_{a\in A}E_\belief[h(\omega,a)]]\geq E_{\exper'}[\max_{a\in A}E_\belief[h(\omega,a)]]$.
\end{thm}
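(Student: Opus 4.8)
The plan is to convert the statement about the convex value function $V(\belief)=\max_{a\in A}E_\belief[h(\omega,a)]$ into an inequality about a concave \emph{regret} function, and then extract the needed inequality from the submodular structure of regret together with the single-peaked relative-risk property.

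\textbf{Step 1 (regret reformulation).} Since $A$ is compact and $h$ continuous, set $m(\omega):=\max_{a\in A}h(\omega,a)$ and $L(\omega,a):=m(\omega)-h(\omega,a)\ge 0$. Supermodularity of $h$ makes $L$ submodular in $(\omega,a)$: for $a'>a$, $L(\omega,a')-L(\omega,a)=h(\omega,a)-h(\omega,a')$ is nonincreasing in $\omega$. Because $E_\belief[m(\omega)]$ is linear in $\belief$, we have $V(\belief)=E_\belief[m(\omega)]-\Lambda(\belief)$, where $\Lambda(\belief):=\min_{a\in A}E_\belief[L(\omega,a)]$ is concave (an infimum of affine functions), nonnegative, and vanishes at every vertex $e_\omega$ of $\Delta(\Omega)$ (since $L(\omega,\cdot)$ attains $0$). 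Bayes-plausibility of $\exper$ and $\exper'$ gives $E_\exper[E_\belief m]=E_{\exper'}[E_\belief m]=E_\prior[m]$, so the claim is equivalent to
\[
E_\exper[\Lambda(\belief)]\le E_{\exper'}[\Lambda(\belief)].
\]
This is the quantitative form of the paper's intuition: $\Lambda(\belief)$ is the decision maker's expected opportunity cost of acting under $\belief$ rather than after learning $\omega$, and the task is to show that more peaked posteriors incur less such cost in expectation.

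\textbf{Step 2 (why standard tools fail, and a normalization).} Because $\Lambda$ is concave, the inequality would follow if $\exper$ were Blackwell-more informative than $\exper'$; UPRR-dominance does not imply this, so submodularity of $L$ must be used. I would first normalize $\exper$ by merging posteriors in $\supp\exper$ that share a common peak: if $\belief(\omega)/\belief'(\omega)$ and $\tilde\belief(\omega)/\belief'(\omega)$ are each nondecreasing on $(-\infty,\omega^*]$ and nonincreasing on $[\omega^*,\infty)$, then so is $(\alpha\belief+\beta\tilde\belief)(\omega)/\belief'(\omega)$ --- its reciprocal is a sum of two functions that are nonincreasing then nondecreasing with the \emph{same} breakpoint $\omega^*$. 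Hence merging preserves UPRR-dominance, and by concavity of $\Lambda$ it only raises $E_\exper[\Lambda]$, so it suffices to prove the inequality when $\supp\exper$ is indexed by distinct peaks. (The same closure property shows that merging posteriors of $\exper'$ also preserves UPRR-dominance, which is useful for inductive reductions.)

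\textbf{Step 3 (the main step).} For each $\belief_j\in\supp\exper$ with peak $\omega^*_j$, bound $\Lambda(\belief_j)\le E_{\belief_j}[L(\omega,\hat a_j)]$, where $\hat a_j$ is an action optimal in state $\omega^*_j$ (so $L(\omega^*_j,\hat a_j)=0$); submodularity of $L$ together with the monotonicity of optimal actions in the state (Topkis, using supermodularity of $h$) controls how $L(\cdot,\hat a_j)$ grows as $\omega$ moves away from $\omega^*_j$. One then compares $\sum_j p_j E_{\belief_j}[L(\cdot,\hat a_j)]$ with $\sum_k q_k\Lambda(\belief'_k)=\sum_k q_k E_{\belief'_k}[L(\cdot,a^*(\belief'_k))]$, passing between the two sides via Bayes-plausibility $\sum_j p_j\belief_j=\sum_k q_k\belief'_k=\prior$, and using the single-peaked relative-risk inequalities for $\belief_j(\omega)/\belief'_k(\omega)$ (nondecreasing below $\omega^*_j$, nonincreasing above) to argue that the peaked posteriors concentrate their mass where $L(\cdot,\hat a_j)$ is small relative to where the $\belief'_k$ place mass on their own regret.

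\textbf{Main obstacle.} The hard part is Step 3: matching the two sides when supports and peaks differ. A natural route is to introduce the coupling under which $\belief$ and $\belief'$ are drawn conditionally independently given $\omega\sim\prior$, and to show that conditioning on the peaked signal $\belief$ is at least as valuable as conditioning on $\belief'$ for minimizing $E[L(\omega,a)]$ --- a statement that is false for arbitrary $L$ and must be squeezed out of submodularity of $L$ plus the UPRR inequalities. I expect this to need either an induction on $|\supp\exper'|$, peeling off one dominated posterior at a time (legitimate by the same ``same-breakpoint'' closure from Step 2), or an explicit dual certificate verifying that the action rule $\belief_j\mapsto\hat a_j$ already delivers a value no larger than $E_{\exper'}[\Lambda]$; getting this aggregation to close is where essentially all the work lies.
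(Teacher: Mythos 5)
Your Step~1 is a correct and clean reformulation: subtracting the affine function $E_\belief[m(\omega)]$ turns the value function into $-\Lambda(\belief)$ with $\Lambda$ concave, nonnegative, and vanishing at the vertices, and Bayes-plausibility reduces the claim to $E_\exper[\Lambda]\le E_{\exper'}[\Lambda]$. Step~2 is also a legitimate reduction --- merging posteriors of $\exper$ that share a common peak only raises $E_\exper[\Lambda]$ by Jensen, and preserves UPRR-dominance because the relative risk of the merged posterior against each $\belief'\in\supp\exper'$ is a nonnegative linear combination of functions each nondecreasing on $(-\infty,\omega^*]$ and nonincreasing on $[\omega^*,\infty)$ (note your justification via ``its reciprocal is a sum'' is wrong; the reciprocal of a sum is not a sum of reciprocals --- the direct function, not its reciprocal, is the relevant sum). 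But Step~3 is where the theorem actually lives, and you explicitly concede that ``getting this aggregation to close is where essentially all the work lies.'' Bounding $\Lambda(\belief_j)$ by the suboptimal action $\hat a_j$ optimal at the peak, and then hoping to match against $\sum_k q_k E_{\belief'_k}[L(\cdot,a^*(\belief'_k))]$ via the single-peaked relative-risk inequalities, is a plausible direction, but without a concrete mechanism for pairing the two sides this is not a proof.

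For comparison, the paper closes the argument very differently. It represents both $\exper$ and $\exper'$ as joint laws on $\Omega\times[0,1]$ with marginals $\prior$ and $U([0,1])$ (what it calls a Fr\'{e}chet representation): the second coordinate is a uniformly-smoothed cdf of an \emph{endogenous} function of the belief --- the UPRR peak $\omega^*(\cdot)$ for $\exper$, and a selection $a^*(\cdot)$ from the argmax correspondence for $\exper'$. It then proves (Proposition \ref{P:Dominance}) that the UPRR hypothesis forces these two joint laws to be ranked in the supermodular stochastic order, which, for two distributions with common marginals, is equivalent to pointwise domination of cdfs; the composite $h(w,\quant_{\exper',a^*}(x))$ is still supermodular because $\quant_{\exper',a^*}$ is nondecreasing, and Lemma \ref{L:FrechetEquiv} converts this back into the statement about expected decision values. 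The essential insight you are missing --- the cdf-domination argument that makes the aggregation across distinct peaks and supports actually go through --- is precisely what Proposition \ref{P:Dominance} and its proof provide; your proposal substitutes for it a hoped-for induction or dual certificate that is left unconstructed.
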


Observe that when a decision maker's payoffs are supermodular, the costs associated with making a ``mistake'' and choosing an action that is optimal in state $\omega^*$ instead of one that is optimal in the true state $\omega$ are increasing in the distance between $\omega$ and $\omega^*$. Intuitively, then, if two beliefs $\belief$ and $\belief'$ induce the same action, but 
$\belief$ places relatively more mass than $\belief'$ on states closer to the state $\omega^*(\belief)$ where that action is optimal, 
the expected costs of these ``mistakes'' should be lower under $\belief$ than under $\belief'$, no matter how quickly they are increasing in the distance $|\omega-\omega^*(\belief)|$. If this were true for every pair of beliefs $\belief$ and $\belief'$ induced, respectively, by the Bayes-plausible distributions $\exper$ and $\exper'$ --- as is the case when $\exper$ UPRR-dominates $\exper'$ --- it would suggest that any decision maker with supermodular payoffs would be better off under the information structure $\exper$ than under $\exper'$.

Theorem \ref{T:ProbDominance} confirms this conjecture.
Its argument --- which we discuss in detail in Appendix \ref{S:UPRR_Construction} --- relies on what we call a \emph{Frech\'{e}t representation} of the information structures in question. This tool allows us to represent a distribution of posterior beliefs as the joint distribution of the state and (a smoothed version of) the cumulative distribution function of some function of the belief. The key insight leading to Theorem \ref{T:ProbDominance} is that whenever one information structure UPRR-dominates another, then there is a particular Frech\'{e}t representation of the former that dominates \textit{any} Frech\'{e}t representation of the latter in the supermodular stochastic order (Proposition \ref{P:Dominance}).\footnote{For a thorough discussion of the supermodular stochastic order, see \cite{shaked2007stochastic}.}

\section{Conclusion}

Our analysis introduces the tools of information design to the problem of differentially private data publication.
Here, we offer a summary of the practical implications of our results.
First, Proposition \ref{P:SolutionT} and Corollary \ref{C:Solution} show how to select an optimal publication mechanism using concavification results from the information design literature. Even if a data provider does not wish to derive an optimal mechanism explicitly, these results establish criteria for checking whether any given mechanism is potentially optimal by checking the number of privacy bounds that it attains. 
Second, most differentially private data publication mechanisms used in practice are \textit{oblivious}, in the sense that they add noise to the true value of the population statistic that data users are interested in, rather than publishing output in a way that depends more generally on the database.
We show that using a mechanism from this class is never without loss with magnitude data (Theorem \ref{T:OblivMag}), but always optimal with categorical data
in the very common case where respondents are anonymous (Theorem \ref{T:OblivEquiv}) --- for instance, if the data are i.i.d.
Finally, we show that among oblivious mechanisms, the commonly used \textit{geometric mechanism} is optimal whenever data users view their actions and the population statistic of interest as complementary, in the sense that their payoffs are supermodular. We do so by using a novel result that compares the value of information structures in supermodular decision problems.

Several applications suggest generalizations of our model.
First, in many settings, data users interact with one another, instead of making independent decisions. For instance, when considering whether to attend a gathering, individuals may not only consider the information published by the designer about COVID-19 prevalence in their community, but also the attendance decisions of others. This suggests extending the model to allow for such interactions among the decision makers. 
In particular, it may be worthwhile to identify the circumstances under which 
Theorem \ref{T:Geometric} continues to hold, i.e., those in which
the geometric mechanism remains optimal in applications where data users have supermodular decision problems that are also affected by the actions taken by others. %

 Second, many data providers wish to publish information about 
multiple characteristics, each of which might affect the data user's decision problem through the number of respondents that have it.   Such \textit{multidimensional} data publication problems have been considered by \cite{BrennerNissim:Impossibility:SIAM:2014}. They give a counterexample showing that in contrast to the one-dimensional setting, the geometric mechanism is not universally optimal for all data users who minimize symmetric loss functions; that is, the result of \cite{GhoshRoughgardenSunararajan:Universally:SIAM:2012} no longer holds in such settings. To our knowledge, no characterization of optimal publication mechanisms in multidimensional settings has appeared in the privacy literature. 
While it is straightforward to show that a characterization result analogous to Proposition \ref{P:SolutionT} can be given in such settings, the circumstances under which oblivious mechanisms are without loss are less clear. We leave 
this question
to future work.

\nocite{Mathematica}
\setlength{\bibsep}{0pt plus 0.3ex}
\bibliographystyle{ecta}
\bibliography{dp_as_bp}

\appendix

\section{Oblivious Mechanisms Without Anonymous Respondents: A Counterexample}\label{S:OblivCounter}
Here, we provide a counterexample showing that Theorem \ref{T:OblivEquiv} does not hold when respondents are not anonymous.

Suppose that the database consists of the results of COVID-19 tests among three members of a university's economics department, and that the decision maker is another faculty member who is interested in knowing the department's positivity rate so that he can decide what kind of precautions to take.
Thus, the data is categorical (where 1 indicates a positive test and 0 indicates a negative test) and the population statistic $\omega$ is a count.

Two of the tested faculty --- respondents 2 and 3 --- are married to one another, 
and so their test results are highly correlated. Specifically,  
 conditional on the first two faculty members' results $(\theta_1,\theta_2)$,
the probability that $\theta_3=\theta_2$ is $1-\delta$ for some small $\delta>0$; i.e.,
we have
\begin{align*}
    \tprior((0,0,0))=\tprior((1,1,1))&=(1-\delta)/3&\text{ and }& &\tprior((0,1,1))=\tprior((1,0,0))&=(1-\delta)/6,\\
    \text{but }\tprior((0,0,1))=\tprior((1,1,0))&=\delta/3&\text{ and }& &\tprior((0,1,0))=\tprior((1,0,1))&=\delta/6.
\end{align*}

For small enough $\delta$, a non-oblivious mechanism can induce posteriors about the population statistic that are inaccessible to $\epsilon$-differentially private oblivious mechanisms without changing the level of privacy loss.
In particular, in the limit $\delta\to 0$,
a population statistic of $\omega=1$ \textit{always} corresponds to the database $(1,0,0)$, while a population statistic of $\omega=2$ \textit{always} corresponds to the database $(0,1,1)$. But these databases differ in \textit{all three} entries, so differential privacy 
only indirectly restricts
the amount that the posterior probability of one can differ from the posterior probability of the other.
This allows the designer to provide much more information about whether the population statistic $\omega$ is 1 rather than 2.

Specifically,
writing each $\tbelief\in\Delta(\{0,1\}^3)$ as the vector
\[\footnotesize\begin{bmatrix}
\tbelief((0,0,0))&\tbelief((1,0,0))&\tbelief((0,1,0))&\tbelief((0,0,1))&\tbelief((1,1,0))&\tbelief((1,0,1))&\tbelief((0,1,1))&\tbelief((1,1,1))
\end{bmatrix}',\]
consider the posterior 
\[
\hat{\tbelief}=\frac{\phi\circ\tprior}{\phi\cdot\tprior},\text{ where }\phi=
\begin{bmatrix}
e^{-2\epsilon}& e^{-3\epsilon}& e^{-\epsilon}& e^{-\epsilon}&  
 e^{-2\epsilon}& e^{-2\epsilon}& 1& e^{-\epsilon}
\end{bmatrix}',
\]
where $\circ$ denotes the elementwise (Hadamard) product.
This posterior is $\epsilon$-differentially private: $\hat{\tbelief}\in\tconstraint(\epsilon,\tprior)$. It achieves the upper privacy bound $\frac{\tbelief((1,0,0))/\tprior((1,0,0))}{\tbelief((0,0,0))/\tprior((0,0,0))}=e^\epsilon$ between $(0,0,0)$ and $(1,0,0)$, and the lower privacy bound $\frac{\tbelief((1,1,1))/\tprior((1,1,1))}{\tbelief((0,1,1))/\tprior((0,1,1))}=e^{-\epsilon}$ between $(0,1,1)$ and $(1,1,1)$, while achieving the privacy bounds between other databases in a way that distinguishes between $(1,0,0)$ and $(0,1,1)$ as much as possible.\footnote{Specifically, it achieves the 
upper privacy bound between $(1,0,0)$ and both $(1,1,0)$ and $(1,0,1)$, between $(0,0,0)$ and both $(0,1,0)$ and $(0,0,1)$, between $(0,1,0)$ and $(0,1,1)$, between $(0,1,0)$ and $(0,1,1)$, between $(1,1,0)$ and $(1,1,1)$, and between $(1,0,1)$ and $(1,1,1)$; and the lower privacy bound between $(0,1,0)$ and $(1,1,0)$ and between $(0,0,1)$ and $(1,0,1)$.
}
As $\delta\to 0$, its third through sixth entries vanish along with the corresponding entries of $\tprior$. Hence, its projection $P\hat{\tbelief}$ onto $\Delta(\Omega)$ approaches
\[\hat{\belief}=
\frac{\psi\circ\prior}{\psi\cdot\prior},\text{ where }\psi=
\begin{bmatrix}
e^{-\epsilon}& e^{-3\epsilon}& 1&   e^{-2\epsilon}
\end{bmatrix}'.
\]
This posterior about the population statistic is outside of $\constraint(\epsilon,\prior)$, and so cannot be induced with an oblivious mechanism: 
it exceeds the upper privacy bound $\frac{\belief(2)/\prior(2)}{\belief(1)/\prior(1)}\leq e^\epsilon$
between states $\omega=1$ and $\omega=2$.
Consequently, when $\delta$ is small enough, and $\theta_2$ and $\theta_3$ are very highly correlated, oblivious mechanisms are not always optimal.\footnote{Consider, for instance, a decision maker who takes action $z$ when his belief about the state is in $\constraint(\epsilon,\prior)$, 
but takes a different action, $x$, when his belief about the population statistic is in some neighborhood of $\hat{\belief}$. (To see how this might occur, suppose that distinguishing between $\omega=1$ and $\omega=2$ is important for the decision maker's choice, but distinguishing between the other values of $\omega$ is not, e.g., because action $x$ gives a much worse payoff than action $z$ when $\omega=1$, a somewhat better payoff when $\omega=2$, and the same payoff when $\omega\in\{0,3\}$.)
Then any oblivious $\epsilon$-differentially private mechanism does not offer any useful information to the decision maker, but a non-oblivious $\epsilon$-differentially private mechanism that induces $\hat{\tbelief}$ does.}

\section{Construction of Theorem \ref{T:ProbDominance}}\label{S:UPRR_Construction}
In this appendix, we provide a more detailed description of the arguments underlying our general comparative statics result on the UPRR order, Theorem \ref{T:ProbDominance}. These arguments proceed
in three phases. First, we show that each Bayes-plausible distribution of posteriors with finite support can be represented by an element of the \textit{Frech\'{e}t class} $\mathcal{M}(\prior,U([0,1]))$ --- the class of cumulative distribution functions 
with marginal distributions $\prior$ and $U([0,1])$.\footnote{In fact, one can show that $\exspace$ is \textit{equivalent} to $\mathcal{M}(\mu_0,U([0,1]))$, in the sense that every joint distribution whose cdf lies in the latter set represents a distribution of posteriors in the former.}\textsuperscript{,}\footnote{Recall that a distribution of posteriors $\exper$ is Bayes-plausible if $E_\exper\belief=\prior$.} 
Next, in Lemma \ref{L:FrechetEquiv}, we describe the sense in which this representation is equivalent to the original distribution. Finally, Proposition \ref{P:Dominance} shows that a UPRR-dominant distribution has a representation which is dominant in the supermodular stochastic order.

To begin,
suppose that $\exper\in\Delta(\Delta(\Omega))$ is a Bayes-plausible distribution with finite support, and label the posteriors in its support $\{\belief_j\}_{j=1}^J$. Then it is straightforward that we can represent $\exper$ as a random vector $(\omega,\cdf_\exper(j))$, where $\cdf_\exper$ is the cumulative distribution of the posterior's index. This random vector is only supported on $\Omega\times\{\cdf_\exper(j)\}_{j=1}^J$. However, we can also represent $\exper$ as its ``uniform smoothing'' $(W,X)\sim F$, where, letting $\quant_\exper(z)\equiv\inf\{x|\cdf_\exper(x)\geq z\}$ denote the quantile function of $j$,
\begin{align*}
    F(w,x)&=\int_0^x\sum_{y=0}^w f(y,z)dz,  \text{ for ``mixed density'' }f(y,z)=\belief_j(y)|_{j=\quant(z)},
\end{align*}
since $(W,X)$ is equal in distribution to $(\omega,\cdf_\exper(j))$ on the latter vector's support.\footnote{We use the term ``mixed density'' for $f$ since it is neither a probability mass function nor a probability density function: Conditional on $W$, $X$ is a continuous random variable, while conditional on $X$, $W$ is a discrete random variable.}
Instead of concentrating probability on a finite set of $x$-values, 
$(W,X)$ distributes probability between those points uniformly, conditional on $W$.

We call $(W,X)$ the \textit{Frech\'{e}t representation} of $\exper$ with respect to the index function $\mathcal{J}:\supp\exper\to\mathbb{R}$ defined by $\mathcal{J}(\belief_j)\equiv j$. 
We use this name precisely because $F$ is a member of the Frech\'{e}t class $\mathcal{M}(\prior,U([0,1]))$: The marginal distribution of $W$ is $\prior$ because $\exper$ is Bayes-plausible, while the marginal distribution of $X$ is $U([0,1])$ because the probabilities that a posterior $\belief$ places on states in $\Omega$ must sum to one.
Example  \ref{Ex:FrechetGeom} illustrates the Frech\'{e}t representation concept in the context of Figures \ref{F:DPpostT}-\ref{F:UPRR}.

\begin{example}[Frech\'{e}t Representations of the Geometric Mechanism]\label{Ex:FrechetGeom}
Consider the $\epsilon$-geometric mechanism $(\mathbb{Z},\gmech)$ 
in the simple setting from Figures \ref{F:DPpostT}-\ref{F:UPRR}. 
This data publication mechanism induces the three 
extreme points of $\constraint(\epsilon,\prior)$
characterized by attaining the upper privacy bound at the sets $\sig=\emptyset$, $\sig=\{1\}$, and $\sig=\{1,2\}$,
which are each displayed in the 
lower-left 
of Figure \ref{F:Geompeak}, and does so according to the distribution $\gmech$ displayed in the 
upper-left 
of Figure \ref{F:Geompeak}. By Lemma \ref{L:Geometric_Dominates}, these posteriors have peaks $\omega^*(\belief)$ of 0, 1, and 2, respectively. Then when the posteriors are indexed by $\omega^*(\belief)$, the Frech\'{e}t representation of the distribution $\gpost$ has the mixed density shown 
on the right 
of Figure \ref{F:Geompeak}. 

\begin{figure}[h!]
{\centering
\includegraphics[scale=0.45,valign=c]{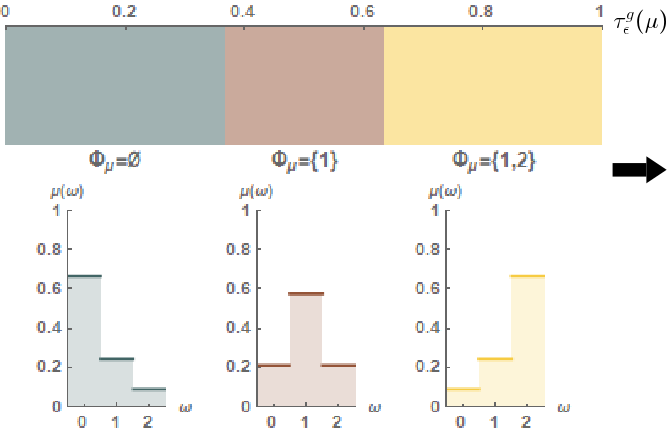}
\includegraphics[scale=0.5,valign=c]{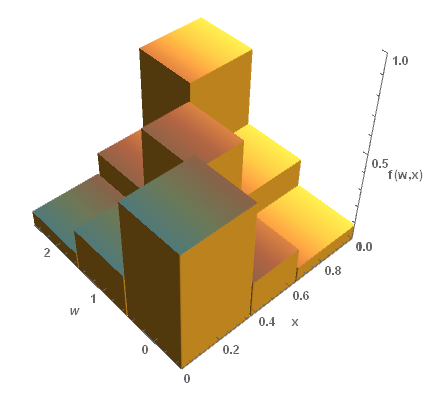}
\caption{\label{F:Geompeak}\textbf{Generating the Frech\'{e}t representation of 
$\gpost$ with respect to the peak $\omega^*$ in Figures \ref{F:DPpostT}-\ref{F:UPRR}.} 
Upper-left panel: The distribution $\gpost$. Lower-left panel: The posterior beliefs in $\supp\gpost$. Right panel: The mixed density of the Frech\'{e}t representation of $\gpost$ with respect to $\omega^*$.
}}
\end{figure}
\end{example}

More generally, we can consider a distribution's Frech\'{e}t representation not only with respect to an index function, but any real-valued function on the distribution's support.
Formally, for any $\exper\in\exspace$ with finite support, and any $t:\supp\exper\to\mathbb{R}$, let 
$\pmf_{\exper,t}(x)\equiv\exper(\{\belief|t(\belief)= x\})=\exper(t^{-1}(x))$ be the probability mass function of $T=t(\belief)$;
$\cdf_{\exper,t}(x)\equiv\exper(\{\belief|t(\belief)\leq x\})=\sum_{z\leq x}\pmf_{\exper,t}(z)$ be its cumulative distribution function; and 
$\quant_{\exper,t}(z)\equiv\inf\{x\in t(\supp\exper)| \cdf_{\exper,t}(x)\geq z\}$ be its quantile function.
We say that the random vector $(W,X)$ is the Frech\'{e}t representation of $\exper$ with respect to $t$ if 
its cumulative distribution function is
 $F(w,x)=\int_0^x\sum_{y=0}^w f(y,z)dz$, with mixed density $f(y,z)$ 
 given by
\[f(y,z)=\sum_{\mu\in t^{-1}(\quant_{\exper,t}(z))} \frac{\exper(\mu)\mu(y)}{\pmf_{\exper,t}(\quant_{\exper,t}(z))}.\]

In words, $f(y,z)$ is the weighted (by $\exper$) average of the probability placed on state $y$ by posteriors $\belief$ whose value under $t$, $t(\belief)$, is equal to the 
$t$-quantile of $z$, 
$\quant_{\exper,t}(z)$. In particular, when $t$ is one-to-one, $f(y,z)$ is the probability placed on state $y$ by the unique posterior $\belief$ such that $t(\belief)=\quant_{\exper,t}(z)$.\bigskip

\noindent {\bf Example \ref{Ex:FrechetGeom}, continued.} 
Consider once more the geometric mechanism from Figures \ref{F:DPpostT}-\ref{F:UPRR}.
This time, let $t:\supp\gpost\to\mathbb{R}$ map the posterior that attains the upper privacy bound at both states 1 and 2 to the same value as the extreme point of $\constraint(\epsilon,\prior)$ which attains the upper bound at state 1 alone:
\[t(\belief)=\left\{\begin{array}{rl}
0,&\frac{\belief(\omega)/\belief(\omega-1)}{\prior(\omega)/\prior(\omega-1)}=e^{-\epsilon}\text{ for each }\omega\in\{1,2\},\\
1,&\text{otherwise}.
\end{array}\right.\]
The Frech\'{e}t representation of $\gpost$ with respect to $t$ (Figure \ref{F:Geompool}) ``coarsens'' the representation of $\gpost$ with respect to $\omega^*$. Once again, it represents the uniform smoothing of the joint distribution of the state $\omega$ and the cdf of a function of the posterior belief $\belief$, but now, 
the function maps 
two posteriors to the same value.

\begin{figure}[h!]
{\centering
\includegraphics[scale=0.55,valign=b]{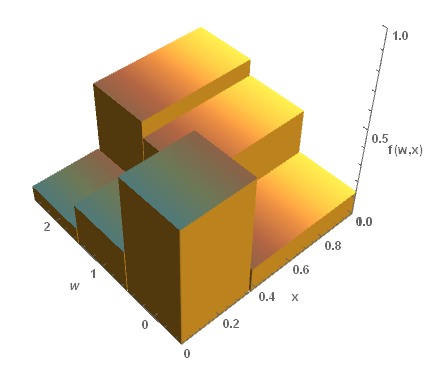}
\caption{\label{F:Geompool}\textbf{The Frech\'{e}t representation of 
$\gpost$ with respect to $t$ in Figures \ref{F:DPpostT}-\ref{F:UPRR}.} For each $\omega\in\Omega$, representing $\gpost$ with respect to $t$ instead of $\omega^*$ averages the representation's mixed density $f(w,x)$ over the region of $\{\omega\}\times[0,1]$ associated with 
posteriors that attain one or both upper privacy bounds.
}}
\end{figure}

The Frech\'{e}t representation we define here is similar to the representation employed by \cite{athey2018value}. In order to prove their main result, they represent a real-valued signal using the joint distribution of its cdf (or some other order-preserving function which maps to the unit interval) and the state. 
Hence, their representation preserves the exogenously given ordering on signal realizations --- and thus posterior beliefs --- inherent in their setting. In contrast, the Frech\'{e}t representation can accommodate multiple orderings of the posteriors induced by a data publication mechanism by varying the function $t$ with respect to which the representation is taken. 
This flexibility is crucial to our results: In order to show that UPRR-dominance implies dominance in supermodular problems (Theorem \ref{T:ProbDominance}), we employ a Frech\'{e}t representation with respect to an endogenous function.\footnote{Specifically, we use a Frech\'{e}t representation with respect to a selection from the decision maker's solution correspondence.}

Lemma \ref{L:FrechetEquiv} shows that a Frech\'{e}t representation is indeed a member of the Frech\'{e}t class  $\mathcal{M}(\mu_0,U([0,1]))$, and makes precise the sense in which it is equivalent to the distribution of posteriors that it represents.

\begin{lem}[Equivalence of Frech\'{e}t Representations]\label{L:FrechetEquiv}
Suppose that $\exper\in\exspace$ is Bayes-plausible and has finite support, let $A\subseteq\mathbb{R}$ be compact, let $(W,X)$ be the Frech\'{e}t representation of $\exper$ with respect to $t:\supp\exper\to A$, and let $F$ be the cdf of $(W,X)$.
\begin{enumerate}[i.]
\item $F\in\mathcal{M}(\prior,U([0,1]))$.\label{I:FrechetClass}
\item\label{I:FrechetEquiv} For any measurable function $h:\Omega\times A\to\mathbb{R}$,
\[E_\exper[E_\belief[h(\omega,t(\belief))]]=E[h(W,\quant_{\exper,t}(X))].\]
\item\label{I:FrechetMax} For any upper semicontinuous function $h:\Omega\times A\to\mathbb{R}$, any $\exper'\in\exspace$ with finite support, and any $s:\supp\exper'\to A$,
\[E_\exper\left[\max_{a\in A} E_\belief\left[ h(\omega,a)\right]\right]\geq E[h(W,\quant_{\exper',s}(X))]
.\]
\end{enumerate}
\end{lem}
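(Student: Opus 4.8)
The plan is to prove the three claims in order, in each case reducing to a finite computation by exploiting the piecewise structure of the Frech\'{e}t representation. Write $t(\supp\exper)=\{x_1<\cdots<x_m\}$ for the finite range of $t$. Then $\quant_{\exper,t}$ is the step function equal to $x_k$ on an interval $I_k\subseteq[0,1]$ of length $\pmf_{\exper,t}(x_k)$ (with $I_1=[0,\pmf_{\exper,t}(x_1)]$), and the $I_k$ partition $[0,1]$ up to a null set; moreover the mixed density $f(\cdot,z)=\sum_{\mu\in t^{-1}(x_k)}\exper(\mu)\mu(\cdot)/\pmf_{\exper,t}(x_k)$ is constant in $z$ on each $I_k$, and the sets $t^{-1}(x_k)$ partition $\supp\exper$. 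I would record this structure at the outset, since all three parts rely on it.

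For (\ref{I:FrechetClass}), I would compute the two marginals of $F$ directly. For the $W$-marginal, $F(w,1)=\int_0^1\sum_{y\le w}f(y,z)\,dz$; splitting over the $I_k$ and using constancy of $f$ there, the $I_k$-term is $\pmf_{\exper,t}(x_k)$ times $\sum_{\mu\in t^{-1}(x_k)}\exper(\mu)\mu(\{0,\dots,w\})/\pmf_{\exper,t}(x_k)$, i.e.\ $\sum_{\mu\in t^{-1}(x_k)}\exper(\mu)\mu(\{0,\dots,w\})$, and summing over $k$ collapses to $\sum_{\mu\in\supp\exper}\exper(\mu)\mu(\{0,\dots,w\})=\prior(\{0,\dots,w\})$ by Bayes-plausibility. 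For the $X$-marginal, $\sum_{y\in\Omega}f(y,z)=1$ on each $I_k$ (each $\mu$ sums to one over $\Omega$, and $\sum_{\mu\in t^{-1}(x_k)}\exper(\mu)=\pmf_{\exper,t}(x_k)$), so $F(N,x)=\int_0^x 1\,dz=x$. For (\ref{I:FrechetEquiv}), the same decomposition gives $E[h(W,\quant_{\exper,t}(X))]=\sum_k\int_{I_k}\sum_y f(y,z)h(y,x_k)\,dz=\sum_k\sum_{\mu\in t^{-1}(x_k)}\exper(\mu)\sum_y\mu(y)h(y,x_k)$, and since $t(\mu)=x_k$ for each $\mu\in t^{-1}(x_k)$, this equals $\sum_{\mu\in\supp\exper}\exper(\mu)E_\mu[h(\omega,t(\mu))]=E_\exper[E_\belief[h(\omega,t(\belief))]]$. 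Since $\Omega$ is finite and $h$ real-valued, every expression here is a finite combination, so no integrability caveats arise.

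For (\ref{I:FrechetMax}), I would run the identical decomposition but replace the last equality with an inequality. Writing $E[h(W,\quant_{\exper',s}(X))]=\sum_k\int_{I_k}\sum_y f(y,z)h(y,\quant_{\exper',s}(z))\,dz$, the $I_k$-term equals $\sum_{\mu\in t^{-1}(x_k)}\frac{\exper(\mu)}{\pmf_{\exper,t}(x_k)}\int_{I_k}E_\mu[h(\omega,\quant_{\exper',s}(z))]\,dz$. Because $s$ takes values in $A$ and $\exper'$ has finite support, $\quant_{\exper',s}(z)$ lies in the finite set $s(\supp\exper')\subseteq A$ for every $z$, so $E_\mu[h(\omega,\quant_{\exper',s}(z))]\le\max_{a\in A}E_\mu[h(\omega,a)]$, the maximum being attained since $A$ is compact and $a\mapsto E_\mu[h(\omega,a)]$ is upper semicontinuous (a nonnegatively weighted finite sum of upper semicontinuous maps). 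Using this bound with $|I_k|=\pmf_{\exper,t}(x_k)$, the $I_k$-term is at most $\sum_{\mu\in t^{-1}(x_k)}\exper(\mu)\max_{a\in A}E_\mu[h(\omega,a)]$; summing over $k$ yields $E_\exper[\max_{a\in A}E_\belief[h(\omega,a)]]$.

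The computations are routine, so I do not expect a genuine obstacle; the two points demanding attention are the bookkeeping when $t$ fails to be injective — so that $t^{-1}(x_k)$ may hold several posteriors and the weights $\exper(\mu)/\pmf_{\exper,t}(x_k)$ must be carried through the interval integrals — and, in (\ref{I:FrechetMax}), the recognition that no relationship between $\exper$ and $\exper'$ is used: once one conditions on the posterior $\mu$ of $\exper$ currently in play, $\quant_{\exper',s}(z)$ is merely some feasible action in $A$, and bounding by the max over $A$ closes the argument. I would also flag at the start that the quantile function $\quant_{\exper,t}$ and the density $f(\cdot,z)$ are step functions, as this is what makes every integral a finite sum.
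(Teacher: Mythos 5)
Your proof is correct and takes essentially the same route as the paper's: both rely on the constancy of the mixed density $f(\cdot,z)$ and of $\quant_{\exper,t}(z)$ on the quantile intervals, Bayes-plausibility for the $W$-marginal, unit total mass for the $X$-marginal, and the pointwise bound by $\max_{a\in A}E_\belief[h(\omega,a)]$ (attained by compactness and upper semicontinuity) in part (iii). Where the paper invokes the law of iterated expectations together with the distributional identity $\quant_{\exper,t}(X)\sim\cdf_{\exper,t}$, you carry out the equivalent interval-by-interval decomposition by hand, which is slightly more elementary but amounts to the same computation.
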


Parts (\ref{I:FrechetEquiv}) and (\ref{I:FrechetMax}) of Lemma \ref{L:FrechetEquiv} show two different facets of the Frech\'{e}t representation. 
The first is a straightforward equivalence: Suppose we want to take the expectation under $\exper$ of a function $h$ (such as a decision maker's utility function) which depends on the state $\omega$ and some function $t$ of the posterior $\belief$ (such as a selection from that decision maker's optimal action correspondence). 
Part (\ref{I:FrechetEquiv}) says we can do so by taking an expectation of a function of the Frech\'{e}t representation $(W,X)$ of $\exper$ with respect to $t$ instead. In particular, if we let $h$ take the arguments $W$ (the ``state'' part of the Frech\'{e}t representation) and $\quant_{\exper,t}(X)$ (the $t$-quantile of the ``cumulative distribution'' part of the Frech\'{e}t representation, $X$), then its expectation is identical to the expectation under $\exper$ that we care about.

The second is a dominance argument that does not require equivalence between the
function $t$ used in the Frech\'{e}t representation and the function $s$ whose quantile appears in the expectation on the smaller side of the inequality. 
There, instead of letting $h$ take the second argument $\quant_{\exper,t}(X)$ --- the $X$th percentile value of $t(\belief)$ when $\belief$ is distributed according to $\exper$ --- $h$'s second argument is instead $\quant_{\exper',s}(X)$, the $X$th percentile value of $s(\belief)$ when $\belief$ is distributed according to $\exper'$.
By necessity, this choice of its second argument must lead $h$ to take a weakly lower expected value than the maximizer of its conditional expectation with respect to $X$ would.\footnote{That is, for $a^*(X)\in\arg\max_{a\in A}E[h(W,a)|X]$, we have
 $E[h(W,a^*(X))|X]\geq E[h(W,\quant_{\exper',s}(X))|X]$, and so by the law of iterated expectations, $E[h(W,a^*(X))]\geq E[h(W,\quant_{\exper',s}(X))]$.} The proof then shows that choosing $a$ to maximize $E[h(W,a)|X]$ at each $X\in[0,1]$ cannot outperform choosing $a$ to maximize $E_\belief[h(\omega,a)]$ at each $\belief\in\supp\exper$. (In fact, it may do strictly worse, since --- as in Figure \ref{F:Geompool} --- the Frech\'{e}t representation $(W,X)$ may discard information from $\exper$ by pooling together posteriors which $t$ maps to the same value.)

The Frech\'{e}t representation is useful for our purposes because it allows us to rank distributions of posteriors using the \textit{supermodular stochastic order} (e.g., \cite{shaked2007stochastic}; \cite{meyer2013supermodular}). Formally, if $\textbf{X}$ and $\textbf{Y}$ are random vectors whose $i$th elements $\textbf{X}_i$ and $\textbf{Y}_i$ take values in a compact set $S_i\subseteq\mathbb{R}$, and for any 
supermodular function $h:\prod S_i\to\mathbb{R}$, we have $E[h(\textbf{X})]\geq E[h(\textbf{Y})]$, then we say $\textbf{X}$ dominates $\textbf{Y}$ in the supermodular stochastic order and write $\textbf{X}\succeq_{SPM}\textbf{Y}$.
The supermodular stochastic order can only rank random vectors which belong to the same Frech\'{e}t class, i.e., which have the same marginal distributions.
When applied to the Frech\'{e}t class $\mathcal{M}(\mu_0,U([0,1]))$, Proposition \ref{P:Dominance} shows that the supermodular order's ranking of Frech\'{e}t representations is concordant with the UPRR order's ranking of information structures, in the sense that the Frech\'{e}t representation of a UPRR-dominant distribution of posteriors with respect to the peak $\omega^*$ SPM-dominates any Frech\'{e}t representation of the UPRR-dominated distribution.
Combining this result with Lemma \ref{L:FrechetEquiv} then yields Theorem \ref{T:ProbDominance}.

\begin{prop}[UPRR-Dominance Implies Supermodular Stochastic Dominance]\label{P:Dominance}
Suppose that 
$\exper,\exper'\in\exspace$ are Bayes-plausible and have finite support, and that  $\exper\succeq_{UPRR}\exper'$ with peaks given by the function $\omega^*:\supp\exper\to\mathbb{R}$.
Then the Frech\'{e}t representation $(W,X)$ of $\exper$ with respect to $\omega^*$ dominates the Frech\'{e}t representation $(\hat{W},\hat{X})$ of $\exper'$ with respect to $t:\supp\exper'\to\mathbb{R}$ in the supermodular stochastic order:
for any 
supermodular function $h:\Omega\times [0,1]\to\mathbb{R}$,
\[E[h(W,X)]\geq E[h(\hat{W},\hat{X})].\]

\end{prop}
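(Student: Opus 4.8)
The plan is to reduce the supermodular stochastic dominance to a pointwise comparison of bivariate distribution functions, and then to extract that comparison from the ``$-,+,-$'' sign pattern that UPRR-dominance forces on posteriors. By Lemma~\ref{L:FrechetEquiv}~(\ref{I:FrechetClass}), $(W,X)$ and $(\hat W,\hat X)$ both lie in the Frech\'{e}t class $\mathcal{M}(\prior,U([0,1]))$, hence share the same marginals ($\prior$ on $\Omega$, uniform on $[0,1]$). For bivariate vectors with common marginals the supermodular stochastic order coincides with the pointwise order on joint cdfs (cf.\ \cite{shaked2007stochastic}): writing a supermodular $h$ via its first differences $D_j(x)=h(j,x)-h(j-1,x)$, each nondecreasing in $x$, a summation-and-integration-by-parts identity gives
\[E[h(W,X)]-E[h(\hat W,\hat X)]=\sum_{j=1}^{N}\int_{0}^{1}\Bigl(\Pr(W\le j-1,X\le x)-\Pr(\hat W\le j-1,\hat X\le x)\Bigr)\,dD_j(x),\]
with $dD_j\ge0$. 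So it suffices to show $\Pr(W\le w,X\le x)\ge\Pr(\hat W\le w,\hat X\le x)$ for all $w\in\Omega$ and $x\in[0,1]$.

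I would then normalize the representations: merging the posteriors in $\supp\exper$ that share a peak into their $\exper$-weighted averages, and those in $\supp\exper'$ that share a $t$-value into their $\exper'$-weighted averages, leaves both Frech\'{e}t representations untouched (it does not change the mixed densities). So assume $\supp\exper=\{\bar\belief_1,\dots,\bar\belief_K\}$ with distinct peaks $\omega^*(\bar\belief_1)<\cdots<\omega^*(\bar\belief_K)$ and masses $p_k$, and $\supp\exper'=\{\bar\belief'_1,\dots,\bar\belief'_L\}$ ordered by increasing $t$-value with masses $q_l$; each $\bar\belief_k$ still has $\bar\belief_k/\belief'$ single-peaked at $\omega^*(\bar\belief_k)$ for every $\belief'\in\supp\exper'$, because a positive combination of functions that are nondecreasing-then-nonincreasing about a common point keeps that shape, and because $x\mapsto\bigl(\sum_i\lambda_i/g_i(x)\bigr)^{-1}$ inherits single-peakedness from the $g_i$. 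Then, with $B_\mu(w)=\mu(\{0,\dots,w\})$, the Frech\'{e}t construction gives $\Pr(W\le w,X\le x)=\int_0^x B_{\bar\belief_{k(z)}}(w)\,dz$ and $\Pr(\hat W\le w,\hat X\le x)=\int_0^x B_{\bar\belief'_{l(z)}}(w)\,dz$, where $k(z)$, $l(z)$ index the sub-intervals containing $z$; Bayes-plausibility makes both integrands integrate to $B_\prior(w)$ on $[0,1]$. The remaining task is that, for each fixed $w$, the $\exper$-side integrand is ``more front-loaded'' than the $\exper'$-side one.

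For this step I would use the sign structure. UPRR-dominance says $\bar\belief_k(\cdot)-\belief'(\cdot)$ is $\le0$ on an initial block of states, $\ge0$ on a block containing $\omega^*(\bar\belief_k)$, and $\le0$ on a final block, for every $\belief'\in\supp\exper'$; equivalently $B_{\bar\belief_k}(\cdot)-B_{\belief'}(\cdot)$ crosses zero at most once, from below, at a state that is weakly below the right endpoint of that ``$+$''-block. Since the inequality to be proved is required for \emph{all} $t$, it is equivalent to requiring that $\int_0^x B_{\bar\belief_{k(z)}}(w)\,dz$ dominate the supremum, over all orderings of the $\exper'$-posteriors, of $\int_0^x B_{\bar\belief'_{l(z)}}(w)\,dz$ --- i.e.\ the upper generalized-Lorenz value at $x$ of the measure $\sum_l q_l\delta_{B_{\bar\belief'_l}(w)}$ --- so the job reduces to a Lorenz-type majorization: the peak-ordered profile $\{B_{\bar\belief_k}(w)\}_k$ must dominate, in every partial integral, the decreasing rearrangement of $\{B_{\bar\belief'_l}(w)\}_l$, the two sides having equal total mass. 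This is the step I expect to be the main obstacle, since it has to convert the \emph{pairwise} crossing property into a uniform-in-$w$ statement about partial integrals; the plan is an exchange argument that transforms the decreasing $\exper'$-profile into the peak-ordered $\exper$-profile by successively substituting an $\exper'$-posterior with an $\exper$-posterior whose peak places the cdf crossing point so that no partial integral decreases --- the ordering of the $\exper$-side by peak being precisely what makes a single such sequence of substitutions work for all $w$ at once. Combining the resulting cdf inequality with the first-step reduction proves Proposition~\ref{P:Dominance}, which with Lemma~\ref{L:FrechetEquiv} yields Theorem~\ref{T:ProbDominance}.
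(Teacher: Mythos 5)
Your opening reduction is correct and matches the paper: both Fr\'{e}ch\'{e}t representations lie in $\mathcal{M}(\prior,U([0,1]))$ by Lemma~\ref{L:FrechetEquiv}~(\ref{I:FrechetClass}), and for bivariate vectors with common marginals supermodular dominance is equivalent to the pointwise ordering of joint cdfs, so the goal becomes $F(w,x)\geq G(w,x)$ for all $w,x$. Your normalization step (merging $\exper$-posteriors with a common peak and $\exper'$-posteriors with a common $t$-value) is also sound: the merged posteriors are exactly what the mixed densities $f$ and $g$ already average over, and the algebra you give --- that a nonnegative combination of functions nondecreasing-then-nonincreasing about a common point keeps that shape, and that $1/(\sum_i\lambda_i/g_i)$ inherits single-peakedness --- correctly preserves the $(-,+,-)$ sign pattern for the merged pairs.

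The problem is the step you yourself flag as the main obstacle. You reformulate the required inequality $F(w,x)\geq G_t(w,x)$ (for each $w$, $x$ and each $t$) as a generalized-Lorenz majorization of $\{B_{\bar\belief'_l}(w)\}_l$ by the peak-ordered $\{B_{\bar\belief_k}(w)\}_k$, and you propose proving it by an exchange argument that substitutes $\exper'$-posteriors with $\exper$-posteriors one at a time. But this argument is never actually constructed, and there are concrete reasons to doubt it works as sketched. First, the two profiles carry different mass vectors $(p_k)$ and $(q_l)$, so ``substituting'' one posterior for another is not a well-defined elementary exchange operation on a fixed sequence; you would need a transport plan, and the single-crossing location of $B_{\bar\belief_k}-B_{\bar\belief'_l}$ (which depends on both $k$ and $l$, not just on $k$) does not obviously produce one. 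Second, the claim must hold for \emph{all} $w$ simultaneously with a single sequence of substitutions, because $t$ is fixed while $w$ ranges over $\Omega$; and UPRR says nothing directly about how the $\exper$-posteriors compare to one another, so there is no a priori guarantee that the peak ordering of $\{B_{\bar\belief_k}(w)\}_k$ is monotone in $w$, which is implicitly needed for a single exchange schedule to dominate the $w$-wise decreasing rearrangements of $\{B_{\bar\belief'_l}(w)\}_l$. Unless you can exhibit the exchange explicitly and verify these points, this is a genuine gap.

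The paper sidesteps the combinatorics entirely. Rather than working with the cdfs $B_\mu(w)$ and a Lorenz bound, it integrates the pointwise likelihood-ratio inequalities coming from UPRR over rectangles $[x,1]^2$ and $[0,x]^2$ to obtain two aggregated monotone-likelihood-ratio statements about $\int_x^1 f(\omega,z)dz/\int_x^1 g(\omega,z)dz$ on $[0,\quant_{\exper,\omega^*}(x)]$ and $\int_0^x f(\omega,z)dz/\int_0^x g(\omega,z)dz$ on $[\quant_{\exper,\omega^*}(x),N]$ (equations (\ref{E:Dominance_AggRightLR}) and (\ref{E:Dominance_AggLeftLR})). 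Together with the equal-marginals identity $\int_0^1 f(\omega,z)dz=\int_0^1 g(\omega,z)dz=\prior(\omega)$, these give that $\int_0^x g(\omega,z)dz-\int_0^x f(\omega,z)dz$ single-crosses in $\omega$ from negative to positive, which, since the differences sum to zero over $\omega$, forces $F(w,x)\geq G(w,x)$ for all $w$. That argument is uniform in $t$ automatically and needs no rearrangement or exchange. If you want to salvage your route, the concrete thing to prove is precisely this single-crossing in $\omega$ of the integrated density gap, which is where you should aim the $(-,+,-)$ structure.
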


As is well known (see, e.g., \cite{shaked2007stochastic}), 
one bivariate random vector dominates another in the supermodular stochastic order if and only if the cumulative distribution function of the former vector lies above that of the latter. Hence, denoting the cdfs of the Frech\'{e}t representations $(W,X)$ and $(\hat{W},\hat{X})$ in  Proposition \ref{P:Dominance} by $F$ and $G$, respectively,
$(W,X)\succeq_{SPM}(\hat{W},\hat{X})$ is equivalent to the statement that $F(w,x)\geq G(w,x)$ for each $w\in\Omega$ and $x\in[0,1]$.
Furthermore, because $F$ and $G$ are both elements of $\mathcal{M}(\prior,U([0,1]))$, the latter statement is equivalent to a similar one concerning the Frech\'{e}t representations' \textit{survival functions} $\bar{F}(w,x)\equiv P(W> w\text{ and }X>x)$ and $\bar{G}(w,x)\equiv P(\hat{W}> w\text{ and }\hat{X}>x)$. In particular, $F(w,x)\geq G(w,x)$ for each $w\in\Omega$ and $x\in[0,1]$ if and only if $\bar{F}(w,x)\geq \bar{G}(w,x)$ for each $w\in\Omega$ and $x\in[0,1]$.

Intuitively, this means that $(W,X)$ dominates $(\hat{W},\hat{X})$ in the supermodular stochastic order if it is more tightly distributed around a nondecreasing path between $(0,0)$ and $(N,1)$.
Proposition \ref{P:Dominance} shows that this must be true when $\exper\succeq_{UPRR}\exper'$:
At every point $x$ in the unit interval, the distribution of $W$ conditional on $X=x$ is a convex combination of posteriors $\belief\in\supp\exper$ whose peaks $\omega^*(\belief)$ are each given by the $\omega^*$-quantile of $x$, $\quant_{\exper,\omega^*}(x)$. 
Moreover, since each of these posteriors' relative risk $\belief(\omega)/\belief'(\omega)$ against a posterior $\belief'$ induced by the UPRR-dominated distribution $\exper'$ has a single peak at $\quant_{\exper,\omega^*}(x)$, the same must be true of the convex combination's relative risk against those posteriors.
Likewise, the distribution of $\hat{W}$ conditional on $\hat{X}=x$ is a convex combination of posteriors $\belief'\in\supp\exper'$; it follows that the likelihood ratio of the conditional distributions has a single peak at $\quant_{\exper,\omega^*}(x)$. 
Consequently, at any fixed $x$, $(W,X)$ must be more concentrated around $\quant_{\exper,\omega^*}(x)$ than $(\hat{W},\hat{X})$. Proposition \ref{P:Dominance} shows that this concentration extends across all of $[0,1]$ --- i.e., that  $(W,X)$ is more concentrated around the graph of $\quant_{\exper,\omega^*}$ than $(\hat{W},\hat{X})$ --- and hence $(W,X)$ dominates $(\hat{W},\hat{X})$ in the supermodular stochastic order.

\section{Proofs\protect\footnote{For a set $S$, denote its relative interior by $\ri(S)$ and its relative boundary by $\rbd(S)$. 
For $S\subset\mathbb{Z}_+$, we write $\textbf{1}_S$ to denote the indicator vector for $S$, i.e., the vector (of conformable dimension) with 1 in its $i$th entry if $i\in S$ and 0 in its $i$th entry otherwise. Likewise, we write $\textbf{1}_n$ to denote the indicator vector for  $n\in\mathbb{Z}_+$, i.e., $\textbf{1}_{\{n\} }$.
 Further, write $\textbf{1}$ to denote a (conformable) vector of ones and $\mathbb{1}$ to denote an indicator function.
 For vectors in $\mathbb{R}^{L+1}$, we adopt the convention of \textit{zero-indexing}  throughout the paper: that is, we number the entries of vectors starting from 0 rather than starting from 1, so that the entries range from 0 to $L$ rather than 1 to $L+1$. 
 Hence, we can represent each belief $\belief\in\Delta(\Omega)$ as the vector in $\mathbb{R}^{NT+1}$ whose $n$th entry corresponds to the probability $\belief(n)$ it places on $\omega=n$; i.e., the indices of the vector's entries and the population statistics they represent are the same. 
 Likewise, we can let each belief $\tbelief\in\Delta(\tspace)$ be represented by the vector in $\mathbb{R}^{2^{N(T+1)}}$ whose $n$th entry corresponds to the probability $\tbelief(\theta)$ it places on the database $\theta$ that is the $T$-ary number for $n$. (For simplicity, we write $\textbf{1}_{\theta}\in\mathbb{R}^{2^{N(T+1)}}$ for the indicator vector for the index whose $T$-ary number is $\theta$.)
 }
 }\label{S:Proofs}

\subsection*{Characterization of the Designer's Problem and (Obliviously) Differentially Private Posteriors}
 
 \begin{lem}\label{L:DP}
 The distribution $\texper\in\texspace$ of posterior beliefs can be induced with an $\epsilon$-differentially private mechanism $(S,m)$ if and only if $\texper$ is Bayes-plausible and $\supp\texper\subset\tconstraint(\epsilon,\tprior)$.
 \end{lem}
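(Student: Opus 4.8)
This lemma is the privacy-constrained counterpart of the standard information-design fact \citep{KamenicaBayesianPersuasion2011} that a distribution of posteriors is inducible by some signal exactly when it is Bayes-plausible, combined with the Bayes'-rule reformulation of $\epsilon$-differential privacy already recorded in the excerpt. I would prove the two directions separately, and expect neither to present a substantive obstacle.

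For the ``only if'' direction, suppose $(S,m)$ is $\epsilon$-differentially private and induces $\texper$. Bayes-plausibility follows from the law of total probability: writing $p(s)=\sum_\theta m(s|\theta)\tprior(\theta)$ for the marginal output distribution, we have $p(s)\tbelief^m_s(\theta)=m(s|\theta)\tprior(\theta)$, so $\sum_s p(s)\tbelief^m_s(\theta)=\tprior(\theta)\sum_s m(s|\theta)=\tprior(\theta)$, i.e.\ $E_\texper\tbelief=\tprior$. For the support condition, fix $\tbelief=\tbelief^m_s\in\supp\texper$; since $s$ occurs with positive probability, identity \eqref{E:Bayes} gives $\log(\tbelief(\theta)/\tbelief(\theta'))-\log(\tprior(\theta)/\tprior(\theta'))=\log(m(s|\theta)/m(s|\theta'))$ for every adjacent pair $\theta,\theta'$, whose absolute value is at most $\epsilon$ by \eqref{E:DP}. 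This is precisely the defining system \eqref{E:DP_posteriorT} of $\tconstraint(\epsilon,\tprior)$, so $\supp\texper\subseteq\tconstraint(\epsilon,\tprior)$.

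For the ``if'' direction, I would exhibit the ``straightforward'' mechanism. Let $\texper$ be Bayes-plausible with $\supp\texper\subseteq\tconstraint(\epsilon,\tprior)$; since a mechanism with countable output set can only induce a countably-supported distribution, it is enough to treat $\supp\texper=\{\tbelief_k\}_k$ with weights $p_k=\texper(\{\tbelief_k\})$ summing to one. Set $S=\{k\}$ and $m(k\mid\theta)\equiv p_k\,\tbelief_k(\theta)/\tprior(\theta)$. By Lemma \ref{L:tconstraint}, $\tconstraint(\epsilon,\tprior)$ is disjoint from the edges of the simplex, so $\tbelief_k(\theta)>0$, and hence $m(k\mid\theta)>0$, for all $k$ and $\theta$; in particular every output in $S$ is generated with positive probability. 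Each $m(\cdot\mid\theta)$ is a probability distribution because $\sum_k m(k\mid\theta)=\tprior(\theta)^{-1}\sum_k p_k\tbelief_k(\theta)=\tprior(\theta)^{-1}(E_\texper\tbelief)(\theta)=1$ by Bayes-plausibility. Bayes' rule applied to the realization $k$ returns the posterior $\tbelief^m_k(\theta)=m(k\mid\theta)\tprior(\theta)/\sum_t m(k\mid t)\tprior(t)=p_k\tbelief_k(\theta)/p_k=\tbelief_k(\theta)$, so $(S,m)$ induces $\texper$; and for adjacent $\theta,\theta'$, $|\log(m(k\mid\theta')/m(k\mid\theta))|=|\log(\tbelief_k(\theta')/\tbelief_k(\theta))-\log(\tprior(\theta')/\tprior(\theta))|\le\epsilon$ since $\tbelief_k\in\tconstraint(\epsilon,\tprior)$, so $(S,m)$ is $\epsilon$-differentially private.

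The only points requiring care are in the reverse construction: checking that $m(\cdot\mid\theta)$ is genuinely a probability measure --- which is exactly where Bayes-plausibility enters --- and invoking the disjointness of $\tconstraint(\epsilon,\tprior)$ from the simplex boundary (Lemma \ref{L:tconstraint}) to guarantee that every $m(k\mid\theta)$ is strictly positive, so that $S$ contains no redundant outputs. Everything else is immediate from the Bayes'-rule identity \eqref{E:Bayes}, which the excerpt has already established.
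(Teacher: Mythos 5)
Your proof is correct and follows essentially the same logic as the paper's: the paper simply cites \cite{KamenicaBayesianPersuasion2011} for the equivalence between Bayes-plausibility and inducibility and then observes that, by identity \eqref{E:Bayes}, any mechanism inducing $\texper$ is $\epsilon$-differentially private precisely when $\supp\texper\subseteq\tconstraint(\epsilon,\tprior)$, whereas you unpack that citation by exhibiting the canonical mechanism $m(k\mid\theta)=p_k\tbelief_k(\theta)/\tprior(\theta)$ and verifying it directly. Your additional care about countable support (forced by the paper's restriction to countable $S$) and about strict positivity of $m(k\mid\theta)$ via Lemma \ref{L:tconstraint} are points the paper's one-line proof leaves implicit, so this is a more detailed rendering of the same argument rather than a different route.
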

 
 \begin{proof}
 By \cite{KamenicaBayesianPersuasion2011},
 $\texper$ can be induced with some mechanism $(S,m)$ if and only if it is Bayes-plausible. From Bayes' rule \eqref{E:Bayes} and the definition of differential privacy \eqref{E:DP}, that mechanism is $\epsilon$-differentially private if and only if $\supp\texper\subset\tconstraint(\epsilon,\tprior)$ as well.
 \end{proof}
 
 \begin{proofof}{\bf Proposition \ref{P:DesignProblemT} (Differentially Private Data Publication as Information Design)}
Follows immediately from Lemma \ref{L:DP}.\hfill $\square $\smallskip
\end{proofof}

\begin{lem}[Characterization of Differentially Private Posteriors]\label{L:tconstraint}
For each $\epsilon>0$,
\begin{enumerate}[i.]
\item $\tconstraint(\epsilon,\tprior)$ lies in the relative interior of $\Delta(\tspace)$. %
\label{I:tconstraintintsimplex}
\item $\tprior$ lies in the relative interior of $\tconstraint(\epsilon,\tprior)$.\label{I:tconstraintintprior}
\item $\tconstraint(\epsilon,\tprior)$ is a closed convex polyhedron in $\mathbb{R}^{2^{N(T+1)}}$.\label{I:tconstraintconvex}
\end{enumerate}
\end{lem}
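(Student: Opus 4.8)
The plan is to start by rewriting the defining constraints \eqref{E:DP_posteriorT} in linear form. For any adjacent pair $\theta,\theta'$ we have $\tprior(\theta),\tprior(\theta')>0$ since $\supp\tprior=\tspace$, so on the set of beliefs with $\tbelief(\theta),\tbelief(\theta')>0$ the constraint in \eqref{E:DP_posteriorT} for $(\theta,\theta')$ is equivalent to the pair of linear inequalities
\[e^{-\epsilon}\,\frac{\tprior(\theta)}{\tprior(\theta')}\,\tbelief(\theta')\ \leq\ \tbelief(\theta)\ \leq\ e^{\epsilon}\,\frac{\tprior(\theta)}{\tprior(\theta')}\,\tbelief(\theta').\]
I would record two consequences. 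First, $\tbelief=\tprior$ satisfies every such constraint, with the middle term strictly between the two bounds because $\epsilon>0$; so $\tprior\in\tconstraint(\epsilon,\tprior)$. Second (positivity propagation): for adjacent $\theta,\theta'$ there is a constraint of the form $\tbelief(\theta)\geq c\,\tbelief(\theta')$ with $c>0$, so if $\tbelief\geq 0$ satisfies all the linear inequalities and $\tbelief(\theta')>0$, then $\tbelief(\theta)>0$.

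Parts \ref{I:tconstraintintsimplex} and \ref{I:tconstraintconvex} then both follow from positivity propagation together with the connectedness of the adjacency graph on $\tspace=\{0,1\}^N$ (the Boolean hypercube). If $\tbelief\in\Delta(\tspace)$ satisfies all the linear inequalities, then $\tbelief(\theta^\star)>0$ for some $\theta^\star$, and propagating positivity along edge-paths from $\theta^\star$ gives $\tbelief(\theta)>0$ for every $\theta$; hence $\tbelief$ has full support, i.e.\ $\tbelief\in\ri(\Delta(\tspace))$, and on that set the linear inequalities are equivalent to \eqref{E:DP_posteriorT}. This shows at once that $\tconstraint(\epsilon,\tprior)\subseteq\ri(\Delta(\tspace))$, which is part \ref{I:tconstraintintsimplex}, and that $\tconstraint(\epsilon,\tprior)$ is exactly the set of $\tbelief\in\Delta(\tspace)$ obeying the finitely many displayed linear inequalities. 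Since $\Delta(\tspace)$ is itself cut out in $\mathbb{R}^{2^N}$ by the hyperplane $\sum_\theta\tbelief(\theta)=1$ and the half-spaces $\tbelief(\theta)\geq0$, this exhibits $\tconstraint(\epsilon,\tprior)$ as a finite intersection of closed half-spaces and a hyperplane, hence a closed convex polyhedron --- part \ref{I:tconstraintconvex}.

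For part \ref{I:tconstraintintprior}, I would produce a relatively open neighborhood of $\tprior$ inside $\tconstraint(\epsilon,\tprior)$, which pins down both the affine hull and the relative-interior claim. Since $\supp\tprior=\tspace$ we have $\tprior\in\ri(\Delta(\tspace))$, and on $\ri(\Delta(\tspace))$ each function $\tbelief\mapsto\log\frac{\tbelief(\theta)}{\tbelief(\theta')}-\log\frac{\tprior(\theta)}{\tprior(\theta')}$ is continuous and equals $0$ at $\tprior$; because $\epsilon>0$, every constraint \eqref{E:DP_posteriorT} holds with strict slack at $\tprior$. Taking a small enough ball $B$ around $\tprior$, the set $U\equiv B\cap\ri(\Delta(\tspace))$ is nonempty, open in $\aff(\Delta(\tspace))=\{x\in\mathbb{R}^{2^N}:\sum_\theta x_\theta=1\}$, and contained in $\tconstraint(\epsilon,\tprior)$. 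Being nonempty and open in $\aff(\Delta(\tspace))$, $U$ has $\aff(U)=\aff(\Delta(\tspace))$, so $\aff(\tconstraint(\epsilon,\tprior))=\aff(\Delta(\tspace))$ (the reverse inclusion being immediate); then $\tprior\in U\subseteq\tconstraint(\epsilon,\tprior)$ with $U$ relatively open in $\aff(\tconstraint(\epsilon,\tprior))$ yields $\tprior\in\ri(\tconstraint(\epsilon,\tprior))$.

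I do not anticipate a genuine obstacle; the only real care is bookkeeping. The two points to watch are to establish the full-support conclusion (part \ref{I:tconstraintintsimplex}) before moving freely between \eqref{E:DP_posteriorT} and the linear inequalities when proving part \ref{I:tconstraintconvex}, and to keep ``relative interior'' and ``affine hull'' relative to the hyperplane $\{\sum_\theta x_\theta=1\}$ throughout part \ref{I:tconstraintintprior} rather than to all of $\mathbb{R}^{2^N}$.
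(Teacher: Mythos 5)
Your argument is correct and follows essentially the same route as the paper's: you rewrite the log-ratio constraints as linear inequalities for part (iii), invoke connectedness of the hypercube adjacency graph to get full support for part (i) (the paper phrases this as finding an adjacent pair $\theta',\theta''$ with $\tbelief(\theta')=0<\tbelief(\theta'')$, which is your positivity propagation in contrapositive form), and use continuity of the log-ratio functions with strict slack at $\tprior$ to exhibit a relatively open neighborhood for part (ii). Your version is somewhat more careful in making explicit that the linear and logarithmic descriptions agree on $\Delta(\tspace)$ and that $\aff(\tconstraint(\epsilon,\tprior))=\aff(\Delta(\tspace))$, but these are bookkeeping refinements rather than a different argument.
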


\begin{proof}

(\ref{I:tconstraintintsimplex}): If $\tbelief\in\rbd(\Delta(\tspace))$, then $\tbelief(\theta)=0$ for some $\theta\in\tspace$. 
We cannot have $\tbelief(\theta)=0$ for all $\theta\in\tspace$, so it follows that there must exist $\theta',\theta''\in\tspace$, with  $\theta''_{-i}=\theta'_{-i}$ for some $i$, such that $\tbelief(\theta')=0$ and $\tbelief(\theta'')>0$. Then $\left|\log(\tbelief(\theta'')/\tbelief(\theta'))\right|=\infty$, so $\tbelief\notin\tconstraint(\epsilon,\tprior)$.

(\ref{I:tconstraintintprior}): 
Since $\log(x)$ is a continuous function on $(0,\infty)$, for each $\theta,\theta'\in\tspace$, $\log(\tbelief(\theta)/\tbelief(\theta'))$ is continuous in 
$\tbelief$ on $\ri(\Delta(\tspace))$. Then for each $\theta,\theta'\in\tspace$ with  $\theta_{-i}=\theta'_{-i}$ for some $i$, the set \[\left\{\tbelief\in\ri(\Delta(\tspace))\mid\log\left(\frac{\tbelief(\theta)}{\tbelief(\theta')}\right)\in\left(\log\left(\frac{\tprior(\theta)}{\tprior(\theta')}\right)-\epsilon,\log\left(\frac{\tprior(\theta)}{\tprior(\theta')}\right)+\epsilon\right)\right\}\] is open in $\ri(\Delta(\tspace))$ and hence $\aff(\Delta(\tspace))$. 
Then so is their intersection, \[\hat{K}=\left\{\tbelief\in\ri(\Delta(\tspace))\mid-\epsilon<\log\left(\frac{\tbelief(\theta)}{\tbelief(\theta')}\right)-\log\left(\frac{\tprior(\theta)}{\tprior(\theta')}\right)<\epsilon\ \forall\theta,\theta'\text{ s.t. }\exists i,\theta_{-i}=\theta'_{-i} \right\}.\] 
Then since $\hat{K}$ is open and $\tprior\in\hat{K}\subset \tconstraint(\tprior,\epsilon)$, we have
$\tprior\in\ri(\tconstraint(\tprior,\epsilon))$, as desired.

 (\ref{I:tconstraintconvex}): We have
 \small
\begin{align*}
\tconstraint(\epsilon,\tprior)
&=\left\{\tbelief\in\Delta(\tspace)\left| e^{-\epsilon}\leq\frac{\tbelief(\theta)/\tprior(\theta)}{\tbelief(\theta')/\tprior(\theta')}\leq e^\epsilon\ \forall\theta,\theta'\text{ s.t. }\exists i,\theta_{-i}=\theta'_{-i}\right.\right\}\\
&=\left\{\tbelief\in\Delta(\tspace)\left| e^{-\epsilon}\tbelief(\theta')\frac{\tprior(\theta)}{\tprior(\theta')}-\tbelief(\theta)\leq 0\leq e^\epsilon\tbelief(\theta')\frac{\tprior(\theta)}{\tprior(\theta')}-\tbelief(\theta)\ \forall\theta,\theta'\text{ s.t. }\exists i,\theta_{-i}=\theta'_{-i}\right.\right\}
\end{align*}
\normalsize
Thus, $\tconstraint(\epsilon,\tprior)$ %
is the intersection of finitely many half-spaces described by the inequalities 
\begin{align}
\tbelief\cdot \left( \textbf{1}_{\theta}-e^{-\epsilon}\begin{array}{@{}c@{}}\frac{\tprior(\theta)}{\tprior(\theta')}\end{array}\textbf{1}_{\theta'}\right)&\geq 0,& 
\tbelief\cdot \left(e^{\epsilon}\begin{array}{@{}c@{}}\frac{\tprior(\theta)}{\tprior(\theta')}\end{array}\textbf{1}_{\theta'}-\textbf{1}_{\theta}\right)&\geq 0,& \forall\theta,\theta'&\text{ s.t. }\exists i,\theta_{-i}=\theta'_{-i};\nonumber
\\
\tbelief\cdot 1&\geq 1; &
\tbelief\cdot 1&\leq 1. & &\label{E:tconstraintsim}
\end{align}
(From (\ref{I:tconstraintintsimplex}), this intersection lies inside the intersection of the half-spaces $\tbelief(\theta)\geq 0$, and so the latter can be omitted.) 
Then by definition, $\tconstraint(\epsilon,\tprior)$ is a  closed convex polyhedron.
\end{proof}

\begin{cor}[Characterization of Oblivious Differentially Private Posteriors]\label{C:constraint}
For each $\epsilon>0$,
\begin{enumerate}[i.]
\item $\constraint(\epsilon,\prior)$ lies in the relative interior of $\Delta(\Omega)$. %
\label{I:constraintintsimplex}
\item $\prior$ lies in the relative interior of $\constraint(\epsilon,\prior)$.\label{I:constraintintprior}
\item $\constraint(\epsilon,\prior)$ is a closed convex polyhedron in $\mathbb{R}^{N+1}$.\label{I:constraintconvex}
\end{enumerate}
\end{cor}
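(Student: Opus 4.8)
The plan is to mimic, essentially verbatim, the proof of Lemma~\ref{L:tconstraint}, under the translation that replaces the simplex $\Delta(\tspace)$ of beliefs about the database with the simplex $\Delta(\Omega)$ of beliefs about the state, replaces pairs of neighboring databases $\theta,\theta'$ by pairs of consecutive states $\omega,\omega-1$, and replaces $\tprior$ by $\prior$ (which inherits full support $\supp\prior=\Omega$ from $\supp\tprior=\tspace$). The only structural feature of the neighboring-database relation used in Lemma~\ref{L:tconstraint} is that it makes $\tspace$ into a connected graph; the same holds here, since the consecutive-state relation makes $\Omega=\{0,\dots,N\}$ into the path $0-1-\cdots-N$, which is connected. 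With that observation, each of the three parts goes through unchanged.

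For part~(\ref{I:constraintintsimplex}): if $\belief\in\rbd(\Delta(\Omega))$ then $\belief(\omega)=0$ for some $\omega\in\Omega$ but not for all of them, so walking along the path from a state with positive probability to one with zero probability produces a consecutive pair $\omega-1,\omega$ with one probability zero and the other positive; then the corresponding log-ratio in \eqref{E:DP_posterior} is infinite, so $\belief\notin\constraint(\epsilon,\prior)$. For part~(\ref{I:constraintintprior}): since $\log$ is continuous on $(0,\infty)$, for each $\omega\in\{1,\dots,N\}$ the set of $\belief\in\ri(\Delta(\Omega))$ for which $\log(\belief(\omega)/\belief(\omega-1))-\log(\prior(\omega)/\prior(\omega-1))$ lies strictly between $-\epsilon$ and $\epsilon$ is open in $\aff(\Delta(\Omega))$; the intersection of these $N$ sets is therefore a subset of $\constraint(\epsilon,\prior)$ that is open in $\aff(\Delta(\Omega))$ and contains $\prior$, so $\prior\in\ri(\constraint(\epsilon,\prior))$. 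For part~(\ref{I:constraintconvex}): rewriting each absolute-value constraint as the two linear inequalities
\[
\belief\cdot\Bigl(\textbf{1}_{\omega}-e^{-\epsilon}\tfrac{\prior(\omega)}{\prior(\omega-1)}\textbf{1}_{\omega-1}\Bigr)\ge 0,\qquad \belief\cdot\Bigl(e^{\epsilon}\tfrac{\prior(\omega)}{\prior(\omega-1)}\textbf{1}_{\omega-1}-\textbf{1}_{\omega}\Bigr)\ge 0,
\]
together with $\belief\cdot\textbf{1}\ge 1$ and $\belief\cdot\textbf{1}\le 1$, exhibits $\constraint(\epsilon,\prior)$ as a finite intersection of half-spaces (part~(\ref{I:constraintintsimplex}) lets us drop the nonnegativity constraints $\belief(\omega)\ge 0$), i.e., a closed convex polyhedron in $\mathbb{R}^{N+1}$.

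I do not anticipate any real obstacle: the argument is a direct transcription of Lemma~\ref{L:tconstraint}, and the only point requiring a moment's thought --- that the consecutive-state adjacency graph on $\Omega$ is connected, which is what makes part~(\ref{I:constraintintsimplex}) work --- is immediate because that graph is a path. It is worth noting what \emph{not} to do: one might be tempted to derive the corollary from Lemma~\ref{L:tconstraint} via the identity $\constraint(\epsilon,\prior)=\proj\tconstraint(\epsilon,\tprior)$, but that identity is established only under anonymity (Theorem~\ref{T:OblivEquiv}), whereas the corollary is unconditional, so the self-contained transcription above is the right route.
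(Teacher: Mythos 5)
Your proposal is correct and follows exactly the route the paper intends: the paper's proof of this corollary is simply ``Follows identically to Lemma \ref{L:tconstraint},'' and you have faithfully transcribed that proof with the appropriate substitutions. Your remark that the connectedness of the consecutive-state adjacency graph is the structural feature that makes part (i) go through is a useful clarification, and your warning against deriving the corollary via $\constraint(\epsilon,\prior)=\proj\tconstraint(\epsilon,\tprior)$ is well taken, since that identity requires anonymity.
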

\begin{proof}
Follows identically to Lemma \ref{L:tconstraint}.
\end{proof}

\begin{proofof}{\bf Proposition \ref{P:DPObliv} (Differential Privacy for Oblivious Mechanisms)}
((\ref{I:DPObliv:DP})$\Rightarrow$(\ref{I:DPObliv:DPObliv})) 
For each $\omega\in\Omega\setminus\{0\}$, choose $i\in\{1,\ldots,N\}$ and $t\in\{1,\ldots,\min\{\omega,\hitype\}\}$, and choose $\theta$ such that $\theta_i=t$ and $\omega_\theta=\omega$, and $\theta'$ such that $\theta_{-i}=\theta'_{-i}$ and $\theta_i=0$. Then since $(S,\sigma)$ is $\epsilon$-differentially private,
for each $s\in S$,
$
\left|\log\left({\sigma(s|\omega)}/ {\sigma(s|\omega-t)}\right)\right|=
\left|\log\left({\sigma(s|\omega_{\theta})}/ {\sigma(s|\omega_{\theta'})}\right)\right|\leq\epsilon,
$
since $\omega_{\theta'}=\omega-t$; (\ref{I:DPObliv:DPObliv}) follows. 

((\ref{I:DPObliv:DPObliv})$\Rightarrow$(\ref{I:DPObliv:DP})) If $\theta,\theta'\in\{0,\ldots,T\}^{N}$ 
are such that $\theta_{-i}=\theta'_{-i}$ for some $i$,
then either $\theta=\theta'$, in which case (\ref{E:DP}) holds trivially, or 
$|\omega_\theta-\omega_{\theta'}|=t$ for some $1\leq t\leq\min\{\hitype,\max\{\omega_\theta,\omega_{\theta'}\}\}$, in which case (\ref{I:DPObliv:DPObliv}) implies that for each $s\in S$, $\left|\log\left({\sigma(s|\omega_{\theta})}/ {\sigma(s|\omega_{\theta'})}\right)\right|=\left|\log\left({\sigma(s|\omega_{\theta'})}/ {\sigma(s|\omega_{\theta})}\right)\right|\leq\epsilon,$  and hence, since $(S,\sigma)$ is oblivious, (\ref{E:DP}).

((\ref{I:DPObliv:DPObliv})$\Leftrightarrow$(\ref{I:DPObliv:posterior})) Follows from Bayes' rule, since
\begin{align*}\frac{\belief(\omega')}{\belief(\omega)}=
\left.\frac{\sigma(s|\omega')\prior(\omega')}{\sum_{x\in\Omega}\sigma(s|x)\prior(x)}\right/\frac{\sigma(s|\omega)\prior(\omega)}{\sum_{x\in\Omega}\sigma(s|x)\prior(x)}=
\frac{\sigma(s|\omega')}{\sigma(s|\omega)}\frac{\prior(\omega')}{\prior(\omega)}.
\end{align*}
\end{proofof}

\begin{lem}[Characterization of Extreme Points of $\tconstraint(\epsilon,\tprior)$]\label{L:extremalT} \hspace{0pt}
\begin{enumerate}[i.]
\item If the support of a distribution
$\texper$
of $\epsilon$-differentially private posteriors 
contains a belief that is not an extreme point of $\tconstraint(\epsilon,\tprior)$, then there is some distribution of posterior beliefs $\hat{\texper}$
with $\supp\hat{\texper}\subseteq\ext(\tconstraint(\epsilon,\tprior))$
that is more informative than $\texper$ (in the Blackwell sense).
\label{I:BoundAttainedT}
\item Each extreme point of $\tconstraint(\epsilon,\tprior)$ attains at least $2^{N(T+1)}-1$ privacy bounds: 
For each $\tbelief\in\ext(\tconstraint(\epsilon,\tprior))$,
$\left|\log\left(\frac{\tbelief(\theta)}{\tbelief(\theta')}\right)-\log\left(\frac{\tprior(\theta)}{\tprior(\theta')}\right)\right|=\epsilon$ for at least $2^{N(T+1)}-1$ distinct combinations $(\theta,\theta')$ that have $\theta_{-i}=\theta'_{-i}$ for some $i$.
\label{I:VerticesT}
\end{enumerate}
\end{lem}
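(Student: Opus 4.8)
The plan is to prove the two parts separately, each as a short application of a standard structural fact layered on top of Lemma \ref{L:tconstraint}: Blackwell's garbling theorem for (\ref{I:BoundAttainedT}), and the basic-solution characterization of polyhedral vertices for (\ref{I:VerticesT}).

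For (\ref{I:BoundAttainedT}), I would ``split'' each non-extremal posterior in $\supp\texper$ into extreme points of $\tconstraint(\epsilon,\tprior)$ and bundle the results into a single, Blackwell-more-informative distribution. Concretely: by Lemma \ref{L:tconstraint}(\ref{I:tconstraintconvex}), $\tconstraint(\epsilon,\tprior)$ is a compact convex polyhedron, hence the convex hull of its finitely many extreme points; fixing a triangulation of it with vertices in $\ext(\tconstraint(\epsilon,\tprior))$ gives, for each $\tbelief\in\tconstraint(\epsilon,\tprior)$, a finitely supported $\lambda_{\tbelief}\in\Delta(\ext(\tconstraint(\epsilon,\tprior)))$ with barycenter $\tbelief$ (the barycentric coordinates of $\tbelief$ in the simplex containing it), chosen measurably in $\tbelief$. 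Setting $\hat\texper\equiv\int\lambda_{\tbelief}\,d\texper(\tbelief)$, one has $\supp\hat\texper\subseteq\ext(\tconstraint(\epsilon,\tprior))$; since each $\lambda_{\tbelief}$ is mean-preserving, $E_{\hat\texper}\tbelief=E_\texper\tbelief$; and for any convex $\phi$, Jensen's inequality gives
\[ E_{\hat\texper}\phi=\int\Bigl(\textstyle\sum_{v}\lambda_{\tbelief}(v)\phi(v)\Bigr)d\texper(\tbelief)\ \geq\ \int\phi(\tbelief)\,d\texper(\tbelief)=E_\texper\phi, \]
so $\hat\texper$ is a mean-preserving spread of $\texper$. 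By Blackwell's theorem \citep{blackwell1953equivalent}, a signal inducing $\hat\texper$ is then more informative than one inducing $\texper$ (strictly so, since the presence of a non-extremal posterior in $\supp\texper$ forces at least one $\lambda_{\tbelief}$ to be non-degenerate). One can sidestep the measurable selection entirely by first using Carath\'eodory's theorem to reduce to finitely supported $\texper$, as elsewhere in the paper.

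For (\ref{I:VerticesT}), I would use the explicit half-space description of $\tconstraint(\epsilon,\tprior)$ derived in the proof of Lemma \ref{L:tconstraint}(\ref{I:tconstraintconvex}). After discarding the nonnegativity constraints --- which by Lemma \ref{L:tconstraint}(\ref{I:tconstraintintsimplex}) are never active on $\tconstraint(\epsilon,\tprior)$ --- this set consists of all $\tbelief\in\mathbb{R}^{2^N}$ satisfying the single equality $\tbelief\cdot\mathbf{1}=1$ together with one inequality $\tbelief\cdot\bigl(e^{\epsilon}\tfrac{\tprior(\theta)}{\tprior(\theta')}\mathbf{1}_{\theta'}-\mathbf{1}_{\theta}\bigr)\geq 0$ for each ordered pair $(\theta,\theta')$ with $\theta_{-n}=\theta'_{-n}$ for some $n$; the $(\theta,\theta')$-inequality binds at $\tbelief$ exactly when $\log\tfrac{\tbelief(\theta)}{\tbelief(\theta')}-\log\tfrac{\tprior(\theta)}{\tprior(\theta')}=\epsilon$. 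Now I would invoke the standard fact that a point of a polyhedron in $\mathbb{R}^{L}$ is a vertex iff the normals of the constraints active there span $\mathbb{R}^{L}$ (otherwise a short segment of the line through the point lying in the common zero set of those active constraints stays feasible, contradicting extremality). At a vertex $\tbelief^*\in\tconstraint(\epsilon,\tprior)$ with $L=2^N$, the always-active equality supplies the normal $\mathbf{1}$, so at least $2^N-1$ of the privacy inequalities must be active there; and the inequalities indexed by $(\theta,\theta')$ and $(\theta',\theta)$ cannot both be active, as that would force $\tbelief^*(\theta)/\tprior(\theta)=e^{2\epsilon}\,\tbelief^*(\theta)/\tprior(\theta)$, i.e.\ $\tbelief^*(\theta)=0$, contradicting $\tbelief^*\in\ri(\Delta(\tspace))$. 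Hence the $\geq 2^N-1$ active privacy inequalities correspond to $\geq 2^N-1$ distinct (unordered) adjacent pairs, each of which has $\bigl|\log\tfrac{\tbelief^*(\theta)}{\tbelief^*(\theta')}-\log\tfrac{\tprior(\theta)}{\tprior(\theta')}\bigr|=\epsilon$, as claimed.

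The genuinely delicate part is the accounting in (\ref{I:VerticesT}): one must use precisely the linear description from Lemma \ref{L:tconstraint}, keep track that the equality $\tbelief\cdot\mathbf{1}=1$ is always active (and so consumes one of the $2^N$ needed linearly independent normals) while the nonnegativity constraints are never active, and rule out a single adjacent pair contributing two active inequalities. Part (\ref{I:BoundAttainedT}), by contrast, is little more than a repackaging of Blackwell's theorem once the splitting map is set up.
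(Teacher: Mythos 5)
Your proof is correct and follows essentially the same route as the paper's: part (i) decomposes each non-extremal posterior into a mean-preserving mixture over extreme points and appeals to Blackwell's theorem, while part (ii) counts linearly independent active constraint normals at a vertex, noting that the sum-to-one equality consumes one of the $2^N$ needed directions and that both bounds for the same adjacent pair cannot bind simultaneously since $\tbelief(\theta)>0$. Your version is slightly more careful about measurability in part (i) (via a fixed triangulation) and uses a cleaner half-space parameterization in part (ii), but the key ideas are identical.
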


\begin{proof}
(\ref{I:BoundAttainedT}): By Lemma \ref{L:tconstraint} (\ref{I:tconstraintconvex}), $\tconstraint(\epsilon,\tprior)$ is a convex closed polyhedron; since it is a subset of the simplex $\Delta(\tspace)$, it is bounded and thus compact. 
Then by the Minkowski-Weyl theorem, $\tconstraint(\epsilon,\tprior)$ has finitely many extreme points and $\tconstraint(\epsilon,\tprior)=\conv(\ext(\tconstraint(\epsilon,\tprior)))$.
Then for each $\tbelief\in\supp\texper$, we can write $\tbelief=\sum_{\tbelief'\in\ext(\tconstraint(\epsilon,\tprior))}\lambda(\tbelief,\tbelief')\tbelief'$ for some $\{\lambda(\tbelief,\tbelief')\}_{\tbelief'\in\ext(\constraint(\epsilon,\prior))}\subset[0,1]$ with $\sum_{\tbelief'\in\ext(\tconstraint(\epsilon,\tprior))}\lambda(\tbelief,\tbelief')=1$.
Then $\lambda$ is a mean-preserving stochastic transformation (in the sense of \cite{blackwell1953equivalent}); consequently, by \cite{blackwell1953equivalent} Theorem 2, 
$\hat{\texper}$ defined by
$\hat{\texper}(\tbelief')=
\int_{\Delta(\Omega)}\lambda(\tbelief,\tbelief')d\exper(\tbelief)$
is more informative than $\texper$.

(\ref{I:VerticesT}): 
Recall
that the extreme points of a polyhedron $P\subset\mathbb{R}^{L}$ defined by $P=\{x|x\cdot z_j\geq b_j, j\in\{1,\ldots,d\}\}$ for $\{z_i\}_{j=1}^{d}\subset \mathbb{R}^{L}$ and $\{b_j\}_{j=1}^d\subset\mathbb{R}$ are precisely the basic feasible solutions of a linear program on $P$; i.e., those $x\in P$ such that $x\cdot z_j=b_j$ for a linearly independent subset of the $z_j$ with $L$ elements.
By Lemma \ref{L:tconstraint} (\ref{I:tconstraintconvex}), $\tconstraint(\epsilon,\tprior)\subset\mathbb{R}^{2^{N(T+1)}}$ is a closed convex polyhedron described by the inequalities 
in
(\ref{E:tconstraintsim}).
It follows that at least $2^{N(T+1)}$ of those inequalities must bind at any $\tbelief\in\tconstraint(\epsilon,\tprior)$.
For any $\theta\in\tspace$ and $\tbelief\in\tconstraint(\epsilon,\tprior)$, $\tbelief(\theta)\neq 0$ by Lemma \ref{L:tconstraint} (\ref{I:tconstraintintsimplex}), and so for any $\theta,\theta'\in\tspace$, we cannot have both %
\begin{align}
\tbelief\cdot \left( \textbf{1}_{\theta}-e^{-\epsilon}\frac{\tprior(\theta)}{\tprior(\theta')}\textbf{1}_{\theta'}\right)&=0 & \text{ and } & &
\tbelief\cdot \left(e^{\epsilon}\frac{\tprior(\theta)}{\tprior(\theta')}\textbf{1}_{\omega-1}-\textbf{1}_{\omega}\right)&= 0.\label{E:constraintbind}
\end{align}
And since both $\tbelief\cdot \textbf{1}\geq 1$ and $\tbelief\cdot (-\textbf{1})\geq -1$ must bind, and $\textbf{1}$ and $-\textbf{1}$ are linearly dependent, it follows that we must have $\left|\log\left(\frac{\tbelief(\theta)}{\tbelief(\theta')}\right)-\log\left(\frac{\tprior(\theta)}{\tprior(\theta')}\right)\right|=\epsilon$ for at least $2^{N(T+1)}-1$ distinct combinations $(\theta,\theta')$ that have $\theta_{-i}=\theta'_{-i}$ for some $i$.
\end{proof}

\begin{cor}[Characterization of Extreme Points of $\constraint(\epsilon,\prior)$]\label{C:extremal} \hspace{0pt}
Suppose that data is categorical.
\begin{enumerate}[i.]
 \item If the support of a distribution
$\exper$
of oblivious $\epsilon$-differentially private posteriors 
contains a belief that is not an extreme point of $\constraint(\epsilon,\prior)$, then there is some distribution of posterior beliefs about the population statistic $\hat{\exper}\in\exspace$
with $\supp\hat{\exper}\subseteq\ext(\constraint(\epsilon,\prior))$
that is more informative than $\exper$ (in the Blackwell sense).
\label{I:BoundAttained}
\item The extreme points of $\constraint(\epsilon,\prior)$ are precisely those for which the privacy bound is attained at all $\omega$: $\ext(\constraint(\epsilon,\prior))=
\left\{\belief\in\Delta(\Omega) \mid 
\left|\log\left(\frac{\belief(\omega)}{\belief(\omega-1)}\right)-\log\left(\frac{\prior(\omega)}{\prior(\omega-1)}\right)\right|
=\epsilon \ \forall\omega\in\Omega\setminus\{0\}\right\}$.\label{I:Vertices}
\end{enumerate}
\end{cor}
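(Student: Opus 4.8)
The plan is to mirror the proof of Lemma~\ref{L:extremalT}, working in $\Delta(\Omega)\subseteq\mathbb{R}^{N+1}$ in place of $\Delta(\tspace)\subseteq\mathbb{R}^{2^N}$ and invoking Corollary~\ref{C:constraint} wherever Lemma~\ref{L:tconstraint} was used; part~(\ref{I:Vertices}) additionally needs one observation specific to the ``chain'' structure of the oblivious privacy constraints. For part~(\ref{I:BoundAttained}) the argument is verbatim that of Lemma~\ref{L:extremalT}~(\ref{I:BoundAttainedT}): by Corollary~\ref{C:constraint}~(\ref{I:constraintconvex}), $\constraint(\epsilon,\prior)$ is a closed convex polyhedron, and being a subset of the simplex it is compact, so by Minkowski--Weyl it has finitely many extreme points and $\constraint(\epsilon,\prior)=\conv(\ext(\constraint(\epsilon,\prior)))$. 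Writing each $\belief\in\supp\exper$ as a convex combination of extreme points yields a mean-preserving stochastic transformation $\lambda$, and applying \cite{blackwell1953equivalent} Theorem~2 to $\lambda$ produces an $\hat\exper\in\exspace$ with $\supp\hat\exper\subseteq\ext(\constraint(\epsilon,\prior))$ that is more informative than $\exper$.

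For part~(\ref{I:Vertices}) I would establish the two inclusions separately. The inclusion ``$\subseteq$'' follows the basic-feasible-solution reasoning of Lemma~\ref{L:extremalT}~(\ref{I:VerticesT}): by Corollary~\ref{C:constraint}~(\ref{I:constraintconvex}) and~(\ref{I:constraintintsimplex}), $\constraint(\epsilon,\prior)\subseteq\mathbb{R}^{N+1}$ is the polyhedron cut out by the two privacy inequalities for each $\omega\in\{1,\dots,N\}$ together with $\belief\cdot\mathbf{1}=1$ (the non-negativity constraints being redundant), so each extreme point is a point at which $N+1$ linearly independent constraint vectors bind. The equality $\belief\cdot\mathbf{1}=1$ always binds, and as in \eqref{E:constraintbind}, for each $\omega$ at most one of its two privacy inequalities can bind, since both binding would force $e^\epsilon=e^{-\epsilon}$. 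Since there are only $N$ states, an extreme point must therefore attain a privacy bound at every $\omega\in\{1,\dots,N\}$. For ``$\supseteq$'', fix a set $\Psi\subseteq\{1,\dots,N\}$ and consider the requirement that $\belief(\omega)/\prior(\omega)=e^{\epsilon}\,\belief(\omega-1)/\prior(\omega-1)$ for $\omega\in\Psi$ and $\belief(\omega)/\prior(\omega)=e^{-\epsilon}\,\belief(\omega-1)/\prior(\omega-1)$ for $\omega\notin\Psi$, together with $\belief\cdot\mathbf{1}=1$. Solving the recursion expresses $\belief(1),\dots,\belief(N)$ in terms of $\belief(0)$, and the simplex equation then pins down $\belief(0)>0$; hence this linear system has a unique solution $\belief$, which automatically lies in $\ri(\Delta(\Omega))$ and, attaining every privacy bound, in $\constraint(\epsilon,\prior)$. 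Uniqueness of the solution means the $N+1$ binding constraint vectors are linearly independent, so $\belief$ is a basic feasible solution, i.e.\ an extreme point; and as $\Psi$ ranges over all subsets of $\{1,\dots,N\}$ this accounts for every belief in the set on the right-hand side of~(\ref{I:Vertices}).

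I expect the main obstacle to be the converse inclusion ``$\supseteq$'' in part~(\ref{I:Vertices}) --- verifying that attaining a privacy bound at every state is \emph{sufficient} for extremality, not merely necessary. This is exactly the step that upgrades the ``at least $2^N-1$ bounds'' conclusion of Lemma~\ref{L:extremalT}~(\ref{I:VerticesT}) to an exact characterization, and it goes through here precisely because the oblivious privacy constraints link only neighboring states: the corresponding constraint vectors, each supported on two consecutive coordinates, together with $\mathbf{1}$ are automatically linearly independent --- equivalently, the defining linear system has a unique solution obtained by forward substitution. The richer adjacency structure of $\tconstraint(\epsilon,\tprior)$ is what blocks the analogous sharpening in the database case.
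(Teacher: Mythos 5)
Your proposal is correct and takes essentially the same approach as the paper: part~(\ref{I:BoundAttained}) is the verbatim argument from Lemma~\ref{L:extremalT}~(\ref{I:BoundAttainedT}) via Minkowski--Weyl and \citeauthor{blackwell1953equivalent}'s theorem, and part~(\ref{I:Vertices}) uses the basic-feasible-solution characterization of extreme points in both directions. The only difference is cosmetic: where you justify linear independence of the $N+1$ binding constraint vectors by observing that the defining linear system has a unique solution (via forward substitution and the simplex equation), the paper simply asserts that the set $\{\mathbf{1}\}\cup\{e^{\pm\epsilon}\frac{\prior(\omega)}{\prior(\omega-1)}\mathbf{1}_{\omega-1}-\mathbf{1}_{\omega}\}_{\omega}$ is linearly independent for every choice of signs; your unique-solution argument is a clean way to see why that assertion holds.
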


\begin{proof}
(\ref{I:BoundAttained}):
Follows identically to Lemma \ref{L:extremalT} (\ref{I:BoundAttainedT}).

(\ref{I:Vertices}): 
Recall
that the extreme points of a polyhedron $P\subset\mathbb{R}^{N+1}$ defined by $P=\{\belief|\belief\cdot z_i\geq b_i, i\in\{1,\ldots,d\}\}$ for $\{z_i\}_{i=1}^{d}\subset \mathbb{R}^{N+1}$ and $\{b_i\}_{i=1}^d\subset\mathbb{R}$ are precisely the basic feasible solutions of a linear program on $P$; i.e., those $\belief\in P$ such that $\belief\cdot z_i=b_i$ for a linearly independent subset of the $z_i$ with $N+1$ elements.
By Lemma \ref{L:tconstraint} (\ref{I:constraintconvex}), $\constraint(\epsilon,\prior)$ is a closed convex polyhedron described by the inequalities 
\begin{align}
\belief\cdot \left( \textbf{1}_{\omega}-e^{-\epsilon}\begin{array}{@{}c@{}}\frac{\prior(\omega)}{\prior(\omega-1)}\end{array}\textbf{1}_{\omega-1}\right)&\geq 0,& 
\belief\cdot \left(e^{\epsilon}\begin{array}{@{}c@{}}\frac{\prior(\omega)}{\prior(\omega-1)}\end{array}\textbf{1}_{\omega-1}-\textbf{1}_{\omega}\right)&\geq 0,& \omega&\in\Omega\setminus\{0\};\nonumber
\\
\belief\cdot 1&\geq 1; &
\belief\cdot 1&\leq 1. & &\label{E:constraintsim}
\end{align}
Now by definition, $\belief\cdot \textbf{1}= 1$ for each $\belief\in\constraint(\epsilon,\prior)$.
For any $\omega\in\Omega$ and $\belief\in\constraint(\epsilon,\prior)$, $\belief(\omega)\neq 0$ by Lemma \ref{L:tconstraint} (\ref{I:constraintintsimplex}), and so for any $\omega\in\Omega\setminus\{0\}$, we cannot have both %
\begin{align}
\belief\cdot \left( \textbf{1}_{\omega}-e^{-\epsilon}\frac{\prior(\omega)}{\prior(\omega-1)}\textbf{1}_{\omega-1}\right)&=0 & \text{ and } & &
\belief\cdot \left(e^{\epsilon}\frac{\prior(\omega)}{\prior(\omega-1)}\textbf{1}_{\omega-1}-\textbf{1}_{\omega}\right)&= 0.\label{E:constraintbind}
\end{align}
 Hence, if $\belief\in \ext(\constraint(\epsilon,\prior))$, 
then for each $\omega\in\Omega\setminus\{0\}$, exactly one 
equation in (\ref{E:constraintbind})
must hold to reach a total of $N+1$ linearly independent binding inequalities from %
(\ref{E:constraintsim}); it follows that $|\log\left(\belief(\omega)/\belief(\omega-1)\right)-\log\left(\prior(\omega)/\prior(\omega-1)\right)|=\epsilon$ for each $\omega\in\Omega\setminus\{0\}$.

Conversely, if $|\log\left(\belief(\omega)/\belief(\omega-1)\right)-\log\left(\prior(\omega)/\prior(\omega-1)\right)|=\epsilon$ for each $\omega\in\Omega\setminus\{0\}$, then for each $\omega\in\Omega\setminus\{0\}$, 
one of the equations in (\ref{E:constraintbind})
must hold; since the set of vectors $Z=\{\textbf{1}\}\cup\{e^{\epsilon}\frac{\prior(\omega)}{\prior(\omega-1)}\textbf{1}_{\omega-1}-\textbf{1}_{\omega}|\omega\in S\}\cup\{e^{-\epsilon}\frac{\prior(\omega)}{\prior(\omega-1)}\textbf{1}_{\omega-1}-\textbf{1}_{\omega}|\omega\in(\Omega\setminus\{0\})\setminus S\}$ is linearly independent for each $S\subseteq \Omega\setminus\{0\}$, it follows that $\belief$ is a basic feasible solution for a linear program on $\constraint(\epsilon,\prior)$ and hence $\belief\in\ext(\constraint(\epsilon,\prior))$.
\end{proof}

\begin{lem}[Distributions of Posteriors with Affine and Linear Independent Supports]\label{L:independenceT}\hspace{0pt}
\begin{enumerate}[i.]
    \item If $\texper\in\texspace$ is Bayes-plausible and $\supp\texper$ is finite,\footnote{The finiteness assumption is unnecessary here, but is adopted for simplicity.} there is a Bayes-plausible $\texper'\in\texspace$ such that the posteriors in $\supp \texper'$ are affinely independent and $\supp\texper'\subseteq\supp\texper$.\label{I:independenceT:supportindependence}
    \item If $\{\tbelief_j\}_{j=1}^J\subset\Delta(\tspace)$ is affinely independent, it is linearly independent.\label{I:independenceT:AffLinDep}
    \item If $\texper\in\texspace$ is Bayes-plausible and $\supp\texper$ is linearly independent, then $\supp\texper'\neq\supp\texper$ for any Bayes-plausible $\texper'\in\texspace, \texper'\neq\texper$.\label{I:independenceT:uniqueindependent}
\end{enumerate}
\end{lem}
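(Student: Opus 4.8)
The plan is to treat the three parts separately, disposing of (\ref{I:independenceT:AffLinDep}) and (\ref{I:independenceT:uniqueindependent}) with short linear-algebra observations, and obtaining (\ref{I:independenceT:supportindependence}) by an iterative support-pruning argument in the spirit of Carath\'{e}odory's theorem, following \cite{lipnowski2017simplifying}. The whole lemma is elementary; the only care required is in bookkeeping.

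For part (\ref{I:independenceT:AffLinDep}), I would use that each $\tbelief\in\Delta(\tspace)$ is a vector in $\mathbb{R}^{2^N}$ whose entries sum to $1$, i.e., $\Delta(\tspace)$ lies in an affine hyperplane that does not contain the origin. Given an affinely independent family $\{\tbelief_j\}_{j=1}^J$ and scalars $c_1,\dots,c_J$ with $\sum_j c_j\tbelief_j=0$, summing the $2^N$ coordinates of this vector identity yields $\sum_j c_j=0$; thus the relation is an affine dependence, and affine independence forces $c_1=\cdots=c_J=0$, so the family is linearly independent. For part (\ref{I:independenceT:uniqueindependent}), suppose $\texper,\texper'\in\texspace$ are both Bayes-plausible with $\supp\texper=\supp\texper'=\{\tbelief_j\}_{j=1}^J$ linearly independent. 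Bayes-plausibility gives $\sum_j\texper(\tbelief_j)\tbelief_j=\tprior=\sum_j\texper'(\tbelief_j)\tbelief_j$, hence $\sum_j\bigl(\texper(\tbelief_j)-\texper'(\tbelief_j)\bigr)\tbelief_j=0$, and linear independence forces $\texper(\tbelief_j)=\texper'(\tbelief_j)$ for every $j$, i.e., $\texper=\texper'$; the contrapositive is the stated claim.

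For part (\ref{I:independenceT:supportindependence}), I would argue by iterated pruning. If $\supp\texper=\{\tbelief_j\}_{j=1}^J$ is already affinely independent, take $\texper'=\texper$. Otherwise there are scalars $\alpha_1,\dots,\alpha_J$, not all zero, with $\sum_j\alpha_j\tbelief_j=0$ and $\sum_j\alpha_j=0$; since they sum to zero and are not all zero, at least one is strictly positive and at least one strictly negative. For $t\in\mathbb{R}$ consider the candidate weights $\tbelief_j\mapsto\texper(\tbelief_j)+t\alpha_j$: they always sum to $1$ (because $\sum_j\alpha_j=0$) and always have barycenter $\sum_j(\texper(\tbelief_j)+t\alpha_j)\tbelief_j=\tprior$ (because $\sum_j\alpha_j\tbelief_j=0$), so they define a Bayes-plausible distribution whenever all of them are nonnegative. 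Moving $t$ away from $0$ in the appropriate direction until some weight first hits $0$ produces a Bayes-plausible distribution whose support is a proper subset of $\{\tbelief_j\}_{j=1}^J$. Iterating, and noting that the support is finite and strictly shrinks at each step, this terminates at a Bayes-plausible $\texper'$ whose support is affinely independent and contained in $\supp\texper$.

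I do not expect a substantive obstacle. The only points needing a little attention are, in part (\ref{I:independenceT:supportindependence}), checking that a pruning direction always exists (which is the sign observation on the $\alpha_j$ above) and that the process terminates (finiteness of the support), and, in part (\ref{I:independenceT:AffLinDep}), recalling that the equivalence of affine and linear independence here hinges on $\Delta(\tspace)$ lying in a hyperplane missing the origin.
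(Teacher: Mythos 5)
Your proof is correct and matches the paper's approach on all three parts: part (\ref{I:independenceT:AffLinDep}) by summing coordinates to turn a linear dependence into an affine one, and part (\ref{I:independenceT:uniqueindependent}) by forming the difference of weights and invoking linear independence. For part (\ref{I:independenceT:supportindependence}) the paper simply cites Carath\'{e}odory's theorem (Rockafellar, Theorem 17.1) whereas you spell out the standard pruning argument that proves it, but this is the same idea unrolled rather than a genuinely different route.
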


\begin{proof}
\eqref{I:independenceT:supportindependence}: Since $\supp\texper$ is finite and $\texper$ is Bayes-plausible, we have $\sum_{\tbelief\in\supp\texper}\texper(\tbelief)\tbelief=\tprior$, and so $\tprior\in\conv(\supp\texper)$. 
Then by Carath\'{e}odory's theorem (e.g., \cite{rockafellar} Theorem 17.1) there is some affinely independent $\{\tbelief_j\}_{j=1}^J\subseteq\supp\texper$ such that $\tprior\in\conv(\{\tbelief_j\}_{j=1}^J)$. 
Then there exist %
$\{\lambda_j\}_{j=1}^J\subset[0,1]$ such that $\sum_{j=1}^J\lambda_j=1$ and $\sum_{j=1}^J\lambda_j\tbelief_j=\tprior$. The claim follows by letting $\texper'(\tbelief_j)=\lambda_j$ for each $j\in\{1,\ldots,J\}$.

\eqref{I:independenceT:AffLinDep}: We prove the contrapositive.
Suppose that $\{\tbelief_j\}_{j=1}^J$ is linearly dependent. Then there exist $\{\lambda_j\}_{j=1}^J\subset\mathbb{R}$ such that $\sum_{j=1}^J\lambda_j\tbelief_j=0$ and $\lambda_j\neq 0$ for some $j$. 
Then $0=\sum_{\theta\in\tspace}\sum_{j=1}^J\lambda_j\tbelief_j(\theta)=\sum_{j=1}^J\lambda_j\sum_{\theta\in\tspace}\tbelief_j(\omega)$.
Since $\{\belief_j\}_{j=1}^J\subset\Delta(\tspace)$, $\sum_{\theta\in\tspace}\tbelief_j(\theta)=1$ for each $j\in\{1,\ldots,J\}$. It follows that $0=\sum_{j=1}^J\lambda_j$ and thus $\{\belief_j\}_{j=1}^J$ is affinely dependent.

\eqref{I:independenceT:uniqueindependent}: We proceed by contradiction: Suppose that $\supp\texper'=\supp\texper$ for some Bayes-plausible $\texper'\neq\texper$. Since $\supp\texper$ is linearly independent, it must be finite.
Since $\texper$ and $\texper'$ are both Bayes-plausible, we must have $\sum_{\tbelief\in\supp\texper}\texper(\tbelief)\tbelief=\sum_{\tbelief\in\supp\texper}\exper'(\tbelief)\tbelief=\tprior$. Then we have
$
\sum_{\tbelief\in\supp\texper}(\texper(\belief)-\texper'(\belief))\tbelief=0, 
$
with $\texper(\tbelief)-\texper'(\tbelief)\neq 0$ for some $\tbelief\in\supp\texper$, since $\texper\neq\texper'$. Then $\supp\texper$ is linearly dependent, a contradiction.
\end{proof}

\begin{cor}[Distributions of Posteriors About $\omega$ with Affine and Linear Independent Supports]\label{C:independence}\hspace{0pt}
\begin{enumerate}[i.]
    \item If $\exper\in\exspace$ is Bayes-plausible and $\supp\exper$ is finite, there is a Bayes-plausible $\exper'\in\exspace$ such that the posteriors in $\supp \exper'$ are affinely independent and $\supp\exper'\subseteq\supp\exper$.\label{I:independence:supportindependence}
    \item If $\{\belief_j\}_{j=1}^J\subset\Delta(\Omega)$ is affinely independent, it is linearly independent.\label{I:independence:AffLinDep}
    \item If $\exper\in\exspace$ is Bayes-plausible and $\supp\exper$ is linearly independent, then $\supp\exper'\neq\supp\exper$ for any Bayes-plausible $\exper'\in\exspace, \exper'\neq\exper$.\label{I:independence:uniqueindependent}
\end{enumerate}
\end{cor}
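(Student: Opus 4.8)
The plan is to observe that all three parts of Corollary \ref{C:independence} are the exact analogues of the corresponding parts of Lemma \ref{L:independenceT}, with the simplex of beliefs about the database $\Delta(\tspace)\subset\mathbb{R}^{2^N}$ replaced throughout by the simplex of beliefs about the state $\Delta(\Omega)\subset\mathbb{R}^{N+1}$, and with $\texper,\tbelief,\tprior$ replaced by $\exper,\belief,\prior$. Since none of the three arguments in Lemma \ref{L:independenceT} uses any structure specific to the database space --- only that beliefs are elements of a probability simplex embedded in Euclidean space --- they carry over essentially verbatim. I would state this and then record the three short arguments for completeness.

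For part (\ref{I:independence:supportindependence}): since $\supp\exper$ is finite and $\exper$ is Bayes-plausible, $\prior=\sum_{\belief\in\supp\exper}\exper(\belief)\belief\in\conv(\supp\exper)$. By Carath\'{e}odory's theorem there is an affinely independent subset $\{\belief_j\}_{j=1}^J\subseteq\supp\exper$ with $\prior\in\conv(\{\belief_j\}_{j=1}^J)$, i.e.\ $\prior=\sum_{j=1}^J\lambda_j\belief_j$ for some $\lambda_j\in[0,1]$ summing to one; setting $\exper'(\belief_j)=\lambda_j$ yields the desired Bayes-plausible distribution with affinely independent support contained in $\supp\exper$.

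For part (\ref{I:independence:AffLinDep}) I would argue the contrapositive: if $\{\belief_j\}_{j=1}^J\subset\Delta(\Omega)$ is linearly dependent, take $\{\lambda_j\}$ not all zero with $\sum_j\lambda_j\belief_j=0$; summing coordinates and using $\sum_{\omega\in\Omega}\belief_j(\omega)=1$ gives $\sum_j\lambda_j=0$, so the set is affinely dependent. For part (\ref{I:independence:uniqueindependent}): suppose $\exper'\neq\exper$ is Bayes-plausible with $\supp\exper'=\supp\exper$; note this support is finite, since a linearly independent subset of $\mathbb{R}^{N+1}$ has at most $N+1$ elements, and Bayes-plausibility of both distributions gives $\sum_{\belief\in\supp\exper}(\exper(\belief)-\exper'(\belief))\belief=\prior-\prior=0$ with at least one nonzero coefficient, contradicting linear independence of $\supp\exper$. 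There is no genuine obstacle here --- the content is a routine transcription of Lemma \ref{L:independenceT} --- and the only point meriting a moment's care is that the finiteness of the support needed in part (\ref{I:independence:uniqueindependent}) is automatic once the support is known to be linearly independent.
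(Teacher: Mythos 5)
Your proposal is correct and matches the paper exactly: the paper's proof of Corollary \ref{C:independence} consists of the single line ``Follows identically to Lemma \ref{L:independenceT},'' and your transcription of the three arguments from $\Delta(\tspace)\subset\mathbb{R}^{2^N}$ to $\Delta(\Omega)\subset\mathbb{R}^{N+1}$ is exactly what that line is asserting. Your added remark that finiteness of the support in part (\ref{I:independence:uniqueindependent}) follows automatically from linear independence in $\mathbb{R}^{N+1}$ is a small but accurate clarification of a step the paper's proof of Lemma \ref{L:independenceT} states without elaboration.
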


\begin{proof}
Follows identically to Lemma \ref{L:independenceT}.
\end{proof}

\begin{proofof}{\bf Proposition \ref{P:SolutionT} (Characterization of Optimal Data Publication Mechanisms)}
From Proposition \ref{P:DesignProblemT},  the maximized value of the designer's problem (\ref{E:OriginalProblem}) is \\ $\max_{\texper\in\Delta(\tconstraint(\epsilon,\tprior))}\{E_\texper \tvf(\tbelief)\text{ s.t. } E_\texper\tbelief=\tprior\}$.  
Since $\tvf(\tbelief)=\tvf_{\tconstraint(\epsilon,\tprior)}$ on $\tconstraint(\epsilon,\tprior)$, 
this is equal to
$\max_{\texper\in\Delta(\tconstraint(\epsilon,\tprior))}\{E_\texper \tvf_{\tconstraint(\epsilon,\tprior)}(\tbelief)\text{ s.t. }E_\texper\tbelief=\tprior\}.$
By Lemma \ref{L:tconstraint} (\ref{I:tconstraintintprior}), $\tprior\in\ri(\tconstraint(\epsilon,\tprior))$. 
Since $\tconstraint(\epsilon,\tprior)\subseteq\Delta(\tspace)$, $\tconstraint(\epsilon,\tprior)$ is bounded in $\mathbb{R}^{2^{N(T+1)}}$; then by Lemma \ref{L:tconstraint} (\ref{I:tconstraintconvex}), it is compact and convex. 
(\ref{I:SolutionT_Max}) then follows from  Proposition 3 in the Online Appendix to 
\cite{KamenicaBayesianPersuasion2011}.

For (\ref{I:SolutionT_ArgMax}), first 
choose $\texper\in
\arg
\max_{\texper\in\Delta(\tconstraint(\epsilon,\tprior))}\{E_\texper \tvf(\tbelief)\text{ s.t. }E_\texper\tbelief=\tprior\}
=\\
\arg\max_{\texper\in\Delta(\tconstraint(\epsilon,\tprior))}\{E_\texper \tvf_{\tconstraint(\epsilon,\tprior)}(\tbelief)\text{ s.t. }E_\texper\tbelief=\tprior\}$; such a $\texper$ exists by (\ref{I:Solution_Max}).
If $\supp\texper\nsubseteq\ext(\tconstraint(\epsilon,\tprior))$, then by Lemma \ref{L:extremalT} (\ref{I:BoundAttainedT}) we can choose a $\hat{\texper}$ with $\supp\hat{\texper}\subseteq\ext(\tconstraint(\epsilon,\tprior))$ which is  Blackwell-more informative than $\texper$ and hence also solves (\ref{E:DesignProblem}). If $\supp\texper\subseteq\ext(\tconstraint(\epsilon,\tprior))$, choose $\hat{\texper}=\texper$. Then $\supp\hat{\texper}$ is finite. Moreover, since $\tvf$ is convex, $E_\texper\tvf_{\tconstraint(\epsilon,\tprior)}(\tbelief)=E_\texper\tvf(\tbelief)\leq E_{\hat{\texper}}\tvf(\tbelief)=E_{\hat{\texper}}\tvf_{\tconstraint(\epsilon,\tprior)}(\tbelief)$, and $\hat{\texper}\in\arg\max_{\texper\in\Delta(\tconstraint(\epsilon,\tprior))}\{E_\texper \tvf_{\tconstraint(\epsilon,\tprior)}(\tbelief)\text{ s.t. }E_\texper\tbelief=\tprior\}$ as well.

Now by Lemma \ref{L:independenceT} \eqref{I:independenceT:supportindependence}, there is a Bayes-plausible $\texper^*$ such that $\supp\texper^*$ is an affinely independent set and $\supp\texper^*\subseteq\supp\hat{\texper}\subseteq\ext(\tconstraint(\epsilon,\tprior))$. 
Moreover, by Lemma 3 in \cite{yoder_JMP}, $\tccvf_{\tconstraint(\epsilon,\tprior)}(\tbelief)=\tvf_{\tconstraint(\epsilon,\tprior)}(\tbelief)=\tvf(\tbelief)$ for each $\tbelief\in\supp\hat{\texper}$, and in addition, $\tccvf_{\tconstraint(\epsilon,\tprior)}$ is affine on $\conv(\supp\hat{\texper})$: there exists $x\in\mathbb{R}^{2^{N(T+1)}}$ such that for each $\tbelief\in\conv(\supp\hat{\texper}))$, $\tccvf_{\tconstraint(\epsilon,\tprior)}(\tbelief)=\tccvf_{\tconstraint(\epsilon,\tprior)}(\tprior)+x\cdot (\tbelief-\tprior)$. 
Then since $\supp\texper^*\subseteq\supp\hat{\texper}$, we have $E_{\texper^*}\tvf(\tbelief)=E_{\texper^*}\tccvf_{\tconstraint(\epsilon,\tprior)}(\tbelief)=\tccvf_{\tconstraint(\epsilon,\tprior)}(\tprior)+E_{\texper^*}[x\cdot(\tbelief-\tprior)]=\tccvf_{\tconstraint(\epsilon,\tprior)}(\tprior)$, and so\\ $\texper^*\in\arg\max_{\texper\in\Delta(\tconstraint(\epsilon,\tprior))}\{E_\texper \tvf(\tbelief)\text{ s.t. }E_\texper\tbelief=\tprior\}$.

$\texper^*$ is induced by the mechanism $(\supp\texper^*,m^*)$ with $m^*(\tbelief|\theta)=\tbelief(\theta)\texper^*(\belief)/\tprior(\theta)$: By Bayes' rule, the posterior probability placed on $\omega$ by a decision maker who observes realization $\tbelief$ of $(\supp\texper^*,m^*)$ is given by
\[\frac{m^*(\tbelief|\theta)\tprior(\theta)}{\sum_{t\in\tspace}m^*(\tbelief|t)\tprior(t)}=\frac{\tbelief(\theta)\texper^*(\tbelief)}{\sum_{t\in\tspace}\tbelief(t)\texper^*(\tbelief)}=\frac{\tbelief(\theta)}{\sum_{t\in\tspace}\tbelief(t)}.
\]
Then by Proposition \ref{P:DesignProblemT}, $(\supp\texper^*,m^*)$ solves the designer's problem \eqref{E:OriginalProblem}.
Since it induces $\texper^*$, (\ref{I:SolutionT_Independence}) follows from affine independence of $\supp\texper^*$ and Lemma \ref{L:independenceT} \eqref{I:independenceT:AffLinDep}; \eqref{I:SolutionT_UniqueSupport} then follows from Lemma \ref{L:independenceT} \eqref{I:independenceT:uniqueindependent}.
And since $\supp\texper^*\subseteq\ext(\tconstraint(\epsilon,\tprior))$, (\ref{I:SolutionT_BoundAttained}) follows from Lemma \ref{L:extremalT} (\ref{I:BoundAttainedT}).
\hfill $\square $\smallskip
\end{proofof}

\subsection*{Oblivious vs. General Publication Mechanisms}

\begin{lem}\label{L:OblivProj}
The following are equivalent:
\begin{enumerate}[i.]
    \item If the distribution $\exper\in\exspace$ of posterior beliefs about the population statistic can be induced by an $\epsilon$-differentially private mechanism, it can be induced by an $\epsilon$-differentially private oblivious mechanism.\label{I:OblivProj:OblivEquiv}
    \item $\constraint(\epsilon,\prior)=\proj\tconstraint(\epsilon,\tprior)$.\label{I:OblivProj:ProjEquiv}
\end{enumerate}
\end{lem}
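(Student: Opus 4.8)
The plan is to restate both conditions as claims about which distributions of posteriors about the state are achievable, and then compare those. By Lemma~\ref{L:DP}, together with the fact that a decision maker who sees output $s$ of a mechanism $(S,m)$ holds the posterior $\proj\tbelief_s^m$ about the state, the distributions of posteriors about the state inducible by an $\epsilon$-differentially private mechanism are exactly the pushforwards $\texper\circ\proj^{-1}$ of Bayes-plausible $\texper\in\texspace$ with $\supp\texper\subset\tconstraint(\epsilon,\tprior)$. Likewise, combining \cite{KamenicaBayesianPersuasion2011} (applied to signals about $\Omega$) with Proposition~\ref{P:DPObliv}, the distributions inducible by an $\epsilon$-differentially private \emph{oblivious} mechanism are exactly the Bayes-plausible $\exper\in\exspace$ with $\supp\exper\subset\constraint(\epsilon,\prior)$. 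So the lemma asserts that the first class is contained in the second if and only if $\constraint(\epsilon,\prior)=\proj\tconstraint(\epsilon,\tprior)$. I would first record that $\constraint(\epsilon,\prior)\subseteq\proj\tconstraint(\epsilon,\tprior)$ \emph{always}, so that condition (ii) reduces to the single inclusion $\proj\tconstraint(\epsilon,\tprior)\subseteq\constraint(\epsilon,\prior)$.

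For that inclusion, given $\belief\in\constraint(\epsilon,\prior)$ I would spread it over databases using the prior's within-state weights: $\tbelief(\theta)\equiv\belief(\omega_\theta)\tprior(\theta)/\prior(\omega_\theta)$, which is well defined because $\supp\tprior=\tspace$ forces $\prior(\omega)>0$ for every $\omega$. One checks directly that $\tbelief\in\Delta(\tspace)$ and $\proj\tbelief=\belief$; and since $\tbelief(\theta)/\tbelief(\theta')=(\belief(\omega_\theta)/\belief(\omega_{\theta'}))\cdot(\tprior(\theta)/\tprior(\theta'))\cdot(\prior(\omega_{\theta'})/\prior(\omega_\theta))$, the $\tprior$ terms cancel in $\log(\tbelief(\theta)/\tbelief(\theta'))-\log(\tprior(\theta)/\tprior(\theta'))$, leaving $\log(\belief(\omega_\theta)/\belief(\omega_{\theta'}))-\log(\prior(\omega_\theta)/\prior(\omega_{\theta'}))$. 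For adjacent $\theta,\theta'$ we have $|\omega_\theta-\omega_{\theta'}|\le 1$, so this magnitude is at most $\epsilon$ by \eqref{E:DP_posterior}; hence $\tbelief\in\tconstraint(\epsilon,\tprior)$ and $\belief\in\proj\tconstraint(\epsilon,\tprior)$.

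For (ii)$\Rightarrow$(i), take any $\exper$ inducible by an $\epsilon$-differentially private mechanism and write it as $\texper\circ\proj^{-1}$ with $\texper$ Bayes-plausible and $\supp\texper\subset\tconstraint(\epsilon,\tprior)$. Linearity of $\proj$ gives $E_\exper\belief=\proj E_\texper\tbelief=\proj\tprior=\prior$, and each atom of $\exper$ is the $\proj$-image of a posterior in $\tconstraint(\epsilon,\tprior)$, hence lies in $\proj\tconstraint(\epsilon,\tprior)=\constraint(\epsilon,\prior)$; by the characterization above, $\exper$ is inducible by an $\epsilon$-differentially private oblivious mechanism. For (i)$\Rightarrow$(ii) I would argue the contrapositive. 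Given the first step, failure of (ii) means $\proj\tconstraint(\epsilon,\tprior)\not\subseteq\constraint(\epsilon,\prior)$, so pick $\tbelief^*\in\tconstraint(\epsilon,\tprior)$ with $\belief^*\equiv\proj\tbelief^*\notin\constraint(\epsilon,\prior)$ (in particular $\tbelief^*\neq\tprior$, since $\proj\tprior=\prior\in\constraint(\epsilon,\prior)$ by Corollary~\ref{C:constraint}). Because $\tprior\in\ri(\tconstraint(\epsilon,\tprior))$ (Lemma~\ref{L:tconstraint}) and $\tconstraint(\epsilon,\tprior)$ is convex, there is $\delta>0$ with $\tbelief^{**}\equiv\tprior+\delta(\tprior-\tbelief^*)\in\tconstraint(\epsilon,\tprior)$; then $\tprior$ is a strict convex combination of the distinct points $\tbelief^*$ and $\tbelief^{**}$, so the two-point distribution $\texper$ they support is Bayes-plausible with $\supp\texper\subset\tconstraint(\epsilon,\tprior)$, hence (Lemma~\ref{L:DP}) inducible by an $\epsilon$-differentially private mechanism. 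Its pushforward $\texper\circ\proj^{-1}$ places positive mass on $\belief^*\notin\constraint(\epsilon,\prior)$, so Proposition~\ref{P:DPObliv} rules out any $\epsilon$-differentially private oblivious mechanism inducing it, contradicting (i).

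I expect the main obstacle to be exactly this last construction. Condition (i) quantifies over \emph{distributions} of posteriors, so exhibiting a single ``bad'' posterior whose projection escapes $\constraint(\epsilon,\prior)$ is not enough; it must be embedded in a genuinely Bayes-plausible experiment over databases without leaving $\tconstraint(\epsilon,\tprior)$, which is what the relative-interior-plus-convexity argument supplies. A related subtlety — which, happily, does not block either implication — is that the two classes are \emph{not} related as ``pushforward of a Bayes-plausible distribution'' versus ``Bayes-plausible pushforward'': a Bayes-plausible distribution on $\Delta(\Omega)$ supported in $\proj\tconstraint(\epsilon,\tprior)$ need not lift to a Bayes-plausible distribution on $\Delta(\tspace)$ supported in $\tconstraint(\epsilon,\tprior)$, because the barycenter of a lift need only \emph{project} to $\tprior$ rather than equal it. The remaining pieces — the routine verifications in the spreading construction, and the elementary fact that $\proj\tconstraint(\epsilon,\tprior)$ is closed (used to pass from ``every atom lies in $\constraint(\epsilon,\prior)$'' to ``$\supp\exper\subset\constraint(\epsilon,\prior)$'') — are straightforward.
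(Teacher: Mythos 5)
Your proof is correct and follows essentially the same route as the paper's: the contrapositive of (i)$\Rightarrow$(ii) uses the same two-point Bayes-plausible construction built from $\tprior\in\ri(\tconstraint(\epsilon,\tprior))$, and (ii)$\Rightarrow$(i) is the same pushforward argument. The only local deviation is your explicit ``spreading'' construction $\tbelief(\theta)=\belief(\omega_\theta)\tprior(\theta)/\prior(\omega_\theta)$ for the standing inclusion $\constraint(\epsilon,\prior)\subseteq\proj\tconstraint(\epsilon,\tprior)$, where the paper instead cites the fact that oblivious $\epsilon$-differentially private mechanisms are themselves $\epsilon$-differentially private mechanisms over databases (via Proposition~\ref{P:DPObliv} and Lemma~\ref{L:DP}); both arguments are valid and reach the same conclusion.
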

\begin{proof}
((\ref{I:OblivProj:OblivEquiv})$\Rightarrow$(\ref{I:OblivProj:ProjEquiv})) We prove the contrapositive.
Suppose that $\constraint(\epsilon,\prior)\neq\proj\tconstraint(\epsilon,\tprior)$. 
Since oblivious $\epsilon$-differentially private mechanisms are a subset of all $\epsilon$-differentially private mechanisms, by Proposition \ref{P:DPObliv} and Lemma \ref{L:DP}, $\constraint(\epsilon,\prior)\subseteq\proj\tconstraint(\epsilon,\tprior)$. So there must exist $\belief\in \proj\tconstraint(\epsilon,\tprior)$ with $\belief\notin \constraint(\epsilon,\prior)$; hence, there must exist $\tbelief\in \tconstraint(\epsilon,\tprior)$ with $\proj\tbelief\notin \constraint(\epsilon,\prior)$.
By Lemma \ref{L:tconstraint} (\ref{I:constraintintprior}), there exists $\delta>0$ such that $\tbelief'=\tprior-\delta(\tbelief-\tprior)\in\tconstraint(\epsilon,\tprior)$. Then the distribution $\texper$ with \begin{align*}
    \texper(\tbelief)&=\frac{\delta}{1+\delta}, & \texper(\tbelief')&=\frac{1}{1+\delta}
\end{align*}
is Bayes-plausible with $\supp\texper\subset\tconstraint(\epsilon,\tprior)$, and so by Lemma \ref{L:DP} can be induced by an $\epsilon$-differentially private mechanism, which thus induces the distribution of posteriors about the population statistic $\exper=\texper\circ\proj^{-1}\in\exspace$. Then $\proj\tbelief\in\supp\exper$, but $\proj\tbelief\notin\constraint(\epsilon,\prior)$; it follows from Proposition \ref{P:DPObliv} that $\exper$ cannot be induced with an oblivious $\epsilon$-differentially private mechanism.

((\ref{I:OblivProj:ProjEquiv})$\Rightarrow$(\ref{I:OblivProj:OblivEquiv}))
Suppose that $\constraint(\epsilon,\prior)=\proj\tconstraint(\epsilon,\tprior)$, 
and that $\exper\in\exspace$ is the distribution of posterior beliefs about the population statistic induced by the $\epsilon$-differentially private mechanism $(S,m)$.
Then 
$\supp \exper=\proj\supp\texper$, where $\texper$ is the distribution of posterior beliefs about $\theta$ induced by $(S,m)$.
Since $(S,m)$ is $\epsilon$-differentially private, by Lemma \ref{L:DP}, $\supp \texper\subset\tconstraint(\epsilon,\tprior)$. Then $\supp \exper=\proj\supp\texper\subset\proj \tconstraint(\epsilon,\tprior)=\constraint(\epsilon,\prior)$, and by Proposition \ref{P:DPObliv}, $\exper$ can be induced by an oblivious $\epsilon$-differentially private mechanism.
\end{proof}

\begin{proofof}{\bf Theorem \ref{T:OblivMag} (Oblivious Mechanisms and Magnitude Data)}
    Let $d(\theta,\theta')=|\{i|\theta_i\neq\theta_i'\}|$ be the Hamming distance on $\Theta$.
    For each $\hat{\theta}\in\tspace$, let
    \[\tbelief_{\hat{\theta}}(\theta)=\frac{\tprior(\theta)e^{-\psi(\theta)\epsilon}}{\sum_{t\in\tspace}\tprior(t)e^{-\psi(t)\epsilon}},\text{ where }\psi(\theta)=\left\{\begin{array}{rl}
        \min\{\min\{N,T\},d(\theta,\hat{\theta})\},& \omega_\theta\neq\omega_{\hat{\theta}};\\
        \min\{\min\{N,T\}-1,d(\theta,\hat{\theta})\},& \omega_\theta=\omega_{\hat{\theta}}.
    \end{array}\right.\]
    For any $\theta,\theta'$ with $\theta_{-i}=\theta'_{-i}$ for some $i$, we 
    have
    \[d(\theta,{\hat{\theta}})-d(\theta',{\hat{\theta}})=\left\{\begin{array}{rl}
        1, & \theta_i'=\hat{\theta}_i; \\
        -1, & \theta_i=\hat{\theta}_i;\\
        0, &\text{ otherwise.}
    \end{array}\right.
    \]
    and consequently,
    \[\psi(\theta)-\psi(\theta')=\left\{\begin{array}{rl}
        1, & \theta_i'=\hat{\theta}_i\text{ and }d(\theta,\hat{\theta})\leq\min\{N,T\}; \\
        -1, & \theta_i=\hat{\theta}_i\text{ and }d(\theta',\hat{\theta})\leq\min\{N,T\};\\
        1, & d(\theta,\hat{\theta})>\min\{N,T\}\text{ and }\omega_{\theta'}=\omega_{\hat{\theta}}; \\
        -1, & d(\theta',\hat{\theta})>\min\{N,T\}\text{ and }\omega_{\theta}=\omega_{\hat{\theta}};\\
        0, &\text{ otherwise.}
    \end{array}\right.
    \]
    Hence,
    \[ \left|\log\left(\frac{\tbelief_{\hat{\theta}}(\theta)}{\tbelief_{\hat{\theta}}(\theta')}\right)-\log\left(\frac{\tprior(\theta)}{\tprior(\theta')}\right)\right|=\left|\log\left(\frac{e^{-\psi(\theta)\epsilon}}{e^{-\psi(\theta')\epsilon}}\right)\right|=|\psi(\theta)-\psi(\theta')|\epsilon\leq \epsilon,
    \]
    and so $\tbelief_{\hat{\theta}}\in\tconstraint(\epsilon,\tprior)$. 

    Now let $\hat{\theta}$ be a database with $\hat{\theta}_i=1$ for $i\leq\min\{\hitype,N\}$ and $\hat{\theta}_i=0$ for $i>\min\{\hitype,N\}$. Then $\omega_{\hat{\theta}}=\min\{T,N\}$.
    Then (since $\prior(0)=\tprior(\textbf{0})$)
    \begin{align*}
        \left|\log\left(\frac{\proj\tbelief_{\hat{\theta}}(\omega_{\hat{\theta}})}{\proj\tbelief_{\hat{\theta}}(0)}\right)-\log\left(\frac{\prior(\omega_{\hat{\theta}})}{\prior(0)}\right)\right|&=\left|\log\left(\frac{\sum_{\theta:\omega_\theta=\omega_{\hat{\theta}}}\tprior(\theta)e^{-\psi(\theta)\epsilon}}{\prior(0)e^{-\min\{\hitype,N\}\epsilon}}\right)-\log\left(\frac{\prior(\omega_{\hat{\theta}})}{\prior(0)}\right)\right|\\
        &=\left|\log\left(\frac{\sum_{\theta:\omega_\theta=\omega_{\hat{\theta}}}\tprior(\theta)e^{-\psi(\theta)\epsilon}}{\prior(\omega_{\hat{\theta}})e^{-\min\{\hitype,N\}\epsilon}}\right)\right|.
    \end{align*}
    Now note that for any $\theta$ with $\omega_\theta=\omega_{\hat{\theta}}=\min\{\hitype,N\}$, $\psi(\theta)\leq\min\{\hitype,N\}-1$, and since $T>1$, $\psi(\hat{\theta})=0<\min\{\hitype,N\}-1$. Then
    \[\log\left(\frac{\sum_{\theta:\omega_\theta=\omega_{\hat{\theta}}}\tprior(\theta)e^{-\psi(\theta)\epsilon}}{\prior(\omega_{\hat{\theta}})e^{-\min\{\hitype,N\}\epsilon}}\right)>\log\left(\frac{e^\epsilon\sum_{\theta:\omega_\theta=\min\{\hitype,N\}}\tprior(\theta)e^{-\min\{\hitype,N\}\epsilon}}{\prior(\min\{\hitype,N\})e^{-\min\{\hitype,N\}\epsilon}}\right)=\epsilon.\]
    Then $\proj\tbelief$ does not satisfy \eqref{E:DP_posterior} for $\omega=t=\min\{N,T\}\leq T$, and so by definition, $\proj\tbelief\notin\constraint(\epsilon,\prior)$. Then $\proj\tconstraint(\epsilon,\tprior)\neq \constraint(\epsilon,\prior)$; the rest of the claim follows from Lemma \ref{L:OblivProj}.\hfill$\square$\smallskip
\end{proofof}

\begin{proofof}{\bf Theorem \ref{T:OblivEquiv} (Oblivious Mechanisms and Categorical Data)}
Suppose $\tbelief\in\tconstraint(\epsilon,\tprior)$. 
Then for each $\omega\in\Omega\setminus\{0\}$, we have
\begin{align*}
    \proj\tbelief(\omega-1)&=\sum_{\theta:\omega_\theta=\omega-1}\tbelief(\theta)= \sum_{\theta:\omega_\theta=\omega-1}\frac{1}{N-(\omega-1)}\sum_{n:\theta_n=0}\tbelief(\theta)\\
    &=\frac{1}{(N-(\omega-1))}\sum_{n=1}^N\sum_{\theta:\substack{\theta_n=0,\\ \omega_\theta=\omega-1}}\tbelief(\theta)
     =\frac{1}{(N-(\omega-1))}\sum_{n=1}^N\sum_{\theta':\substack{\theta'_n=1,\\ \omega_{\theta'}=\omega}}\tbelief((0,\theta'_{-n}))\\
    &=\frac{1}{(N-(\omega-1))}\sum_{\theta':\omega_{\theta'}=\omega}\sum_{n:\theta'_n=1}\tbelief((0,\theta'_{-n})).
\end{align*}
Since respondents are anonymous, if $\omega_\theta=\omega_{\theta'}$, then $\theta'$ is a permutation of $\theta$, and so, since $\tprior$ is symmetric, $\tprior(\theta)=\tprior(\theta')$. It follows that for each $\theta$, $\tprior(\theta)=\prior(\omega_\theta)/\binom{N}{\omega_\theta}$.

Then since $\tbelief\in\tconstraint(\epsilon,\tprior)$, for each $\theta\in\{0,1\}^{N}$, $n$ with $\theta_n=1$, and $s\in S$,
we have 
\begin{align*}
e^{-\epsilon}\tbelief(\theta)\frac{\tprior((0,\theta_{-n}))}{\tprior(\theta)}&\leq 
\tbelief((0,\theta_{-n}))
\leq e^{\epsilon}\tbelief(\theta)\frac{\tprior((0,\theta_{-n}))}{\tprior(\theta)}\\
e^{-\epsilon}\tbelief(\theta)\frac{\prior(\omega_\theta-1)\binom{N}{\omega_\theta}}{\prior(\omega_\theta)\binom{N}{\omega_\theta-1}}&\leq
\tbelief((0,\theta_{-n}))
\leq e^{\epsilon}\tbelief(\theta)\frac{\prior(\omega_\theta-1)\binom{N}{\omega_\theta}}{\prior(\omega_\theta)\binom{N}{\omega_\theta-1}}\\
e^{-\epsilon}\tbelief(\theta)\frac{\prior(\omega_\theta-1)(N-(\omega_\theta-1))}{\prior(\omega_\theta)\omega_\theta}&\leq
\tbelief((0,\theta_{-n}))
\leq e^{\epsilon}\tbelief(\theta)\frac{\prior(\omega_\theta-1)(N-(\omega_\theta-1))}{\prior(\omega_\theta)\omega_\theta}
\end{align*}
Hence, for each $\omega\in\Omega\setminus\{0\}$, we have
\begin{align*}
    e^{-\epsilon}\sum_{\theta':\omega_{\theta'}=\omega}\frac{1}{\omega}\sum_{n:\theta'_n=1}\tbelief(\theta')\frac{\prior(\omega-1)}{\prior(\omega)}&\leq \proj\tbelief(\omega-1)\leq  e^{\epsilon}\sum_{\theta':\omega_{\theta'}=\omega}\frac{1}{\omega}\sum_{n:\theta'_n=1}\tbelief(\theta')\frac{\prior(\omega-1)}{\prior(\omega)}\\
    e^{-\epsilon}\sum_{\theta':\omega_{\theta'}=\omega}\tbelief(\theta')\frac{\prior(\omega-1)}{\prior(\omega)}&\leq \proj\tbelief(\omega-1)\leq  e^{\epsilon}\sum_{\theta':\omega_{\theta'}=\omega}\tbelief(\theta')\frac{\prior(\omega-1)}{\prior(\omega)}\\
    e^{-\epsilon}\proj\tbelief(\omega)\frac{\prior(\omega-1)}{\prior(\omega)}&\leq \proj\tbelief(\omega-1)\leq  e^{\epsilon}\proj\tbelief(\omega)\frac{\prior(\omega-1)}{\prior(\omega)},
\end{align*}
and so $\proj\tbelief\in\constraint(\epsilon,\prior)$.

Hence, $\proj\tconstraint(\epsilon,\tprior)\subseteq \constraint(\epsilon,\prior)$. And since oblivious $\epsilon$-differentially private mechanisms are a subset of all $\epsilon$-differentially private mechanisms, by Proposition \ref{P:DPObliv} and Lemma \ref{L:DP}, $\constraint(\epsilon,\prior)\subseteq\proj\tconstraint(\epsilon,\tprior)$. So $\proj\tconstraint(\epsilon,\tprior)=\constraint(\epsilon,\prior)$; the statement follows by Lemma \ref{L:OblivProj}.\hfill $\square $\smallskip
\end{proofof}

\begin{proofof}{\bf Corollary \ref{C:DesignProblem} (Differentially Private Data Publication as Information Design)}
Follows immediately from \eqref{E:vfproj}, Proposition \ref{P:DPObliv}, and Theorem \ref{T:OblivEquiv}.\hfill $\square $\smallskip
\end{proofof}

\begin{proofof}{\bf Proposition \ref{P:OblivEquiv2} (Oblivious Mechanisms with Two Respondents)}
Suppose $\tbelief\in\tconstraint(\epsilon,\tprior)$. Then we have

\begin{align*}
\frac{\proj\tbelief(2)}{\prior(2)}&=\frac{\tbelief((1,1))}{\tprior((1,1))}\\
    e^{-\epsilon}\tbelief((0,1))&\leq\tbelief((1,1))\frac{\tprior((0,1))}{\tprior((1,1))}\leq e^{\epsilon}\tbelief((0,1))\\
    e^{-\epsilon}\tbelief((1,0))&\leq\tbelief((1,1))\frac{\tprior((1,0))}{\tprior((1,1))}\leq e^{\epsilon}\tbelief((1,0))\\
    \Rightarrow e^{-\epsilon}(\tbelief((1,0))+\tbelief((0,1)))&\leq \tbelief((1,1))\frac{\tprior((1,0))+\tprior((0,1))}{\tprior((1,1))}\leq e^{\epsilon}(\tbelief((1,0))+\tbelief((0,1)))\\
    \Rightarrow e^{-\epsilon}\proj\tbelief(1)&\leq \proj\tbelief(2)\frac{\tprior(1)}{\tprior(2)}\leq e^{\epsilon}\proj\tbelief(1).
\end{align*}
Symmetrically, $\frac{\proj\tbelief(0)/\proj\tbelief(1)}{\tprior(0)/\tprior(1)}\in[e^{-\epsilon},e^\epsilon]$. It follows that $\proj\tbelief\in\constraint(\epsilon,\prior)$. 

Then $\proj\tconstraint(\epsilon,\tprior)\subseteq \constraint(\epsilon,\prior)$. Since oblivious $\epsilon$-differentially private mechanisms are a subset of all $\epsilon$-differentially private mechanisms, by Proposition \ref{P:DPObliv} and Lemma \ref{L:DP}, $\constraint(\epsilon,\prior)\subseteq\proj\tconstraint(\epsilon,\tprior)$. Thus, $\proj\tconstraint(\epsilon,\tprior)=\constraint(\epsilon,\prior)$; the statement follows by Lemma \ref{L:OblivProj}.
\hfill $\square $\smallskip
\end{proofof}

\begin{proofof}{\bf Corollary \ref{C:Solution} (Characterization of Optimal Oblivious Mechanisms)}
Follows identically to Proposition \ref{P:SolutionT}, relying on Corollary \ref{C:DesignProblem}, Proposition \ref{P:DPObliv}, and Corollaries \ref{C:constraint}, \ref{C:extremal}, and \ref{C:independence} instead of Proposition \ref{P:DesignProblemT}, Lemma \ref{L:DP}, and Lemmas \ref{L:tconstraint}, \ref{L:extremalT}, and \ref{L:independenceT}, respectively.
\end{proofof}

\subsection*{Geometric Mechanisms and the UPRR Order}

\begin{proofof}{\bf Lemma \ref{L:Unimodal} (Posteriors Produced by the Geometric Mechanism)}
Suppose that $\belief$ is the decision maker's posterior belief after observing the realization $s$ of the $\epsilon$-geometric data publication mechanism. 
By Bayes' rule, for each $\omega\in\Omega$, \[\belief(\omega)=\frac{\gmech(s|\omega)\prior(\omega)}{\sum_{\omega'\in\Omega}\gmech(s|\omega')\prior(\omega')}.\]
Then for each $\omega\in\Omega\setminus\{0\}$, we have
\begin{align*}\log\left(\frac{\belief(\omega)/\belief(\omega-1)}{\prior(\omega)/ \prior(\omega-1)}\right)
&=\log\left(\frac{\gmech(s|\omega)}{\gmech(s|\omega-1)}\right)=-\epsilon|s-\omega|+\epsilon|s-(\omega-1)|=\left\{\begin{array}{rl}
\epsilon, &s\geq\omega;\\
-\epsilon, &s<\omega.
\end{array}\right.\end{align*}
The claim follows.
\hfill $\square $\smallskip
\end{proofof}

\begin{proofof}{\bf Lemma \ref{L:Geometric_Dominates} (Geometric Mechanisms are UPRR-Dominant)} 
Suppose that $\belief\in\supp\gpost$. 
Then by Lemma \ref{L:Unimodal}, 
for each $\omega\in(0,\omega^*(\belief)]$, $\log\left(\frac{\belief(\omega)/\prior(\omega)}{\belief(\omega-1)/\prior(\omega-1)}\right)=\epsilon$, or equivalently, $\belief(\omega)/\belief(\omega-1)=e^{\epsilon}\prior(\omega)/\prior(\omega-1)$.
Hence, for each $\omega'\leq\omega''\leq\omega^*(\belief)$, \[\belief(\omega'')/\belief(\omega')=\prod_{i=\omega'+1}^{\omega''}\belief(\omega)/\belief(\omega-1)=e^{\epsilon(\omega''-\omega')}\prior(\omega'')/\prior(\omega').\] 

Likewise, for each $\omega>\omega^*(\belief)$, we have $\log\left(\frac{\belief(\omega)/\prior(\omega)}{\belief(\omega-1)/\prior(\omega-1)}\right)=-\epsilon$, or equivalently, $\belief(\omega)/\belief(\omega-1)=e^{-\epsilon}\prior(\omega)/\prior(\omega-1)$. Hence, for each $\omega''\geq\omega'\geq\omega^*(\belief)$, \[\belief(\omega'')/\belief(\omega')=\prod_{i=\omega'+1}^{\omega''}\belief(\omega)/\belief(\omega-1)=e^{-\epsilon(\omega''-\omega')}\prior(\omega'')/\prior(\omega').\]

Fix $\belief'\in\supp\exper$.
Since $\exper$ is induced by an oblivious $\epsilon$-differentially private mechanism, by Proposition \ref{P:DPObliv},
$\belief'\in\constraint(\epsilon,\prior)$. Then  for each $\omega\in\Omega\setminus\{0\}$, $\left|\log\left(\frac{\belief'(\omega)/\prior(\omega)}{\belief'(\omega-1)/\prior(\omega-1)}\right)\right|\leq\epsilon$, or equivalently, $e^{-\epsilon}\prior(\omega)/\prior(\omega-1)\leq\belief'(\omega)/\belief'(\omega-1)\leq e^{\epsilon}\prior(\omega)/\prior(\omega-1)$. Hence, for each $\omega'\leq\omega''$,
\[\belief'(\omega'')/\belief'(\omega')=\prod_{i=\omega'+1}^{\omega''}\belief'(\omega)/\belief'(\omega-1)\in \left[e^{-\epsilon(\omega''-\omega')}\frac{\prior(\omega'')}{\prior(\omega')},e^{\epsilon(\omega''-\omega')}\frac{\prior(\omega'')}{\prior(\omega')}\right].\]

Then for each $\omega'\leq\omega''\leq\omega^*(\belief)$, $\belief(\omega'')/\belief(\omega')=e^{\epsilon(\omega''-\omega')}\prior(\omega'')/\prior(\omega')\geq \belief'(\omega'')/\belief'(\omega')$; hence, $\belief(\omega'')/\belief'(\omega'')\geq \belief(\omega')/\belief'(\omega')$. Likewise, for each $\omega''\geq\omega'\geq\omega^*(\belief)$, $\belief(\omega'')/\belief(\omega')=e^{-\epsilon(\omega''-\omega')}\prior(\omega'')/\prior(\omega')\leq \belief'(\omega'')/\belief'(\omega')$; hence, $\belief(\omega'')/\belief'(\omega'')\leq \belief(\omega')/\belief'(\omega')$.\hfill $\square $\smallskip
\end{proofof}

\begin{proofof}{\bf Lemma \ref{L:FrechetEquiv} (Equivalence of Frech\'{e}t Representations)}
(\ref{I:FrechetClass}): 
The marginal cdf for $X$, $F_{X}$, is given by
\begin{align*}
F_{X}(x)=F(\hist,x)&=\int_0^x\sum_{y=\lost}^{\hist}\sum_{\mu\in t^{-1}(\quant_{\exper,t}(z))} \frac{\exper(\mu)\mu(y)}{\pmf_{\exper,t}(\quant_{\exper,t}(z))}dz=\int_0^x\sum_{\mu\in t^{-1}(\quant_{\exper,t}(z))}\frac{\exper(\mu)}{\pmf_{\exper,t}(\quant_{\exper,t}(z))}\sum_{y=\lost}^{\hist}\mu(y)dz\\
&=\int_0^x\sum_{\mu\in t^{-1}(\quant_{\exper,t}(z))}\frac{\exper(\mu)}{\exper(t^{-1}(\quant_{\exper,t}(z)))}dz\text{ (by definition of }\pmf_{\exper,t})\\
&=\int_0^x1dz=x,
\end{align*}

and so $X\sim U([0,1])$.
The marginal cdf for $W$, $F_W$, is given by 
\begin{align*}
F_W(w)=F(w,1)&=\int_0^1\sum_{y=\lost}^w \sum_{\mu\in t^{-1}(\quant_{\exper,t}(z))} \frac{\exper(\mu)\mu(y)}{\pmf_{\exper,t}(\quant_{\exper,t}(z))}dz=\sum_{y=\lost}^w\int_0^1\sum_{\mu\in t^{-1}(\quant_{\exper,t}(z))} \frac{\exper(\mu)\mu(y)}{\pmf_{\exper,t}(\quant_{\exper,t}(z))}dz\\
&=\sum_{y=\lost}^w E\left[\sum_{\mu\in t^{-1}(\quant_{\exper,t}(X))} \frac{\exper(\mu)\mu(y)}{\pmf_{\exper,t}(\quant_{\exper,t}(X))}\right]\text{ (since }X\sim U([0,1])).
\end{align*}
Since $X\sim U([0,1])$ and $\quant_{\exper,t}$ is an inverse cdf of $T$, $\quant_{\exper,t}(X)$ 
is equal in distribution to $T$.
Then 
\begin{align*}
F_W(w)
&=\sum_{y=\lost}^w\sum_{z\in t(\supp\exper)} \sum_{\mu\in t^{-1}(z)} \frac{\exper(\mu)\mu(y)}{\pmf_{\exper,t}(z)}\pmf_{\exper,t}(z)=\sum_{y=\lost}^w\sum_{z\in t(\supp\exper)} \sum_{\mu\in t^{-1}(z)}\exper(\mu)\mu(y)\\
&=\sum_{y=\lost}^w\sum_{\mu\in \supp\exper}\exper(\mu)\mu(y)=\sum_{y=\lost}^w\prior(y)\text{ (since $\exper$ is Bayes-plausible)}
\end{align*}
and so $W\sim\prior$.

(\ref{I:FrechetEquiv}): 
By the law of iterated expectations, we have

\begin{align*}
E[ h(W,\quant_{\exper,t}(X))]&=E\left[E[h(W,\quant_{\exper,t}(X))|X]\right]=E\left[\sum_{w\in\Omega}\sum_{\mu\in t^{-1}(\quant_{\exper,t}(X))} \frac{\exper(\mu)\mu(w)}{\pmf_{\exper,t}(\quant_{\exper,t}(X))}h(w,\quant_{\exper,t}(X))\right]
\end{align*}
By (\ref{I:FrechetClass}), $X\sim U([0,1])$. Then since $\quant_{\exper,t}$ is an inverse cdf, $\quant_{\exper,t}(X)\sim \cdf_{\exper,t}$, and we have
\begin{align*}
E[ h(&W,\quant_{\exper,t}(X))]=
\sum_{z\in t(\supp\exper)} \pmf_{\exper,t}(z)\sum_{w\in\Omega}\sum_{\mu\in t^{-1}(z)} \frac{\exper(\mu)\mu(w)}{\pmf_{\exper,t}(z)}h(w,z)\\
&= \sum_{w\in\Omega}\sum_{z\in t(\supp\exper)}\sum_{\belief\in t^{-1}(z)}\exper(\mu)\belief(w)h(w,z)=\sum_{w\in\Omega}\sum_{z\in t(\supp\exper)}\sum_{\belief\in t^{-1}(z)}\exper(\mu)\mu(w)h(w,t(\belief))\\
 &=\sum_{w\in\Omega}\sum_{\belief\in \supp\exper}\exper(\mu)\mu(w)h(w,t(\belief))=E_\exper[E_\belief h(\omega,t(\belief))]],
\end{align*}
as desired.

(\ref{I:FrechetMax}):
Since $A$ is compact and $h$ is upper semicontinuous, 
$\max_{a\in A}(\sum_{w\in\Omega}\mu(w)h(w,a))=\max_{a\in A} E_\belief [h(\omega,a)]$ exists for each $\belief\in\Delta(\Omega)$.
By the law of iterated expectations, we have

\begin{align*}
E[ h(W,\quant_{\exper',s}(X))]&=E\left[E[h(W,\quant_{\exper',s}(X))|X]\right]
=E\left[\sum_{w\in\Omega}\sum_{\mu\in t^{-1}(\quant_{\exper,t}(X))} \frac{\exper(\mu)\mu(w)}{\pmf_{\exper,t}(\quant_{\exper,t}(X))}h(w,\quant_{\exper',s}(X))\right]\\
&=E\left[\sum_{\mu\in t^{-1}(\quant_{\exper,t}(X))} \frac{\exper(\mu)}{\pmf_{\exper,t}(\quant_{\exper,t}(X))}\sum_{w\in\Omega}\mu(w)h(w,\quant_{\exper',s}(X))\right]\\
&\leq E\left[\sum_{\mu\in t^{-1}(\quant_{\exper,t}(X))} \frac{\exper(\mu)}{\pmf_{\exper,t}(\quant_{\exper,t}(X))}\max_{a\in A}\left(\sum_{w\in\Omega}\mu(w)h(w,a)\right)\right]\\
&=E\left[\sum_{\mu\in t^{-1}(\quant_{\exper,t}(X))} \frac{\exper(\mu)}{\pmf_{\exper,t}(\quant_{\exper,t}(X))}\max_{a\in A} E_\belief [h(\omega,a)]\right]
\end{align*}
By (\ref{I:FrechetClass}), $X\sim U([0,1])$. Then since $\quant_{\exper,t}$ is an inverse cdf, $\quant_{\exper,t}(X)\sim \cdf_{\exper,t}$, and we have
\begin{align*}
E[ h(W,\quant_{\exper',s}(X))]&\leq
\sum_{z\in t(\supp\exper)} \pmf_{\exper,t}(z)\sum_{\mu\in t^{-1}(z)} \frac{\exper(\mu)}{\pmf_{\exper,t}(z)}\max_{a\in A} E_\belief [h(\omega,a)]\\
&= \sum_{z\in t(\supp\exper)}\sum_{\belief\in t^{-1}(z)}\exper(\mu)\max_{a\in A} E_\belief [h(\omega,a)]\\
 &=\sum_{\belief\in \supp\exper}\exper(\mu)\max_{a\in A} E_\belief [h(\omega,a)]=E_\exper[\max_{a\in A} E_\belief [h(\omega,a)]],
\end{align*}
as desired.
\hfill $\square $\smallskip

\end{proofof}

\begin{proofof}{\bf Proposition \ref{P:Dominance} (UPRR-Dominance Implies Supermodular Stochastic Dominance)}
Let $F$ and $G$ denote the cumulative distribution functions of $(W,X)$ and $(\hat{W},\hat{X})$, respectively. Further, let
\begin{align*}
f(w,x)&= \sum_{\mu\in \omega^{*-1}(\quant_{\exper,\omega^*}(x))} \frac{\exper(\mu)\mu(w)}{q_{\exper,\omega^*}(\quant_{\exper,\omega^*}(x))};\\
g(w,x)&= \sum_{\mu\in t^{-1}(\quant_{\exper',t}(x))} \frac{\exper'(\mu)\mu(w)}{\pmf_{\exper',t}(\quant_{\exper',t}(x))}.
\end{align*}

Since $\exper\succeq_{UPRR}\exper'$, for any $x\in[0,1]$, $\omega'\leq\omega''\leq \quant_{\exper,\omega^*}(x)\leq\tilde{\omega}\leq\tilde{\tilde{\omega}}$,  $\belief\in \omega^{* -1}(\quant_{\exper,\omega^*}(x))$, and $\belief'\in\supp\exper'$, we have 
$\frac{\belief(\omega'')}{\belief(\omega')}\geq \frac{\belief'(\omega'')}{\belief'(\omega')}$ and $ \frac{\belief(\tilde{\tilde{\omega}})}{\belief(\tilde{\omega})}\leq \frac{\belief'(\tilde{\tilde{\omega}})}{\belief'(\tilde{\omega})},$
or equivalently,
\begin{align*}
\belief(\omega'')\belief'(\omega')&\geq \belief'(\omega'')\belief(\omega')& \text{and} & &
\belief(\tilde{\tilde{\omega}})\belief'(\tilde{\omega})&\leq \belief'(\tilde{\tilde{\omega}})\belief(\tilde{\omega}).
\end{align*}
It follows that for any $x\in[0,1]$, $\omega'\leq\omega''\leq \quant_{\exper,\omega^*}(x)\leq\tilde{\omega}\leq\tilde{\tilde{\omega}}$, and $\belief'\in\supp\exper'$,
\begin{align*}
f(\omega'',x)\belief'(\omega')-\belief'(\omega'')f(\omega',x)& =\sum_{\mu\in \omega^{*-1}(\quant_{\exper,\omega^*}(x))} \frac{\exper(\mu)(\mu(\omega'')\belief'(\omega')-\belief(\omega')\belief'(\omega''))}{q_{\exper,\omega^*}(\quant_{\exper,\omega^*}(x))}\geq 0,\\
f(\tilde{\tilde{\omega}},x)\belief'(\tilde{\omega})-\belief'(\tilde{\tilde{\omega}})f(\tilde{\omega},x)& =\sum_{\mu\in \omega^{*-1}(\quant_{\exper,\omega^*}(x))} \frac{\exper(\mu)(\mu(\tilde{\tilde{\omega}})\belief'(\tilde{\omega})-\belief(\tilde{\omega})\belief'(\tilde{\tilde{\omega}}))}{q_{\exper,\omega^*}(\quant_{\exper,\omega^*}(x))}\leq 0.
\end{align*}

Since $\quant_{\exper,\omega^*}$ is an inverse cdf, it is nondecreasing; then for each $x\in[0,1]$, $y\in[x,1]$, $z\in[0,x]$, and $\omega'\leq\omega''\leq \quant_{\exper,\omega^*}(x)\leq\tilde{\omega}\leq\tilde{\tilde{\omega}}$, we have $\omega'\leq\omega''\leq \quant_{\exper,\omega^*}(y)$ and $\quant_{\exper,\omega^*}(z)\leq\tilde{\omega}\leq\tilde{\tilde{\omega}}$, and so for any $\belief'\in\supp\exper'$,
\begin{align*}
f(\omega'',y)\belief'(\omega')&\geq \belief'(\omega'')f(\omega',y)& \text{and} & &
f(\tilde{\tilde{\omega}},z)\belief'(\tilde{\omega})&\leq \belief'(\tilde{\tilde{\omega}})f(\tilde{\omega},z),
\end{align*}
and thus, for each $y',z'\in[0,1]$,
\begin{align}
f(\omega'',y)g(\omega',y')- g(\omega'',y')f(\omega',y)& =\sum_{\mu'\in t^{-1}(\quant_{\exper',t}(y'))} \frac{\exper'(\mu)(f(\omega'',y)\belief'(\omega')-\belief'(\omega'')f(\omega',y))}{\pmf_{\exper',t}(\quant_{\exper',t}(y'))}\geq 0,\label{E:Dominance_RightLR}\\
f(\tilde{\tilde{\omega}},z)g(\tilde{\omega},z')- g(\tilde{\tilde{\omega}},z')f(\tilde{\omega},z)& =\sum_{\mu'\in t^{-1}(\quant_{\exper',t}(z'))} \frac{\exper'(\mu)(f(\tilde{\tilde{\omega}},z)\belief'(\tilde{\omega})-\belief'(\tilde{\tilde{\omega}})f(\tilde{\omega},z))}{\pmf_{\exper',t}(\quant_{\exper',t}(z'))}\leq 0.\label{E:Dominance_LeftLR}
\end{align}
Integrating  (\ref{E:Dominance_RightLR}) on $[x,1]\times[x,1]$ and  (\ref{E:Dominance_LeftLR}) on $[0,x]\times[0,x]$ yields
\begin{align}
\int_x^1f(\omega'',y)dy\int_x^1g(\omega',y')dy'&\geq \int_x^1g(\omega'',y')dy'\int_x^1f(\omega',y)dy\nonumber\\
\Leftrightarrow\frac{\int_x^1f(\omega'',y)dy}{\int_x^1g(\omega'',y)dy}&\geq \frac{\int_x^1f(\omega',y)dy}{\int_x^1g(\omega',y)dy};\label{E:Dominance_AggRightLR}\\
\int_0^xf(\tilde{\tilde{\omega}},z)dz\int_0^xg(\tilde{\omega},z')dz'&\leq \int_0^xg(\tilde{\tilde{\omega}},z')dz'\int_0^xf(\tilde{\omega},z)dz\nonumber\\
\Leftrightarrow\frac{\int_0^xf(\tilde{\tilde{\omega}},z)dz}{\int_0^xg(\tilde{\tilde{\omega}},z)dz}&\leq \frac{\int_0^xf(\tilde{\omega},z)dz}{\int_0^xg(\tilde{\omega},z)dz}.\label{E:Dominance_AggLeftLR}
\end{align}
It follows from (\ref{E:Dominance_AggLeftLR}) that for each $x\in[0,1]$,
$\int_0^xg(\omega,z)dz$ single-crosses $\int_0^xf(\omega,z)dz$
on $[\quant_{\exper,\omega^*}(x),\hist]$: for all $\tilde{\omega}\in [\quant_{\exper,\omega^*}(x),N]$ and $\tilde{\tilde{\omega}}\in[\tilde{\omega},N]$,
\begin{align}
\int_0^xg(\tilde{\omega},z)dz\geq \int_0^xf(\tilde{\omega},z)dz\Rightarrow \int_0^xg(\tilde{\tilde{\omega}},z)dz\geq \int_0^xf(\tilde{\tilde{\omega}},z)dz.\label{E:Dominance_LeftSC}
\end{align}
Likewise, it follows from (\ref{E:Dominance_AggRightLR}) that for each $x\in[0,1]$,
$\int_x^1f(\omega,z)dz$ single-crosses $\int_x^1g(\omega,z)dz$ 
on $[0,\quant_{\exper,\omega^*}(x)]$:
for all $\omega'\in [0,\quant_{\exper,\omega^*}(x)]$ and $\omega''\in[\omega',\quant_{\exper,\omega^*}(x)]$,
\begin{align}\int_x^1f(\omega',z)dz\geq\int_x^1g(\omega',z)dz\Rightarrow \int_x^1f(\omega'',z)dz\geq\int_x^1g(\omega'',z)dz.\label{E:Dominance_RightSC}
\end{align}
By Lemma \ref{L:FrechetEquiv} (\ref{I:FrechetClass}), $F$ and $G$ are both elements of $\mathcal{M}(\prior,U([0,1]))$, and so $\int_0^1g(\omega,z)dz=\int_0^1f(\omega,z)dz=\prior(\omega)$ for each $\omega\in\Omega$. Hence, for each $\omega\in\Omega$,
\begin{align*}
\int_x^1f(\omega,z)dz\geq\int_x^1g(\omega,z)dz&\Leftrightarrow \prior(\omega)-\int_0^xf(\omega,z)dz\geq\prior(\omega)-\int_0^xg(\omega,z)dz\\
&\Leftrightarrow \int_0^x g(\omega,z)dz\geq\int_0^x f(\omega,z)dz.
\end{align*}
Then (\ref{E:Dominance_RightSC}) implies that for each $x\in[0,1]$,
$\int_0^xg(\omega,z)dz$ single-crosses $\int_0^xf(\omega,z)dz$ 
on $[0,\quant_{\exper,\omega^*}(x)]$: for all $\omega'\in [0,\quant_{\exper,\omega^*}(x)]$ and $\omega''\in[\omega',\quant_{\exper,\omega^*}(x)]$,
\begin{align}\int_0^x g(\omega',z)dz\geq\int_0^x f(\omega',z)dz\Rightarrow \int_0^x g(\omega'',z)dz\geq\int_0^x f(\omega'',z)dz.\label{E:Dominance_LeftSCLow}
\end{align}
Moreover, for each $x\in[0,1]$, $\tilde{\omega}\in[0,\quant_{\exper,\omega^*}(x)]$, and $\tilde{\tilde{\omega}}\in [\quant_{\exper,\omega^*}(x),N]$, (\ref{E:Dominance_LeftSC}) and (\ref{E:Dominance_LeftSCLow}) imply that
\begin{align}\int_0^x g(\tilde{\omega},z)dz\geq\int_0^x f(\tilde{\omega},z)dz&\Rightarrow \int_0^x g(\quant_{\exper,\omega^*}(x),z)dz\geq\int_0^x f(\quant_{\exper,\omega^*}(x),z)dz\nonumber\\
&\Rightarrow \int_0^x g(\tilde{\tilde{\omega}},z)dz\geq\int_0^x f(\tilde{\tilde{\omega}},z)dz.\label{E:Dominance_LeftSCMixed}
\end{align}
Then from (\ref{E:Dominance_LeftSC}), (\ref{E:Dominance_LeftSCLow}), and (\ref{E:Dominance_LeftSCMixed}), for each $x\in[0,1]$,
$\int_0^xg(\omega,z)dz$ single-crosses $\int_0^xf(\omega,z)dz$ on all of $\Omega$:
(\ref{E:Dominance_LeftSC}) holds for all $\tilde{\omega},\tilde{\tilde{\omega}}\in\Omega$ with $\tilde{\omega}\leq\tilde{\tilde{\omega}}$. Hence, there exists $\omega_0(x)\in\mathbb{R}$ such that for each $\omega\in\Omega$ with $\omega<\omega_0(x)$, $\int_0^xf(\omega,z)dz-\int_0^xg(\omega,z)dz\geq 0$, and for each $\omega\in\Omega$ with $\omega\geq\omega_0(x)$, $\int_0^xf(\omega,z)dz-\int_0^xg(\omega,z)dz\leq 0$.

Then for all $w\in\Omega,x\in[0,1]$ with $w< \omega_0(x)$, we have
\[F(w,x)-G(w,x)=\sum_{y=\lost}^w\left(\int_0^xf(y,z)dz-\int_0^xg(y,z)dz\right)\geq 0.\]
Moreover, since $X\sim U([0,1])$ and $\hat{X}\sim U([0,1])$ by Lemma \ref{L:FrechetEquiv} (\ref{I:FrechetClass}), 
$F(\hist,x)-G(\hist,x)=F_X(x)-G_{\hat{X}}(x)=0.$
Then for all $w\in\Omega,x\in[0,1]$ with $\omega_0(x)\leq w<\hist$, we have
\begin{align*}
F(w,x)-G(w,x)
=&F(\hist,x)-G(\hist,x)-\sum_{y=w+1}^{\hist}\left(\int_0^xf(y,z)dz-\int_0^xg(y,z)dz\right)\\
=&-\sum_{y=w+1}^{\hist}\left(\int_0^xf(y,z)dz-\int_0^xg(y,z)dz\right)\geq 0.
\end{align*}
Hence, $F(w,x)\geq G(w,x)$ for all $w\in\Omega$ and $x\in[0,1]$. 
Equivalently (see \cite{shaked2007stochastic} (9.A.18); 
\cite{epstein1980increasing} Theorem 6)
for every
supermodular function $h:\Omega\times [0,1]\to\mathbb{R}$,
\[E_F[h(w,x)]\geq E_G[h(w,x)].\]

\end{proofof}

\begin{proofof}{\bf Theorem \ref{T:ProbDominance} (UPRR-Dominance Implies Dominance in Supermodular Problems)}
Since $h$ is upper semicontinuous, so is $\sum_{w\in\Omega}\mu(w)h(w,a)$ for any $\belief\in\Delta(\Omega)$. Then since $A$ is compact,
$\arg\max_{a\in A} E_\belief [h(\omega,a)]=\arg\max_{a\in A}\left\{\sum_{w\in\Omega}\mu(w)h(w,a)\right\}$ is nonempty.

For each $\belief\in\supp\exper'$, choose $a^*(\belief)\in\arg\max_{a\in A} E_\belief [h(\omega,a)]$, and let $(\hat{W},\hat{X})$ be the Frech\'{e}t representation of $\exper'$ with respect to $a^*$. 
Then by Lemma \ref{L:FrechetEquiv} (\ref{I:FrechetEquiv}),
\begin{align}
E[h(\hat{W},\quant_{\exper',a^*}(\hat{X}))]=E_{\exper'}\left[E_\belief [h(\omega,a^*(\belief))]\right]=E_{\exper'}\left[\max_{a\in A}E_\belief [h(\omega,a)]\right].\label{E:dominance_equiv}
\end{align}

Since it is an inverse cdf, $\quant_{\exper',a^*}$ is nondecreasing, and so for any $x',x''\in[0,1]$, $\quant_{\exper',a^*}(x'\wedge x'')=\quant_{\exper',a^*}(x')\wedge \quant_{\exper',a^*}(x'')$ and $\quant_{\exper',a^*}(x'\vee x'')=\quant_{\exper',a^*}(x')\vee \quant_{\exper',a^*}(x'')$. 
It follows that $h(w,\quant_{\exper',a^*}(x))$ is supermodular: For any $x',x''\in[0,1]$ and $w',w''\in\Omega$, supermodularity of $h$ implies
\begin{align*}
h(w',\quant_{\exper',a^*}(x'))+&h(w'',\quant_{\exper',a^*}(x''))\\
&\geq h(w''\vee w',\quant_{\exper',a^*}(x')\vee \quant_{\exper',a^*}(x''))+h(w''\wedge w',\quant_{\exper',a^*}(x')\wedge \quant_{\exper',a^*}(x''))\\
&=h(w''\vee w',\quant_{\exper',a^*}(x''\vee x'))+h(w''\wedge w',\quant_{\exper',a^*}(x''\wedge x')).%
\end{align*}

Let $\omega^*:\supp\exper\to\mathbb{R}$ be the function mapping each posterior in $\supp\exper$ to its peak used in the UPRR ordering $\exper\succeq_{UPRR}\exper'$, and let
 $(W,X)$ be the Frech\'{e}t representation of $\exper$ with respect to $\omega^*$.
Then by Proposition \ref{P:Dominance} and (\ref{E:dominance_equiv}),
\[E[h(W,\quant_{\exper',a^*}(X))]\geq E[h(\hat{W},\quant_{\exper',a^*}(\hat{X}))]=
E_{\exper'}\left[\max_{a\in A}E_\belief [h(\omega,a)]\right].\]

The statement then follows directly from Lemma \ref{L:FrechetEquiv} (\ref{I:FrechetMax}).\hfill $\square $\smallskip
\end{proofof}

\begin{proofof}{\bf Theorem \ref{T:Geometric} (Optimality of the Geometric Mechanism for Supermodular Problems)}
By Corollary \ref{C:Solution} (\ref{I:Solution_Independence}), the designer's problem (\ref{E:OriginalProblem}) is solved by an $\epsilon$-differentially private oblivious mechanism which induces a distribution of posteriors about the population statistic $\exper^*$ whose support is linearly independent, and hence finite. By Lemma \ref{L:Geometric_Dominates}, $\gpost\succeq_{UPRR}\exper^*$. The claim then follows from Theorem \ref{T:ProbDominance}.
\hfill $\square $\smallskip
\end{proofof}

\end{document}